\newtheorem{observation}[theorem]{Observation}
\newcommand{\mcfulltree}{\textsc{MC[Hyper$^2$LTL, Tree]} }
\newcommand{\mcfulldag}{\textsc{MC[Hyper$^2$LTL, Acyclic]} }
\newcommand{\fpfp}{Fixpoint Hyper$^2$LTL$_{fp}$ }
\newcommand{\mcfpfpdag}{\textsc{MC[Fixpoint, Acyclic]} }
\newcommand{\mcfpfptree}{\textsc{MC[Fixpoint, Tree]} }
\begin{document}

\title{Complexity of Model Checking Second-Order Hyperproperties on Finite Structures\thanks{This work was supported in part by the Israel Science Foundation (ISF grant No. 655/25) and by the European Union (ERC Grant
HYPER, No. 101055412).}}

\titlerunning{Model Checking Second-Order Hyperproperties on Finite Structures} 

\author{Bernd Finkbeiner\inst{1}\orcidlink{0000-0002-4280-8441} \and Hadar Frenkel\inst{2}\orcidlink{0000-0002-3566-0338} \and Tim	Rohde\inst{3}\orcidlink{0009-0000-2419-1604}}
\authorrunning{Finkbeiner, Frenkel, and Rohde}
%
\institute{CISPA Helmholtz Center for Information Security, Saarbrücken, Germany \and Bar Ilan University, Ramat Gan, Israel \and Saarland University, Saarbrücken, Germany \\\email{finkbeiner@cispa.de, hadar.frenkel@biu.ac.il, tiro00001@stud.uni-saarland.de}}
\maketitle              

\begin{abstract}
	
	We study the model checking problem of Hyper$^2$LTL over finite structures. Hyper$^2$LTL is a second-order hyperlogic, that extends the well-studied logic HyperLTL by adding quantification over sets of traces, to express complex hyperproperties such as epistemic and asynchronous hyperproperties. While Hyper$^2$LTL is very expressive, its expressiveness comes with a price, and its general model checking problem is undecidable. This motivates us to study the model checking problem for Hyper$^2$LTL over finite structures -- tree-shaped or acyclic graphs, which are particularly useful for monitoring purposes. 	
We show that Hyper$^2$LTL model checking is decidable on finite structures:
It is in PSPACE (in the size of the model) on tree-shaped models and in EXPSPACE on acyclic models.
Additionally, we show that for  
 an expressive fragment of Hyper$^2$LTL, namely the \fpfp fragment, the model checking problem is much simpler and is P-complete on tree-shaped models and EXP-complete on acyclic models. Last, we present some preliminary results that take into account not only the size of the model, but also the formula size. 
\end{abstract}

	\section{Introduction} 

Hyperproperties~\cite{DBLP:journals/jcs/ClarksonS10} generalize trace properties to reason about sets of sets of traces. 
Hyperproperties capture a wide range of properties, from information-flow and security policies~\cite{DBLP:conf/sp/GoguenM82a}, to complex epistemic properties such as knowledge and common knowledge~\cite{DBLP:books/mit/FHMV1995}. 
HyperLTL~\cite{ClarksonFKMRS14} is the prominent logic to capture temporal hyperproperties, and it extends LTL (Linear Temporal Logic~\cite{DBLP:conf/focs/Pnueli77}) by adding first-order quantification over trace variables. 
This way, HyperLTL can express many important information-flow and security-related properties, as well as knowledge in multi-agent systems. However, first-order quantification over traces is not enough to express all properties of interest. Two main examples that HyperLTL cannot express are common knowledge~\cite{DBLP:conf/fossacs/BozzelliMP15} and asynchronous hyperproperties~\cite{DBLP:journals/pacmpl/GutsfeldMO21,DBLP:conf/lics/BozzelliPS21,DBLP:conf/cav/BaumeisterCBFS21}.
Many efforts have been made in recent years to capture such complex hyperproperties, demonstrated by the range of logics suggested to express knowledge and common knowledge~\cite{DBLP:conf/fossacs/BozzelliMP15,DBLP:journals/jacm/HalpernM90,DBLP:books/mit/FHMV1995,Meyden98}, and even more so, the wide range of logics for asynchronous hyperproperties~\cite{BartocciHNC23,DBLP:conf/cav/BaumeisterCBFS21,BeutnerF23,DBLP:journals/corr/abs-2404-16778,DBLP:conf/lics/BozzelliPS21,DBLP:conf/concur/BozzelliPS22,DBLP:journals/pacmpl/GutsfeldMO21,DBLP:conf/mfcs/KontinenSV23,KontinenSV24,DBLP:conf/mfcs/KrebsMV018}. 

Recently, Beutner et al.~\cite{DBLP:conf/cav/BeutnerFFM23} 
observed that to express such complex hyperproperties, a second-order quantification, that is, quantification over \emph{sets of traces}, is needed. To this end,~\cite{DBLP:conf/cav/BeutnerFFM23} suggest the logic Hyper$^2$LTL, which extends HyperLTL with second-order quantification, and is able to express a wide range of hyperproperties such as common knowledge, asynchronous hyperproperties, and Mazurkiewicz trace theory~\cite{DR1995}. Yet, the expressiveness of Hyper$^2$LTL comes with a price, and its model checking problem is in general undecidable~\cite{DBLP:conf/cav/BeutnerFFM23}. To handle Hyper$^2$LTL model checking algorithmically,~\cite{DBLP:conf/cav/BeutnerFFM23} identify a fragment of Hyper$^2$LTL, namely Fixpoint Hyper$^2$LTL$_{fp}$, for which the model checking problem is still undecidable, but due to its definition 
using a fixpoint construction, it allows to compute sound over- and under-approximations of the second order sets. Using these approximations,~\cite{DBLP:conf/cav/BeutnerFFM23} provides a sound but necessarily incomplete model checking algorithm for an expressive fragment~of~Hyper$^2$LTL. 

In this work, motivated by the inherent undecidability of the logic, but also by practical applications such as monitoring~\cite{DBLP:conf/atal/BeutnerFF024}, we focus on the complexity of model checking second-order hyperproperties over finite structures. In particular, we analyze the complexity for two types of data structures: tree-shaped and acyclic models, both for the full Hyper$^2$LTL and for the \fpfp fragment. This study completes the complexity analysis picture for Hyper$^2$LTL: The complexity of HyperLTL has been studied for general structures~\cite{DBLP:journals/lmcs/FortinKTZ25} and for finite structures~\cite{MonitorHLTL}; for
second-order hyperproperties, 
Frenkel and Zimmermann~\cite{csl2025} provide a detailed complexity analysis for Hyper$^2$LTL and its fragments over general structures, and show that the problem is highly undecidable. 
However, no analysis is done for finite structures. 
Here, we address this gap, and show that when restricting the system to finite structures, the model checking problem becomes decidable, and its complexity drops significantly.

This study is also motivated by the complexity of the monitoring problem, particularly for second-order hyperproperties.  
In monitoring (in contrast to model checking), we do not have access to the whole system, but only to its executions. 
Based on the observed executions, we wish to decide whether the property holds or is violated in the (unknown) system. 
Since we can never observe infinite executions, the monitoring problem is defined over finite traces.
A (sequential) monitor for hyperproperties uses  a finite structure to store the set of traces seen so far, and repeatedly model checks this growing set against the specification.
In~\cite{DBLP:conf/atal/BeutnerFF024}, Beutner et al. address the monitoring problem of second-order hyperproperties by defining the problem and suggesting 
monitoring algorithms. However, they do not address the complexity aspect. 

The data structures we study here -- tree-shaped and acyclic structures -- 
are used in practical monitoring tools, such as~\cite{DBLP:journals/sttt/FinkbeinerHST20,DBLP:conf/atal/BeutnerFF024}; the trace logs seen in the monitoring process may be stored in the form of a simple linear collection of the traces seen so far or, for space efficiency, organized by common prefixes into a tree-shaped structure, or by both prefixes and suffixes into an acyclic structure.

In this work, we mostly analyze the complexity in terms of the size of the model, again motivated by applications such as
monitoring
and bounded model checking~\cite{DBLP:conf/tacas/HsuSB21}, where the model grows iteratively, yet the specification remains unchanged.  
Therefore, all complexities given in this work are in the size of the model, except when mentioned otherwise.
\begin{table}[t]
	\begin{center}
		\begin{tabular}{c||c|c}
			& Tree-shaped & Acyclic\\
			\hline\hline
			\fpfp & P-complete (Thm.~\ref{proofs:fppcomplete:complete})& EXP-complete (Thm.~\ref{proofs:fpexpcomplete:complete})\\
			\hline
			Hyper$^2$LTL $(\exists\forall)^k$ & $\Sigma^p_{k+1}$-complete (Thm.~\ref{proofs:hierarchycomplete:complete}) & $\Sigma^{\textit{EXP}}_{k+1}$-complete (Thm.~\ref{proofs:exphierarchy:complete})\\
			\hline
			Hyper$^2$LTL $(\forall\exists)^k$ & $\Pi^p_{k+1}$-complete (Thm.~\ref{proofs:hierarchycomplete:complete}) & $\Pi^{\textit{EXP}}_{k+1}$-complete (Thm.~\ref{proofs:exphierarchy:complete})\\
			\hline
			Hyper$^2$LTL & PSPACE (Cor.~\ref{proofs:hierarchycomplete:pspace}) & EXPSPACE (Cor.~\ref{proofs:exphierarchy:expspace})
		\end{tabular}
	\end{center}
	\caption{Complexity of Hyper$^2$LTL model checking in the size of the  structure. $(\exists\forall)^k$ and $(\forall\exists)^k$ denote  formulas with $k$ second-order quantifier alternations.}
	\label{summarytable1}
    \vspace{-5mm}
\end{table}
We summarize the results of our complexity analysis in Table~\ref{summarytable1}.
Note that our upper bound proofs in fact also provide concrete model checking algorithms for the respective problem.
In general, our results show that the model checking problem is exponentially easier on tree-shaped models than on general acyclic models, which conforms with the fact that trees have exponentially less traces, and are a less efficient representation of trace-sets. 
The more interesting observation following our results, is that model checking \fpfp formulas is much easier than model checking general Hyper$^2$LTL formulas. Our results demonstrate that the advantages of the fixpoint fragment are not limited only to infinite traces, but also appear on finite models, making the logic very appealing for finite settings, such monitoring and bounded model checking.
Since the reductions we provide are very technical, we use the body of the paper to present the main parts of our constructions, and refer the reader to the appendix for the full technical details and proofs. 
\vspace{-2mm}

\paragraph*{Related Work}\label{relatedwork}
For LTL, Kuhtz and Finkbeiner~\cite{DBLP:conf/concur/KuhtzF11} studied the impact of structural restrictions on the complexity of LTL model checking, and Fionda and Greco~\cite{DBLP:conf/aaai/FiondaG16} studied model checking complexity for different fragments of LTL over finite~traces.
For HyperLTL, Bonakdarpour and Finkbeiner~\cite{MonitorHLTL} provided a complexity analysis over finite structures, and showed that the complexity of HyperLTL model checking (generally NONELEMENTARY) drops significantly when~the model is restricted to be acyclic or tree-shaped. 

Over
general structures, Frenkel and Zimmermann~\cite{csl2025} showed that model checking Hyper$^2$LTL is highly undecidable (equivalent to truth in 3rd order arithmetic), but when restricting to the \fpfp fragment, the complexity reduces significantly, to $\Sigma_1^1$ (yet remains undecidable). 
Attempts to identify easier fragments for model checking were made in the context of epistemic and asynchronous logics subsumed by Hyper$^2$LTL, either by restricting~the formula, or the model structure: 
The logic LTL$_{K, C}$ extends LTL with epistemic operators to reason about knowledge and common knowledge. 
While in general its model checking problem is undecidable, it becomes decidable when removing either the \emph{until} temporal operator, or the \emph{common knowledge} epistemic operator~\cite{LTLKC}. 
Baumeister et al.~\cite{DBLP:conf/cav/BaumeisterCBFS21} identified syntactic fragments 
of A-HLTL, an asynchronous extension of HyperLTL, for which model checking is decidable. Hsu et al.~\cite{AHLTLBMC} showed that 
A-HLTL model checking is decidable over acyclic~graphs. 

Our work studies Hyper$^2$LTL, that subsumes all of the above, and 
concerns not only decidability, but completes the picture by providing a detailed complexity analysis for Hyper$^2$LTL and its fragments, over finite structures.

	\section{Preliminaries}\label{preliminaries}

\paragraph*{Kripke Structures}  
Let ${AP}$ be a finite set of atomic propositions.
A \emph{Kripke structure} is a tuple $K = (S, s_0, \delta, L)$ such that
$S$ is a finite set of states; $s_0\in S$ is the initial state;
$\delta\subseteq S\times S$ is the transition relation; and
$L : S\mapsto 2^{{AP}}$ is a labeling function.
If $(s,s') \in \delta$, we say that $s'$ is a \emph{successor} of $s$. 
We require that  $\forall s\in S. \exists s'\in S. (s,s') \in \delta$. That is, every state has at least one successor. Given a Kripke structure $K$, its underlying graph is the directed graph with the set $S$ as vertices, and edges defined by the transition relation $\delta$.\footnote{
While the data structures we study yield finite traces, 
to stick with the infinite trace semantics of Hyper$^2$LTL, and with the usual definition of Kripke structures, we use infinite traces with repeating last state to represent finite traces, similar to the choice of~\cite{MonitorHLTL}, instead of defining finite trace semantics as done e.g. for LTL in~\cite{MCSafety,truncp}.}
\begin{itemize}[nosep]
    \item 
A Kripke structure is \emph{acyclic} if the underlying graph $(S, \delta)$ is acyclic, 
except for states that have no outgoing transitions other than a self-loop. That is, if $(s,s)\in \delta$ then there is no $s'\neq s \in S$ such that $(s,s')\in \delta$. 

\item 
A Kripke structure is \emph{tree-shaped} if: (1) for every state $s'\neq s_0$ there is exactly one state $s\neq s'$ such that $(s, s')\in \delta$; (2) there is no $s$ such that $(s, s_0)\in\delta$; and (3) $(s,s)\in \delta$ iff there is no $s'\neq s$ such that $(s,s')\in \delta$.
\end{itemize}
A \emph{trace} $t\in(2^{{AP}})^\omega$ is an infinite sequence over the alphabet $2^{{AP}}$. 
We denote by $t[i]$ the $i$-th position of $t$, and by $t[i,\infty]$ the suffix of $t$ starting at position~$i$. 
Here, we only consider traces of the form $t = t[0] \cdots t[k] t[k+1] \cdots$ s.t. $\forall i\geq k. t[i] = t[i+1]$ for some $k\in \mathbb{N}$. The smallest such $k$ is the \emph{length} of $t$.
We denote the set of all traces of a Kripke structure $K$ by~$\textit{Traces}(K)$.

\paragraph*{Hyper$^2$LTL}
Hyper$^2$LTL~\cite{DBLP:conf/cav/BeutnerFFM23} extends the logic HyperLTL~\cite{ClarksonFKMRS14} by second-order quantification, that is, quantification over sets of traces. Let $\mathcal V$ be a finite set of trace variables, $\mathfrak V$ be a finite set of trace-set variables containing a special variable $\mathfrak G$ (which is reserved to denote the set of all traces we reason about), and let $AP$ be a finite set of atomic propositions.
Hyper$^2$LTL formulas are defined using the following grammar, for $\pi \in \mathcal V, X\in \mathfrak V, a\in AP$. 
\begin{gather*}
	\psi ::= a_\pi\mid \neg\psi\mid \psi\lor\psi\mid\psi\LTLuntil\psi\mid\LTLnext\psi 
	\\ \varphi ::= \forall X.\varphi\mid\exists X.\varphi\mid\forall \pi\in X. \varphi \mid \exists \pi\in X.\varphi \mid \psi
\end{gather*}
where $\LTLuntil, \LTLnext$ are the temporal operators \emph{until} and \emph{next} (their semantics is given below). We also allow the usual Boolean derivations $\mathit{true},\mathit{false}, \wedge, \rightarrow, \leftrightarrow$; and the temporal derivations $\LTLfinally\psi  \equiv \textit{true}\LTLuntil \psi$ (\emph{eventually}), and $\LTLglobally\psi \equiv \neg\LTLfinally\neg\psi$ (\emph{globally}). 

\emph{Semantics:}
Given a set of traces $T$,
we interpret Hyper$^2$LTL formulas w.r.t. a trace assignment $\Pi : \mathcal V\rightharpoonup T$, and a trace-set assignment $\Delta:\mathfrak V \rightharpoonup 2^T$ (both are partial functions). 
For the trace assignment $\Pi$, 
we write $\Pi[i, \infty]$ for the assignment $\Pi'(\pi) = \Pi(\pi)[i, \infty]$,
and $\Pi[\pi\mapsto t]$ for the assignment that maps $\pi$ to $t$, and maps every $\pi' \neq \pi$ to~$\Pi(\pi')$.
Similarly, for the trace-set assignment $\Delta$, we write $\Delta[X\mapsto A]$ for the assignment that maps $X$ to $A$, and every trace-set variable $Y\neq X$ to $\Delta(Y)$.
The semantics of Hyper$^2$LTL are defined over the relation $\models_T$, w.r.t. the set $T$ of traces, and the assignments $\Pi$ and $\Delta$, as follows. 
\begin{align*}
	&\Pi,\Delta \models_T a_\pi &\text{ iff } &a\in\Pi(\pi)[0]\\
	&\Pi,\Delta\models_T \neg\psi &\text{ iff } & \Pi,\Delta\not\models_T\psi\\
	&\Pi,\Delta\models_T \psi_1\lor\psi_2 &\text{ iff } &\Pi,\Delta\models_T\psi_1 \text { or }\Pi, \Delta\models_T\psi_2\\
	&\Pi,\Delta\models_T \LTLnext \psi &\text{ iff } &\Pi[1,\infty],\Delta\models_T\psi\\
	&\Pi,\Delta\models_T \psi_1\LTLuntil\psi_2 &\text{ iff } &\exists i\ge 0: \Pi[i,\infty],\Delta\models_T\psi_2 
\text{, and } \\ &&&
        \forall 0\le j < i: \Pi[j, \infty], \Delta\models_T\psi_1\\
	&\Pi, \Delta\models_T \forall\pi\in X. \varphi &\text{ iff } &\text{for all } t\in \Delta(X):\Pi[\pi\mapsto t], \Delta\models_T\varphi\\
	&\Pi, \Delta\models_T \exists\pi\in X. \varphi &\text{ iff } &\text{there exists } t\in \Delta(X):\Pi[\pi\mapsto t], \Delta\models_T\varphi\\
	&\Pi, \Delta\models_T \forall X.\varphi &\text{ iff } &\text{for all } A\subseteq T: \Pi,\Delta[X\mapsto A]\models_T\varphi\\
	&\Pi, \Delta\models_T \exists X.\varphi &\text{ iff } &\text{there exists } A\subseteq T: \Pi,\Delta[X\mapsto A]\models_T\varphi
\end{align*}
When the set of traces $T$ is clear from the context, we use $\models$ instead of $\models_T$. 
We say that a Kripke structure $K$ with traces $\mathit{Traces}(K)$ satisfies Hyper$^2$LTL formula $\varphi$ iff $\emptyset, [\mathfrak G\mapsto\textit{Traces}(K)]\models_{\textit{Traces}(K)}\varphi$, where $\emptyset$ stands for the empty trace assignment, and $[\mathfrak G\mapsto\textit{Traces}(K)]$ is the trace-set assignment that only maps the special trace-set variable $\mathfrak G$ to $\textit{Traces}(K)$, and non of the 
other trace-set variables. This is since we are only interested in satisfaction of Hyper$^2$LTL formulas with no free variables (other than $\mathfrak G$). 
In this case, we write $K\models\varphi$.\footnote{The semantics we use slightly differs from the original Hyper$^2$LTL semantics introduced in \cite{DBLP:conf/cav/BeutnerFFM23}, which uses an additional special trace-set variable $\mathfrak U$ to represent the set of all possible traces (not restricted only to the given model).
	Since we analyze the model checking problem for finite models, semantics that reason over all possible traces are not suited for our approach, and so we do not allow the use of this variable. This matches the closed-world semantics defined by~\cite{csl2025}.}

\paragraph*{Second-Order Quantifier Alternations}
The
{second-order quantifier alternations} of a Hyper$^2$LTL formula $\varphi$, is the number of times $\varphi$ alternates from universal to existential second-order quantification or vice versa.
All first-order quantifications are ignored. 
For example, the formula $\exists A. \forall\pi\in A.\exists B.\forall C.\exists D.\psi$, where $\psi$ is quantifier-free, has two second-order quantifier alternations: one from existential to universal set quantification, and one back to existential quantification.

 We denote Hyper$^2$LTL formulas with an outermost second-order existential quantifier by $\Sigma$-Hyper$^2$LTL. Such formulas that have $k$ second-order quantifier alternations, we denote by $\Sigma_k$-Hyper$^2$LTL.
Similarly, $\Pi$-Hyper$^2$LTL is the fragment of Hyper$^2$LTL formulas with outermost second-order universal quantifier, and $\Pi_k$-Hyper$^2$LTL are such formulas with $k$ alternations.
We say that a formula is \emph{quantifier free} if it contains neither first-order nor second-order quantifiers.

\paragraph*{Syntactic Sugar}
Before we proceed to the definition of the \fpfp fragment, we define some abbreviations we will use in the remainder of the paper.
\begin{itemize}[nosep]
    \item $\pi =_{\textit{AP}} \pi'$ stands for $ \bigwedge_{a\in \textit{AP}} \LTLglobally(a_\pi\leftrightarrow a_{\pi'})$ denoting equality between $\pi$ and~$\pi'$.
    \item 
$\exists! \pi \in X.\varphi(\pi)$ stands for
 $\exists\pi \in X.\forall\pi' \in X. \varphi(\pi)\land(\varphi(\pi')\rightarrow\pi =_{\textit{AP}}\pi')$, i.e., there {exists exactly one trace} for which $\varphi$ holds.
 \item $\pi\triangleright X$ stands for 
 $\exists\pi'\in X.\pi'=_{\textit{AP}} \pi$, i.e., $\pi$ is a member of the set $X$. If~$\triangleright$ is not under the scope of temporal operators, 
 we can bring
 the formula 
 into a proper Hyper$^2$LTL syntax.
 \item
 A trace $t$ \emph{encodes a 
	(binary) number $b$ with proposition $p$}
if:
$p$ holds in the $i$-th step of $t$ iff $b$ has a 1 at position $i$.
For traces $t$, $t'$ of equal length,~the formula $\LTLglobally(p_t\leftrightarrow p_{t'})$ holds iff $t$ and~$t'$ encode the same number $b$ with proposition $p$. All our encodings are in binary, and we start with the least significant bit.
    \item
    A trace $\pi$ is \emph{marked} with some atomic proposition $a$ if $\pi$ satisfies $\LTLfinally a_\pi$.
\end{itemize}
Additionally to this syntactic sugar, we often give our formulas with quantifiers under boolean operators.
It improves readability and allows us to nicely split formulas into subformulas.
Syntactically, this is not permitted but such a formula can be easily transformed into valid Hyper$^2$LTL syntax by pulling all quantifiers to the front.
Note that such a transformation is not possible if a quantifier appears under a temporal operator.

\paragraph*{Fixpoint Hyper$^2$LTL}
Along with Hyper$^2$LTL,~\cite{DBLP:conf/cav/BeutnerFFM23} introduces two sub-fragments: Hyper$^2$LTL$_{fp}$ and Fixpoint Hyper$^2$LTL$_{fp}$.
Hyper$^2$LTL$_{fp}$ restricts second-order quantification to largest or smallest sets that satisfy a given formula.
Fixpoint Hyper$^2$LTL$_{fp}$ further
 restricts second-order quantification so that the quantified sets are uniquely defined by least fixpoints.
This makes it more applicable to model checking, as demonstrated by~\cite{DBLP:conf/cav/BeutnerFFM23}, that proposed a partial model checking algorithm for the Fixpoint Hyper$^2$LTL$_{fp}$ fragment, based on over- and under-approximations of the fixpoints.

In this work, we analyze the complexities of the full Hyper$^2$LTL logic, and of the \fpfp fragment, but not the complexity of the intermediate Hyper$^2$LTL$_{fp}$ fragment. 
We expect that the model checking problem for Hyper$^2$LTL$_{fp}$ is of similar complexity as for Hyper$^2$LTL.
This is since Hyper$^2$LTL$_{fp}$ restricts second-order quantification, but neither gives a constructive way to compute all quantified sets nor restricts the number of sets that a formula can quantify over significantly. This is also demonstrated in the results of~\cite{csl2025}, where, over general (not finite) structures,  Hyper$^2$LTL and  Hyper$^2$LTL$_{fp}$ are at the same level of the arithmetical hierarchy, whereas the \fpfp fragment is much less complex (see Table~1 of~\cite{csl2025}). 
In contrast to Hyper$^2$LTL$_{fp}$, Fixpoint Hyper$^2$LTL$_{fp}$ restricts the number of sets that a formula can quantify over to be exactly one and gives a constructive way to compute this set.
This restrictive quantification is still expressive enough to express complex properties such as common knowledge, and yields an undecidable logic for the general case. 
The syntax of \fpfp is defined as follows:
\begin{gather*}
\psi ::= a_\pi\mid \neg\psi\mid \psi\lor\psi\mid\psi\LTLuntil\psi\mid\LTLnext\psi
	\\ \varphi ::= (X, \curlyvee,\varphi_{fp}).\varphi\mid\forall \pi\in X. \varphi \mid \exists \pi\in X.\varphi \mid \psi
\end{gather*}
Where $\varphi_{fp}$ is a conjunction of formulas of the form: 
$
\forall\pi_1\in X_1.\dots\forall\pi_n\in X_n.\psi_{\textit{step}}\rightarrow\pi_M\triangleright X
$, 
where $X_1, \dots, X_n$ are previously quantified trace-set variables, or $X$ itself; $\psi_{\textit{step}}$ is a quantifier free formula; and $1\le M\le n$. The semantics of \fpfp are the same as the semantics of Hyper$^2$LTL, except for second-order quantification, which is defined as follows. 
\begin{align*}
	\Pi, \Delta\models_T (X, \curlyvee, \varphi_{fp}).\varphi \text{ iff there exists } A\in \textit{sol}(\Pi,\Delta,(X, \curlyvee, \varphi_{fp})) 
    \text{  s.t.  }\\  \Pi,\Delta[X\mapsto A]\models_T\varphi
\end{align*}
where $\textit{sol}$ function returns the smallest sets (w.r.t. to set inclusion) satisfying~$\varphi_{fp}$:
\begin{align*}
	&  \textit{sol}(\Pi,\Delta,(X, \curlyvee, \varphi_{fp})) := \\
   & \{A\subseteq T\mid \Pi,\Delta[X\mapsto A]\models_T\varphi_{fp}\land\forall A'\subsetneq A.\Pi, \Delta[X\mapsto A']\not\models_T\varphi_{fp}\}
\end{align*}
Due to its definition using the $\psi_{step}$ formulas, each fixpoint iteration adds a unique set of traces to the constructed set.
The resulting fixpoint is therefore unique, i.e., $\textit{sol}$ consists~of~only one set (\cite{csl2025} Sec.7). Note that this defines a \emph{least} fixpoint, but for conciseness, and since we only analyze least fixpoints in this work, we neglect the \emph{least} from the name of the fragment.\footnote{In~\cite{DBLP:conf/cav/BeutnerFFM23}, it is termed the least fixpoint fragment. A largest fixpoint fragment is not defined in~\cite{DBLP:conf/cav/BeutnerFFM23}. Still, the analysis for a largest fixpoint fragment should be dual.}
 As discussed in~\cite{DBLP:conf/cav/BeutnerFFM23}, the fixpoint fragment is very appealing as,
 together with the fact that it is expressive enough to express properties such as common knowledge and asynchronous hyperproperties, it also allows (over general structures) to compute sound approximations of the second-order sets, thus providing a partial model checking algorithm for the fragment. In this work, we establish the efficiency of the fixpoint fragment also on finite structures, making it even more appealing.

\begin{example}\label{ex:fp}
The Fixpoint Hyper$^2$LTL$_{fp}$ formula 
$\varphi = (X, \curlyvee, \forall \pi\in X.\forall\pi'\in\mathfrak G. (\LTLglobally a _{\pi'} \lor \LTLglobally(a_\pi\leftrightarrow a_{\pi'})\lor\LTLglobally(b_\pi\leftrightarrow b_{\pi'})) \rightarrow\pi'\triangleright X). \forall \pi\in X. \neg b_{\pi}$
defines a set~$X$ that initially contains all traces satisfying $\LTLglobally a$.
With each fixpoint iteration, a trace $\pi'$ is added to $X$ if there is a trace $\pi\in X$ which globally agrees with $\pi'$
on $a$ or on~$b$.
See Fig.~\ref{prelims:example} for concrete traces.
The formula $\varphi$ can be used to reason about common knowledge in multi agent systems, as shown in~\cite{DBLP:conf/cav/BeutnerFFM23}. 
\end{example}
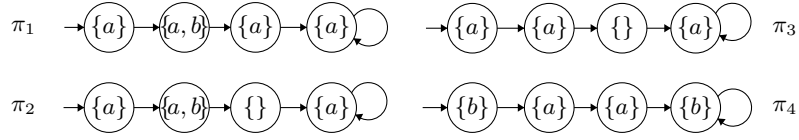
\begin{figure}[t]
    \centering
\begin{tikzpicture}[scale=0.11]
\tikzstyle{every node}+=[inner sep=0pt]
\draw [black] (10.3,-25.3) circle (3);
\draw (0,-25.3) node {$\pi_1$};
\draw (10.3,-25.3) node {$\{a\}$};
\draw [black] (19.4,-25.3) circle (3);
\draw (19.4,-25.3) node {\footnotesize{$\{\!a,b\!\}$}};
\draw [black] (28,-25.3) circle (3);
\draw (28,-25.3) node {$\{a\}$};
\draw [black] (37.2,-25.3) circle (3);
\draw (37.2,-25.3) node {$\{a\}$};
\draw [black] (10.3,-35) circle (3);
\draw (0,-35) node {$\pi_2$};
\draw (10.3,-35) node {$\{a\}$};
\draw [black] (19.4,-35) circle (3);
\draw (19.4,-35) node {\small$\{\!a,b\!\}$};
\draw [black] (28,-35) circle (3);
\draw (28,-35) node {$\{\}$};
\draw [black] (37.2,-35) circle (3);
\draw (37.2,-35) node {$\{a\}$};
\draw [black] (54.2,-25.4) circle (3);
\draw (54.2,-25.4) node {$\{a\}$};
\draw [black] (63.5,-25.4) circle (3);
\draw (63.5,-25.4) node {$\{a\}$};
\draw [black] (72.3,-25.4) circle (3);
\draw (72.3,-25.4) node {$\{\}$};
\draw [black] (81.2,-25.4) circle (3);
\draw (81.2,-25.4) node {$\{a\}$};
\draw (92,-25.4) node {$\pi_3$};
\draw [black] (54.2,-35) circle (3);
\draw (54.2,-35) node {$\{b\}$};
\draw [black] (63.5,-35) circle (3);
\draw (63.5,-35) node {$\{a\}$};
\draw [black] (81.2,-35) circle (3);
\draw (81.2,-35) node {$\{b\}$};
\draw (92,-35) node {$\pi_4$};

\draw [black] (72.3,-35) circle (3);
\draw (72.3,-35) node {$\{a\}$};
\draw [black] (4.9,-25.3) -- (7.3,-25.3);
\fill [black] (7.3,-25.3) -- (6.5,-24.8) -- (6.5,-25.8);
\draw [black] (13.3,-25.3) -- (16.4,-25.3);
\fill [black] (16.4,-25.3) -- (15.6,-24.8) -- (15.6,-25.8);
\draw [black] (22.4,-25.3) -- (25,-25.3);
\fill [black] (25,-25.3) -- (24.2,-24.8) -- (24.2,-25.8);
\draw [black] (31,-25.3) -- (34.2,-25.3);
\fill [black] (34.2,-25.3) -- (33.4,-24.8) -- (33.4,-25.8);
\draw [black] (39.88,-23.977) arc (144:-144:2.25);
\fill [black] (39.88,-26.62) -- (40.23,-27.5) -- (40.82,-26.69);
\draw [black] (4.8,-35) -- (7.3,-35);
\fill [black] (7.3,-35) -- (6.5,-34.5) -- (6.5,-35.5);
\draw [black] (13.3,-35) -- (16.4,-35);
\fill [black] (16.4,-35) -- (15.6,-34.5) -- (15.6,-35.5);
\draw [black] (22.4,-35) -- (25,-35);
\fill [black] (25,-35) -- (24.2,-34.5) -- (24.2,-35.5);
\draw [black] (31,-35) -- (34.2,-35);
\fill [black] (34.2,-35) -- (33.4,-34.5) -- (33.4,-35.5);
\draw [black] (39.613,-33.238) arc (153.86581:-134.13419:2.25);
\fill [black] (40.07,-35.84) -- (40.56,-36.65) -- (41.01,-35.75);
\draw [black] (48.6,-25.4) -- (51.2,-25.4);
\fill [black] (51.2,-25.4) -- (50.4,-24.9) -- (50.4,-25.9);
\draw [black] (57.2,-25.4) -- (60.5,-25.4);
\fill [black] (60.5,-25.4) -- (59.7,-24.9) -- (59.7,-25.9);
\draw [black] (66.5,-25.4) -- (69.3,-25.4);
\fill [black] (69.3,-25.4) -- (68.5,-24.9) -- (68.5,-25.9);
\draw [black] (83.616,-23.641) arc (153.78241:-134.21759:2.25);
\fill [black] (84.07,-26.25) -- (84.56,-27.05) -- (85,-26.15);
\draw [black] (75.3,-25.4) -- (78.2,-25.4);
\fill [black] (78.2,-25.4) -- (77.4,-24.9) -- (77.4,-25.9);
\draw [black] (48.2,-35) -- (51.2,-35);
\fill [black] (51.2,-35) -- (50.4,-34.5) -- (50.4,-35.5);
\draw [black] (57.2,-35) -- (60.5,-35);
\fill [black] (60.5,-35) -- (59.7,-34.5) -- (59.7,-35.5);
\draw [black] (83.88,-33.677) arc (144:-144:2.25);
\fill [black] (83.88,-36.32) -- (84.23,-37.2) -- (84.82,-36.39);
\draw [black] (66.5,-35) -- (69.3,-35);
\fill [black] (69.3,-35) -- (68.5,-34.5) -- (68.5,-35.5);
\draw [black] (75.3,-35) -- (78.2,-35);
\fill [black] (78.2,-35) -- (77.4,-34.5) -- (77.4,-35.5);
\end{tikzpicture}

 \caption{The formula from \Cref{ex:fp} is satisfied by the above set of traces $T=\{\pi_1,\pi_2,\pi_3,\pi_4\}$. The set $X$ initially contains $\pi_1$, $\pi_2$ is added as it agrees with $\pi_1$ on $b$, and  $\pi_3$ is added since it agrees with $\pi_2$ on $a$. $\pi_4$ does not agree with any other trace on any proposition, thus $X=\{\pi_1,\pi_2,\pi_3\}$.\vspace{-3mm}
}
\label{prelims:example}
\end{figure}

\vspace{-3mm}

\paragraph*{Problem Statement}
The model checking problem is to decide, given a Kripke structure~$K$ and a Hyper$^2$LTL formula $\varphi$, whether $K\models\varphi$. 
We study the following model checking (MC) problems.
If not stated otherwise, we 
study the \emph{data complexity}, 
i.e., the complexities are w.r.t. the size of $K$ and not the size of $\varphi$.
\begin{itemize}[nosep]
	\item \mcfpfptree$\!\!$: $K$ is tree-shaped and $\varphi$ is in Fixpoint Hyper$^2$LTL$_{fp}$.
	\item \mcfpfpdag$\!\!$: $K$ is acyclic and $\varphi$ is in Fixpoint Hyper$^2$LTL$_{fp}$.
	\item \mcfulltree: $K$ is tree-shaped and $\varphi$ is not further restricted.
	\item \mcfulldag: $K$ is acyclic and $\varphi$ is not further restricted.
\item For the full logic, we also define the following problems, where \textsf{struct} can be \textsc{tree} or  \textsc{acyclic}, and the structure $K$ is according to  \textsf{struct}:
\subitem \textsc{MC[$\Sigma_k$-Hyper$^2$LTL,} \textsf{struct}]: $\varphi$ is in $\Sigma_k$-Hyper$^2$LTL. 
	\subitem \textsc{MC[$\Pi_k$-Hyper$^2$LTL,} \textsf{struct}]:  $\varphi$ is in $\Pi_k$-Hyper$^2$LTL.
\end{itemize}

	\section{P-completeness of \mcfpfptree}\label{proofs:fppcomplete}
In this section, we show that \mcfpfptree is P-complete. We start by proving that it is in P. Then we give a reduction in logarithmic space from the Horn-satisfiability problem~\cite{CompuComplexity} to show that it is P-hard. All reductions we present in the paper are polynomial time. 

\begin{observation}
\label{proofs:treeremark}
A tree-shaped Kripke structure with $n$ states has at most $n$ traces. In addition, each leaf of the tree corresponds to exactly one trace. 
\end{observation}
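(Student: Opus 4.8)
The plan is to set up a correspondence between the traces of a tree-shaped Kripke structure $K=(S,s_0,\delta,L)$ and the \emph{leaves} of its underlying tree, where I call a state $s$ a leaf if $(s,s)\in\delta$; by condition~(3) of the tree-shaped definition this is exactly a state whose only successor is itself. First I would record the structural facts I rely on: by condition~(2) the root $s_0$ has no predecessor, by condition~(1) every non-root state has a unique parent (so the path from $s_0$ to any given state is unique), and by condition~(3) every state is either such a leaf or an \emph{internal} state all of whose successors are distinct from it.

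Next I would analyze an arbitrary trace, which is the label sequence $L(s_0)L(s_1)L(s_2)\cdots$ of some infinite path $s_0 s_1 s_2 \cdots$ with $(s_i,s_{i+1})\in\delta$. Starting at $s_0$ and following $\delta$, whenever the path is at an internal state it must move to a strictly deeper node (a child), so by finiteness and acyclicity of the tree the descent reaches a leaf $\ell$ after finitely many steps; from that point the only available transition is the self-loop, so the path remains at $\ell$ forever and its label sequence is eventually constant, exactly as required of traces. Thus every path of $K$ is the unique root-to-$\ell$ path for some leaf $\ell$, extended by the self-loop on $\ell$; and conversely each leaf $\ell$ determines exactly one such path. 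This already gives the second claim: each leaf corresponds to exactly one trace, namely the label sequence of that path.

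Finally, for the cardinality bound I would observe that every trace is the label sequence of some root-to-leaf path, so the map sending each leaf to its unique trace is surjective onto $\textit{Traces}(K)$; hence the number of distinct traces is at most the number of leaves, and since the leaves form a subset of $S$ this is at most $|S|=n$, giving at most $n$ traces.

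The only points needing care are invoking condition~(3) correctly, to guarantee both that a path cannot stall at an internal node and that it cannot leave a leaf once reached, and distinguishing state-paths from label-traces: a priori two distinct leaves could carry the same label sequence, so the map from leaves to traces need not be injective and the counting argument bounds only the number of \emph{distinct} traces (the bound need not be tight). I do not expect a genuine obstacle here; the only real content is phrasing the ``descent terminates at a leaf'' step precisely from the finiteness and acyclicity of the tree together with the leaf self-loop convention.
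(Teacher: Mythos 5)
Your proof is correct and is precisely the intended justification for this statement, which the paper records as an observation without proof: every infinite path descends from $s_0$ through internal states until it reaches a self-looping leaf and stays there, so the map from leaves to traces is well defined and surjective onto $\textit{Traces}(K)$, giving at most $n$ traces. Your care in distinguishing state-paths from label-traces (so that the leaf-to-trace map need only be surjective, not injective) and in pinning the termination of the descent on condition~(3) together with finiteness is exactly the right level of precision.
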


\begin{lemma}[\cite{MonitorHLTL}]\label{lemma:noquantifiershyperltl}
    Let $K$ be a tree-shaped Kripke structure. 
    Under a fixed trace assignment, a quantifier-free HyperLTL formula $\psi$ can be evaluated on $K$ in polynomial time in the size of $K$. 
\end{lemma}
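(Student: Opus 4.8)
The plan is to reduce the evaluation of $\psi$ under the fixed assignment $\Pi$ to the evaluation of an ordinary LTL formula on a single ultimately periodic word, and then to evaluate that word by a bottom-up dynamic program over the subformulas of $\psi$.

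First I would observe that, by \Cref{proofs:treeremark} and the tree-shaped structure, every trace $\Pi(\pi)$ occurring in $\psi$ is a \emph{lasso} of a particularly simple shape: a finite prefix of length at most $n$ (a root-to-leaf path, whose length is bounded by the tree depth and hence by the number of states $n$) followed by the infinite repetition of a single leaf state, i.e.\ a self-loop of length one. Treating each proposition $a_\pi$ as true at position $i$ exactly when $a \in \Pi(\pi)[i]$, the formula $\psi$ becomes an LTL formula over the finite set of propositions $\{a_\pi\}$, evaluated on the combined word whose $i$-th letter records the labels of all the $\Pi(\pi)$ at position $i$. Since each component trace stabilises after at most $n$ steps, this combined word is constant from some position $m \le n$ onwards; in particular it has at most $m+1 \le n+1$ distinct suffixes.

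Next I would compute, by induction on the structure of $\psi$ and for every position $p \in \{0,\dots,m\}$, the Boolean value $v_\chi(p)$ recording whether $\Pi[p,\infty],\Delta \models \chi$, where the position $m$ stands for the entire stable region. For an atom $a_\pi$ the value is read off directly; Boolean connectives are handled pointwise; for $\LTLnext\chi$ I set $v_{\LTLnext\chi}(p) = v_\chi(p+1)$ with the convention $v_\chi(m+1) := v_\chi(m)$, since the word is stable at $m$; and for $\chi_1 \LTLuntil \chi_2$ I would perform a single backward pass, using that on the length-one loop the until reduces to $v_{\chi_1\LTLuntil\chi_2}(m) = v_{\chi_2}(m)$, while for $p<m$ the usual recurrence $v_{\chi_1\LTLuntil\chi_2}(p) = v_{\chi_2}(p) \lor \bigl(v_{\chi_1}(p) \land v_{\chi_1\LTLuntil\chi_2}(p+1)\bigr)$ applies. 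The answer is $v_\psi(0)$. Each subformula contributes $O(m) = O(n)$ work, so the whole procedure runs in time $O(|\psi|\cdot n)$, which is polynomial in the size of $K$.

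The only delicate point is the treatment of the temporal operators inside the looping region, since an \emph{until} is a fixpoint and naively unfolding it along the periodic tail could fail to terminate. This is exactly where the tree-shaped restriction pays off: the loop consists of a single repeated state, so the satisfaction of any subformula on the tail is determined by its value at the single stable position $m$, collapsing the fixpoint to the simple base cases above. I expect this to be the main (and essentially the only) obstacle, and it is resolved once the loop length is recognised to be one.
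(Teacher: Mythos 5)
Your argument is correct. Note that the paper does not prove this lemma at all -- it is imported by citation from \cite{MonitorHLTL} -- so there is no in-paper proof to compare against; your reduction to LTL path checking on an ultimately periodic word that stabilises after at most $n$ positions, followed by the backward dynamic program over subformulas with the period-one collapse $v_{\chi_1\LTLuntil\chi_2}(m)=v_{\chi_2}(m)$, is the standard argument and correctly establishes the stated polynomial bound.
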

\begin{corollary}\label{lemma:noquantifiers}
     Let $K$ be a tree-shaped Kripke structure. 
    Under a fixed trace and trace-set assignments, a quantifier-free Hyper$^2$LTL formula $\psi$ can be evaluated on $K$ in polynomial time in the size of $K$.  
\end{corollary}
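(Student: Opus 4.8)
The plan is to reduce the claim directly to Lemma~\ref{lemma:noquantifiershyperltl}, observing that a quantifier-free Hyper$^2$LTL formula is, both syntactically and semantically, nothing more than a quantifier-free HyperLTL formula. First I would record that the grammar of the quantifier-free fragment, $\psi ::= a_\pi \mid \neg\psi \mid \psi\lor\psi \mid \psi\LTLuntil\psi \mid \LTLnext\psi$, is identical in the two logics: every second-order feature of Hyper$^2$LTL enters only through the quantifiers $\forall X$, $\exists X$ and the membership-guarded trace quantifiers $\forall\pi\in X$, $\exists\pi\in X$, none of which may occur in a quantifier-free formula. Hence the set of quantifier-free Hyper$^2$LTL formulas coincides with the set of quantifier-free HyperLTL formulas.

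The key step is to observe that the trace-set assignment $\Delta$ is irrelevant to the evaluation of $\psi$. Inspecting the semantic clauses for $a_\pi$, $\neg\psi$, $\psi_1\lor\psi_2$, $\LTLnext\psi$ and $\psi_1\LTLuntil\psi_2$, the assignment $\Delta$ never appears on the right-hand side; $\Delta$ is consulted only in the clauses for second-order quantification. Thus for any fixed trace assignment $\Pi$ and any two trace-set assignments $\Delta, \Delta'$ we have $\Pi,\Delta \models_T \psi$ iff $\Pi,\Delta' \models_T \psi$, so the truth value of $\psi$ under a fixed trace assignment is determined entirely by $\Pi$. This places the evaluation of $\psi$ exactly in the HyperLTL setting of Lemma~\ref{lemma:noquantifiershyperltl}, which gives the polynomial-time bound in $|K|$, and the corollary follows.

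I do not expect any genuine obstacle: the statement is immediate once one notes that $\Delta$ plays no role in quantifier-free satisfaction. The only point worth making explicit is that the fixed trace assignment $\Pi$ may name several traces of $K$ simultaneously, but this is already the situation handled by Lemma~\ref{lemma:noquantifiershyperltl}, so no new argument is required beyond invoking the cited lemma on the (unchanged) quantifier-free formula.
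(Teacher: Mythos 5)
Your proposal is correct and follows essentially the same route as the paper, which justifies the corollary by noting that once the trace-set assignment is concrete, evaluating a quantifier-free Hyper$^2$LTL formula reduces to evaluating a quantifier-free HyperLTL formula and then invoking Lemma~\ref{lemma:noquantifiershyperltl}. Your additional observation that $\Delta$ is never consulted in the quantifier-free semantic clauses is a slightly more explicit version of the same point.
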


\Cref{lemma:noquantifiers} follows from the fact that given a concrete trace-set assignment, verifying Hyper$^2$LTL reduces to verifying HyperLTL formulas. 

Since we can represent a finite set of traces $T$ as a tree with $|T|$ branches,~we~get:
\begin{corollary}\label{cor:noquantifiers}
    Let $T$ be a finite set of traces. 
    Under a fixed trace assignment, a quantifier-free Hyper$^2$LTL formula $\psi$ can be evaluated on $T$ in polynomial time in the size of $T$. 
\end{corollary}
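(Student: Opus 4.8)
The plan is to reduce the evaluation of a quantifier-free Hyper$^2$LTL formula over an abstract finite trace set $T$ to the tree-shaped case already settled in Corollary~\ref{lemma:noquantifiers}. The enabling observation is that a quantifier-free formula $\psi$ is generated purely by the $\psi$-grammar ($a_\pi$, $\neg$, $\lor$, $\LTLnext$, $\LTLuntil$), so by the semantics its truth under a fixed trace assignment $\Pi$ depends only on the finitely many traces $\Pi(\pi_1),\dots,\Pi(\pi_m)$ assigned to the trace variables occurring in $\psi$; it is independent both of the ambient set $T$ carried by $\models_T$ and of the trace-set assignment $\Delta$, since there are no quantifiers ranging over them. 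Hence evaluating $\psi$ ``on $T$'' is the same as evaluating it on the fixed collection of assigned traces, and it suffices to embed those traces into a suitable tree.

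For the construction step I would build a tree-shaped Kripke structure $K_T$ whose traces are exactly $T$. Each $t\in T$ is ultimately constant (repeating last state), so it has a finite informative prefix $t[0]\cdots t[k]$ followed by stuttering of $t[k]$; I lay out one branch spelling this prefix and close it with a self-loop at its last state, which is precisely condition~(3) of the tree-shaped definition (self-loop iff leaf). Merging these $|T|$ branches at a common root yields, by Observation~\ref{proofs:treeremark}, a tree of size polynomial in the size of $T$. The fixed trace assignment $\Pi$ on $T$ is then literally a fixed trace assignment on $K_T$, so Corollary~\ref{lemma:noquantifiers} evaluates $\psi$ under $\Pi$ (and any $\Delta$) on $K_T$ in time polynomial in $|K_T|$, hence polynomial in $|T|$.

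The one point requiring care, and the closest thing to an obstacle, is the single-root constraint: the initial state of a tree-shaped structure forces its label onto position $0$ of every branch. The direct merge therefore works verbatim when all traces of $T$ agree on their first letter, which is automatic when $T$ is drawn from the traces of a single Kripke structure. In the fully general case I would instead prepend one fresh root labeled by a fresh proposition $r\notin AP$, making each trace of $T$ the suffix from position $1$, and then evaluate the shifted formula $\LTLnext\,\psi$, which remains quantifier-free and grows by only one symbol. I do not expect a genuine difficulty beyond getting this root handling right; the corollary is essentially a packaging of Corollary~\ref{lemma:noquantifiers} together with the tree encoding of a finite trace set.
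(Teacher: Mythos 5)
Your proposal is correct and follows essentially the same route as the paper, which justifies the corollary with the single remark that a finite set of traces can be represented as a tree with $|T|$ branches and then invokes Cor.~\ref{lemma:noquantifiers}. Your additional care about the shared root label (prepending a fresh root and evaluating $\LTLnext\psi$) is a valid fix for a detail the paper glosses over, but it does not change the argument.
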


\begin{lemma}\label{proofs:fppcomplete:inp}
\mcfpfptree is in P.
\end{lemma}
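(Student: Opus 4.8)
The plan is to evaluate the formula recursively along its quantifier prefix, exploiting two facts: by \Cref{proofs:treeremark} the trace set $T := \textit{Traces}(K)$ has at most $n := |S|$ elements, and since we measure data complexity the number of quantifiers in $\varphi$ is a fixed constant $k$. I would prove by induction over the quantifier prefix that, for any fixed trace assignment $\Pi$ and trace-set assignment $\Delta$ ranging over $T$, the relation $\Pi, \Delta \models_T \varphi$ is decidable in polynomial time; at the top level one instantiates this with $\Pi = \emptyset$ and $\Delta = [\mathfrak{G}\mapsto T]$. The base case is a quantifier-free $\psi$, handled directly by \Cref{lemma:noquantifiers}. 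For a first-order quantifier $\forall \pi \in X.\varphi'$ (resp. $\exists \pi \in X.\varphi'$) I would iterate over the at most $n$ traces $t \in \Delta(X)$ and recurse on $\Pi[\pi \mapsto t], \Delta \models_T \varphi'$, taking the conjunction (resp. disjunction) of the outcomes; this multiplies the running time by at most $n$.

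The core of the argument is the fixpoint quantifier $(X, \curlyvee, \varphi_{fp}).\varphi'$. Since \textit{sol} is a singleton (the least fixpoint is unique, as noted in the preliminaries), the existential semantics collapses to: compute the unique set $A$, then recurse on $\Pi, \Delta[X \mapsto A] \models_T \varphi'$. I would compute $A$ by monotone saturation. Starting from $X_0 = \emptyset$, I repeatedly apply the rules of $\varphi_{fp}$: for each of its (constantly many) conjuncts $\forall \pi_1 \in X_1.\cdots\forall \pi_r \in X_r.\,\psi_{\textit{step}} \rightarrow \pi_M \triangleright X$, I enumerate all tuples $(t_1, \dots, t_r) \in \Delta(X_1) \times \cdots \times \Delta(X_r)$ — at most $n^r$ of them, a constant power of $n$ since the arity $r$ is fixed by the formula — and whenever the quantifier-free $\psi_{\textit{step}}$ holds under the extended assignment (checked in polynomial time by \Cref{lemma:noquantifiers}), I add the trace $t_M$ to the current approximation of $X$. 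Because $X \subseteq T$ and every round that has not yet stabilized enlarges $X$ by at least one of the at most $n$ traces of $T$, the saturation terminates after at most $n$ rounds, each of polynomial cost; hence $A$ is computed in polynomial time.

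Combining the cases, the recursion tree has depth $k$, branching at most $n$ at first-order nodes, and performs a polynomial-time fixpoint computation or quantifier-free evaluation at each node; as $k$ is constant, this yields at most $n^k$ nodes and an overall polynomial running time, placing \mcfpfptree in P. The main obstacle I anticipate is justifying the fixpoint step rigorously: I must argue that saturation from $\emptyset$ genuinely yields the unique least element of \textit{sol}. This rests on monotonicity of the one-step operator, which in turn follows from the Horn-like shape of $\varphi_{fp}$ — the conclusion is a single membership $\pi_M \triangleright X$, and $\psi_{\textit{step}}$, being quantifier-free, cannot test membership in $X$ (the sugar $\triangleright$ hides a first-order quantifier), so the truth of $\psi_{\textit{step}}$ on a fixed tuple is independent of $X$ while enlarging any $X_i = X$ only fires more conclusions. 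Two further points to check are that $\pi_M \triangleright X$ compares traces up to $=_{\textit{AP}}$ rather than by identity, which affects only how added traces are compared and keeps every test polynomial; and that $\psi_{\textit{step}}$ may reference trace variables bound by enclosing first-order quantifiers, so $A$ may depend on the current branch and be recomputed along it — harmless, since there are only polynomially many branches.
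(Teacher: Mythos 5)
Your proposal is correct and follows essentially the same route as the paper: recursive evaluation along the quantifier prefix, with first-order quantifiers resolved by iterating over the at most $n$ traces of the tree (Obs.~\ref{proofs:treeremark}) and the fixpoint quantifier resolved by computing the unique least fixpoint in at most $n$ saturation rounds. Your additional justification of monotonicity of the one-step operator (from the Horn-like shape of $\varphi_{fp}$ and the fact that $\psi_{\textit{step}}$ is quantifier-free) makes explicit a point the paper's proof leaves implicit, but it does not change the argument.
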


\begin{proof}
Let $\varphi$ be a fixed \fpfp formula and let $K$ be a tree-shaped Kripke structure.
We recursively evaluate $\varphi$ over $K$ as follows.
\begin{itemize}[nosep]
    \item Every second-order (fixpoint) quantifier $(X, \curlyvee, \varphi_{fp}).\varphi'$ quantifies over exactly one set $X$, which we iteratively construct by following the fixpoint formula $\varphi_{fp}$.
We can keep track of $X$ by marking leaves in the tree that correspond to traces in $X$.
We evaluate the inner formula $\varphi'$ under the current instantiation of $X$.
\item For every first-order quantifier $\mathbb Q \pi\in X.\varphi'$, we iterate over every possible instantiation of~$\pi$ and evaluate if $\varphi'$ holds for the current instantiation.
\end{itemize}
We prove that this process is polynomial in the size of the structure, by induction over Fixpoint Hyper$^2$LTL$_{fp}$ formulas:
\begin{itemize} [nosep]
\item If $\varphi$ is quantifier-free, it can be evaluated 
in polynomial time under fixed trace and trace-set assignments (Cor.~\ref{lemma:noquantifiers}).
\item For $\varphi = \forall\pi\in X.\varphi'$ or $ \varphi =\exists\pi\in X.\varphi'$: 
By induction hypothesis, $\varphi'$ can be evaluated in polynomial time under a fixed trace assignment. 
The number of possible assignments for $\pi$ is polynomial in the size of the Kripke structure (due to Obs.~\ref{proofs:treeremark}), and therefore 
the satisfaction of first-order quantifiers can be evaluated in polynomial time in $K$.
\item For $\varphi = (X, \curlyvee,\varphi_{fp}).\varphi'$: 
We only have
restricted fixpoint sets, which are uniquely defined by the inner formula $\varphi_{fp}$.
Each set can contain at most polynomially many traces (Obs.~\ref{proofs:treeremark}), so 
we need at most polynomially many fixpoint iterations to evaluate it.
Each such iteration can be done in polynomial time, by the induction hypothesis.\qed
\end{itemize}
\end{proof}

\subsection{P-hardness of \mcfpfptree} \label{sec:hornsat}

To show 
P-hardness we reduce from the Horn satisfiability problem.

\begin{definition}[Horn-satisfiability~\cite{CompuComplexity}] 
Let $L = \{x_1, \dots, x_k\}$ be a set of $k$ variables, and let $h = c_1 \wedge c_2 \wedge \ldots \wedge c_n$ be a formula 
consisting of a conjunction of~$n$ Horn clauses, each of them of the form $c_i = (\neg l_1\lor \neg l_2\lor l_3)$, where $l_1, l_2, l_3\in L \cup \{\top,\bot \}$.
The Horn-satisfiability problem is to find a Boolean assignment for the variables $x_1, \dots, x_k$ such that $h$ is satisfied.
\end{definition}

\begin{theorem}\label{proofs:fppcomplete:complete}
	\mcfpfptree is P-complete.
\end{theorem}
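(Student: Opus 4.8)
The plan is to combine the membership result of Lemma~\ref{proofs:fppcomplete:inp}, which already gives that \mcfpfptree is in P, with a logarithmic-space reduction from Horn-satisfiability, which is P-complete. The guiding intuition is that the least-fixpoint semantics of the $(X,\curlyvee,\varphi_{fp})$ quantifier is precisely the forward-chaining (unit-propagation) procedure that decides Horn-SAT in polynomial time: the set $X$ plays the role of the set of variables that are forced to be true, and each fixpoint iteration mimics one round of applying the Horn implications.

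Concretely, given a Horn formula $h = c_1 \wedge \dots \wedge c_n$ over $L = \{x_1,\dots,x_k\}$, I would read each clause $c_i$ as an implication $(l_1 \wedge l_2) \to l_3$ with $l_1,l_2,l_3 \in L \cup \{\top,\bot\}$, padding missing premises with $\top$ and using $\bot$ for goal clauses. I build a tree-shaped Kripke structure $K_h$ whose traces, all of equal length $O(\log k)$, come in two kinds: one \emph{variable trace} per element of $L \cup \{\top,\bot\}$, encoding that element's index in binary with a proposition $p$ and carrying a marker ($\mathit{var}$, plus $\mathit{top}$ or $\mathit{bot}$ for the two special elements); and one \emph{clause trace} per $c_i$, marked $\mathit{cl}$, encoding the three indices of $l_1,l_2,l_3$ in binary using three separate propositions $p_1,p_2,p_3$. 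Since all markers and index bits are placed at positions $\ge 1$, they may differ across traces even though the shared root fixes position $0$; each trace is then a simple chain ending in a self-loop leaf, so $K_h$ is genuinely tree-shaped and constructible in logarithmic space (each output bit is a bit of a binary index).

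The crucial point is that the entire Horn instance is encoded in $K_h$, so that a \emph{single fixed} \fpfp formula $\varphi$ suffices, matching the data-complexity statement established in Lemma~\ref{proofs:fppcomplete:inp}. I would take $\varphi = (X,\curlyvee,\varphi_{fp}).\,\forall \pi \in X.\,\neg \LTLfinally \mathit{bot}_\pi$, where $\varphi_{fp}$ is the conjunction of two generic conjuncts. The first seeds $X$ with the $\top$-trace unconditionally via $\forall \pi \in \mathfrak{G}.\,(\LTLfinally\mathit{top}_\pi) \to \pi \triangleright X$. The second encodes one forward-chaining step, quantifying over all clause traces at once:
\[
\forall \pi_1 \in X.\,\forall \pi_2 \in X.\,\forall w \in \mathfrak{G}.\,\forall v \in \mathfrak{G}.\; \psi_{\textit{step}} \to v \triangleright X ,
\]
where $\psi_{\textit{step}}$ is the quantifier-free conjunction asserting that $w$ is a clause trace, $v$ a variable trace, and the indices match, i.e. $\LTLfinally\mathit{cl}_w \wedge \LTLfinally\mathit{var}_v \wedge \LTLglobally(p_{\pi_1} \leftrightarrow (p_1)_w) \wedge \LTLglobally(p_{\pi_2} \leftrightarrow (p_2)_w) \wedge \LTLglobally(p_v \leftrightarrow (p_3)_w)$. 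This fits the required shape of $\varphi_{fp}$: every set quantifier ranges over $X$ or $\mathfrak{G}$, and the added trace is the quantified variable $v$. Note that this $\varphi$ is independent of $h$ (the global comparisons do not mention the number of index bits), so the reduction targets a fixed formula as needed.

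Finally I would argue correctness. Each conjunct of $\varphi_{fp}$ is monotone in $X$ (both the guards $\pi_i \in X$ and the conclusion $v \triangleright X$ are monotone), so $\textit{sol}$ returns the unique least fixpoint $X^\ast$ of the induced operator, which by construction equals the forward-chaining closure of the facts under the definite clauses, with $\bot$ added exactly when some goal clause becomes applicable (and $\bot$ never appears as a premise, so it triggers nothing further). By the standard Horn-SAT characterization, $h$ is satisfiable iff this closure omits $\bot$, i.e. iff $K_h \models \varphi$. The main obstacle I anticipate is not the correctness argument itself but packaging the forward-chaining step into the syntactically restricted $\varphi_{fp}$ with a single $h$-independent formula: ensuring the index-matching comparisons are genuinely quantifier-free, that the markers $\top$ and $\bot$ let facts, goal clauses, and ordinary definite clauses be handled uniformly through the same generic conjunct, and that the least set returned by $\textit{sol}$ coincides with the least Horn model rather than any larger fixpoint.
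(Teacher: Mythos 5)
Your proposal is correct and follows the paper's overall strategy --- membership from Lemma~\ref{proofs:fppcomplete:inp} plus a logarithmic-space reduction from Horn-satisfiability --- but the reduction itself is a genuinely different encoding. The paper keeps \emph{two} branches per variable (one for each polarity), seeds the fixpoint set with the traces asserting that $\top$ is true and $\bot$ is false, and propagates \emph{forced assignments in both polarities}: a trace $\beta$ enters $A$ only when some clause together with two assignment traces already in $A$ leaves no other choice for the third literal, and satisfiability is then read off as \emph{consistency} of $A$ (no variable present in both polarities). You instead keep a \emph{single} trace per element of $L\cup\{\top,\bot\}$, seed with $\top$, run one-sided forward chaining on positive literals, and read off satisfiability as ``$\bot$ is not derived''. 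Both are sound characterizations of Horn-SAT; yours is the textbook minimal-model argument and yields a smaller structure, a single generic step conjunct, and a simpler final check, while the paper's two-polarity version trades this for a symmetric consistency condition at the end. Your syntactic packaging also works: both conjuncts of $\varphi_{fp}$ have the required shape $\forall\pi_1\in X_1.\dots\forall\pi_n\in X_n.\,\psi_{\mathit{step}}\to\pi_M\triangleright X$ with set variables drawn from $\{X,\mathfrak G\}$, the comparisons $\LTLglobally(p_{\pi}\leftrightarrow (p_i)_w)$ are quantifier-free and independent of the bit-length (so the formula is fixed, as data complexity requires), and since every conjunct is a monotone closure condition the family of sets satisfying $\varphi_{fp}$ is intersection-closed, so $\mathit{sol}$ is the singleton containing the forward-chaining closure, exactly as you claim. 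One small point to make explicit in a full write-up: because the seed and the step rule only ever add $\mathit{var}$-marked traces, $X$ never contains clause traces, so the global comparisons on $\pi_1,\pi_2$ cannot fire spuriously against a clause trace's absent $p$-labels.
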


\begin{proof}
	Containment in P follows from \Cref{proofs:fppcomplete:inp}. 
    We show P-hardness by a log-space and polynomial time reduction from~the Horn Satisfiability problem, which is P-hard~\cite{CompuComplexity}.
	Given a Horn formula $h$,
	we construct a tree structure $K$ and a \fpfp formula~$\varphi$ (which does not depend on $h$) such that $K\models\varphi$ iff $h$ is true. We hereby describe our construction and refer to~App.~\ref{app:fppcomplete} for its full correctness proof. Here, and in the following proofs throughout the paper, when we use \textbf{encode} we mean \textbf{encode in binary} (as we stated in Sec.~\ref{preliminaries}, all our encodings are in binary). 
Intuitively, we build a tree structure $K$ that has,
for each variable $x_i$, one branch representing a positive assignment to $x_i$, and 
one branch representing a negative assignment to $x_i$.
Additionally, $K$ has one branch for every clause~$c_i$.
For simplicity, we add variables $x_{k+1}$ and $x_{k+2}$ to represent $\top$ and $\bot$, and restrict the valid assignments such that $x_{k+1}$ has to be assigned {true} and $x_{k+2}$ has to be assigned {false}.
On every branch of $K$, we encode the index of the corresponding variable(s) with atomic propositions $\textit{pos, neg}_1,\textit{neg}_2$.

The \fpfp formula constructs a set $A$ representing an assignment to~$h$. Intuitively, if $A$ contains the positive trace of a variable $x_i$, it means that the corresponding assignment assigns $x_i$ with true; and if $A$ contains the negative trace of $x_i$, then $x_i$ is assigned with false.
We define 
$A$ iteratively by adding an assignment trace to $A$ only if there is a clause that cannot be satisfied if that variable would be assigned to the other value.

We now describe our reduction formally. 
Let $h = c_1 \wedge c_2 \wedge \ldots \wedge c_n$ be a Horn formula with $k$ variables. 
We build a Kripke structure $K = (S, s_0, \delta, L)$ whose underlying graph is a rooted tree. We refer to Fig.~\ref{proofs:fppcomplete:example} for an example.
The state-space $S$ consists of: the initial state~$s_0$,  $2\cdot\lceil\log_2(k+3)\rceil$ states for each variable (including $\top, \bot$), and $\lceil\log_2(k+3)\rceil$ states for each clause in $h$.
We denote the states corresponding to a variable $x_i$ by $s_{i, 1}, s_{i, 2}, \cdots$ and $s'_{i, 1}, s'_{i,2},\cdots$.
The states corresponding to the $i$-th clause of $h$ are denoted $t_{i, 1}, t_{i, 2},\cdots$.
The relation $\delta$ is: 
\begin{align*}
\delta(s_0) &= \{s_{i,1}\mid i\in[1,k]\}\cup\{s'_{i, 1}\mid i\in[1,k]\}\cup\{t_{j, 1}\mid j\in[1, n]\}\\
\delta(\sigma_{i,m}) &= \left\{\begin{matrix}
\{\sigma_{i, m+1}\}&\text{if }m < \lceil\log_2(k+3)\rceil\\
\{\sigma_{i,m}\}&\text{otherwise}\end{matrix}\right.\quad\quad\forall \sigma\in\{s, s', t\}
\end{align*}
That is, there is a transition from $s_0$ to the states $s_{i, 1}, s'_{i, 1}$ for $i\in[1, k]$ and $t_{j, 1}$ for $j\in[1, n]$, and there are branches of the type $s,s',t$ for each $i,j$. Note that each trace can contain states only of one of the types $s,s'$, or $t$, as this is a tree-shaped structure. 
The set $AP$ is $\{\textit{pos},\textit{neg}_1, \textit{neg}_2, \textit{c}, \textit{a}\}$ where $c$ denotes clause branches, and $a$ denotes the branches corresponding to $\top$ and $\bot$.
Traces consisting of states $s_{i, m}$
encode $i$ with $\textit{pos}$, and traces consisting of states $s'_{i, m}$
encode $i$ with $\textit{neg}_1$.
A trace consisting of the states $t_{i, m}$ representing the $i$-th clause $(\neg x_d \vee \neg x_e \vee x_f)$ encodes: $d$ with $\textit{neg}_1$, $e$ with $\textit{neg}_2$, and $f$ with $\textit{pos}$.
This results in a tree-shaped Kripke structure:
For each variable $x_i$, we have two branches representing a positive and negative assignment to $x_i$, and for each clause we have a branch labeled with $c$.
The branches representing correct assignment to $\top$ and $\bot$ are marked with \textit{a}.
The \fpfp formula $\varphi$, for which we model check $K$, is: 
$\varphi: = (A, \curlyvee, \varphi_{fp}\land\varphi'_{fp}).
\forall\pi\in A.\forall\pi'\in A. \neg\square(\textit{pos}_\pi\leftrightarrow\textit{neg}_{1,\pi'})$.
Where $\varphi'_{fp} = \forall\pi\in \mathfrak G. \bigcirc a_\pi\rightarrow\pi~\triangleright A$, and 
$\varphi_{fp}$ is defined as follows:
\begin{align*}
&\forall \pi\in \mathfrak G.\forall \alpha\in A.\forall\alpha'\in A.\forall\beta\in \mathfrak G.
\LTLnext c_\pi\land \neg\LTLnext c_\alpha \land \neg\LTLnext c_{\alpha'}\land\neg\LTLnext c_\beta & (1) \\
&\land(\square(\textit{pos}_\alpha\leftrightarrow\textit{neg}_{1,\pi})
\land\square(\textit{pos}_{\alpha'}\leftrightarrow\textit{neg}_{2,\pi})
\land\square(\textit{pos}_{\beta}\leftrightarrow\textit{pos}_\pi) & (2)\\
&\lor \square(\textit{pos}_\alpha\leftrightarrow\textit{neg}_{1,\pi})
\land\square(\textit{neg}_{1,\alpha'}\leftrightarrow\textit{pos}_\pi)
\land\square(\textit{neg}_{1,\beta}\leftrightarrow\textit{neg}_{2,\pi}) & (3)\\
&\lor \square(\textit{pos}_\alpha\leftrightarrow\textit{neg}_{2,\pi})
\land\square(\textit{neg}_{1,\alpha'}\leftrightarrow\textit{pos}_\pi)
\land\square(\textit{neg}_{1,\beta}\leftrightarrow\textit{neg}_{1,\pi})) \rightarrow \beta~\triangleright A & (4)
\end{align*}
$\varphi$ quantifies over the smallest set $A$, representing the current assignment to $h$, and is only satisfied if there is no variable for which both the positive the negative traces are in $A$. 
$\varphi_{fp}$ ensures that a trace $\beta$ is only added to $A$ (line~4, right) if there exist assignment traces $\alpha, \alpha'$ which are already in $A$ and a clause trace~$\pi$ (line~1), which satisfy the condition in lines 2-4.
This condition (lines 2-4) holds exactly if $\alpha$ and~$\alpha'$ represent an assignment to two of the variables in the clause $c$ represented by $\pi$, such that $c$ can only be satisfied if the third variable, represented by $\beta$, is assigned according to~$\beta$.
This is the case only if $\alpha$ and $\alpha'$ represent a positive assignment to both negative variables and $\beta$ represents a positive assignment to the positive variable (line 2) or $\alpha$ represents a positive assignment to one of the negative variables and $\alpha'$ represents a negative assignment to the positive variable and $\beta$ represents a negative assignment to the other negative variable (lines 3-4).
Then $h$ is satisfiable iff $\varphi$ holds in $K$, and since $K$ is polynomial in the size of~$h$, we conclude the~proof.\qed
\begin{figure}[t]
\begin{minipage}{.47\textwidth}
    \resizebox{1\linewidth}{!}{
\begin{tikzpicture}[shorten >=1pt,node distance=2cm,on grid,auto]
  \tikzstyle{every state}=[fill={rgb:black,0;white,10},minimum size=40pt,inner sep=0pt]
  \node[state,initial, ] (s_0)  {$\{\}$};
  \node[state, node distance=3cm] (x31) [below of=s_0] {$\{\textit{pos}, a\}$};
  \node[state, node distance=1.5cm] (nx21) [left of=x31] {$\{\}$};
  \node[state, node distance=1.5cm] (x21) [left of=nx21] {$\{\}$};
  \node[state, node distance=1.5cm] (nx11) [left of=x21] {$\{\textit{neg}_1\}$};
  \node[state, node distance=1.5cm] (x11)  [left of=nx11] {$\{\textit{pos}\}$};
  \node[state, node distance=1.5cm] (nx31) [right of=x31] {$\{\textit{neg}_1\}$};
  \node[state, node distance=1.5cm] (x41) [right of=nx31] {$\{\}$};
  \node[state, node distance=1.5cm] (nx41) [right of=x41] {$\{a\}$};
  \node[state, node distance=1.5cm, align=center] (c11) [right of=nx41] {$\{\textit{neg}_1,$\\$\textit{neg}_2, c\}$};
  \node[state, node distance=1.5cm] (c21) [right of=c11] {$\{\textit{neg}_1, c\}$};
  \node[state] (x32) [below of=x31] {$\{\textit{pos}\}$};
  \node[state] (nx22) [below of=nx21] {$\{\textit{neg}_1\}$};
  \node[state] (x22) [below of=x21] {$\{\textit{pos}\}$};
  \node[state] (nx12) [below of=nx11] {$\{\}$};
  \node[state] (x12) [below of=x11] {$\{\}$};
  \node[state] (nx32) [below of=nx31] {$\{\textit{neg}_1\}$};
  \node[state] (x42) [below of=x41] {$\{\}$};
  \node[state] (nx42) [below of=nx41] {$\{\}$};
  \node[state, align=center] (c12) [below of=c11] {$\{\textit{neg}_2$\\$\textit{pos}\}$};
  \node[state] (c22) [below of=c21] {$\{\textit{neg}_2\}$};
  \node[state] (x33) [below of=x32] {$\{\}$};
  \node[state] (nx23) [below of=nx22] {$\{\}$};
  \node[state] (x23) [below of=x22] {$\{\}$};
  \node[state] (nx13) [below of=nx12] {$\{\}$};
  \node[state] (x13) [below of=x12] {$\{\}$};
  \node[state] (nx33) [below of=nx32] {$\{\}$};
  \node[state] (x43) [below of=x42] {$\{\textit{pos}\}$};
  \node[state] (nx43) [below of=nx42] {$\{\textit{neg}_1\}$};
  \node[state] (c13) [below of=c12] {$\{\}$};
  \node[state] (c23) [below of=c22] {$\{\textit{pos}\}$};
  \path[->]
  (s_0) edge  node {} (x31)
  (s_0) edge  node {} (nx21)
  (s_0) edge  node {} (x21)
  (s_0) edge  node {} (nx11)
  (s_0) edge [bend right=10] node {} (x11)
  (s_0) edge  node {} (nx31)
  (s_0) edge  node {} (x41)
  (s_0) edge  node {} (nx41)
  (s_0) edge [bend left=5] node {} (c11)
  (s_0) edge [bend left=10] node {} (c21);
  \path[->]
  (x31)  edge  node {} (x32)
  (nx21) edge  node {} (nx22)
  (x21)  edge  node {} (x22)
  (nx11) edge  node {} (nx12)
  (x11)  edge  node {} (x12)
  (nx31) edge  node {} (nx32)
  (x41)  edge  node {} (x42)
  (nx41) edge  node {} (nx42)
  (c11)  edge  node {} (c12)
  (c21)  edge  node {} (c22);
  \path[->]
  (x32)  edge  node {} (x33)
  (nx22) edge  node {} (nx23)
  (x22)  edge  node {} (x23)
  (nx12) edge  node {} (nx13)
  (x12)  edge  node {} (x13)
  (nx32) edge  node {} (nx33)
  (x42)  edge  node {} (x43)
  (nx42) edge  node {} (nx43)
  (c12)  edge  node {} (c13)
  (c22)  edge  node {} (c23);
  \path[->]
  (x33)  edge [loop below] node {} ( )
  (nx23) edge [loop below] node {} ( )
  (x23)  edge [loop below] node {} ( )
  (nx13)  edge [loop below] node {} ( )
  (x13)  edge [loop below] node {} ( )
  (nx33) edge [loop below] node {} ( )
  (x43) edge [loop below] node {} ( )
  (nx43)  edge [loop below] node {} ( )
  (c13)  edge [loop below] node {} ( )
  (c23)  edge [loop below] node {} ( );
\end{tikzpicture}
}
\caption{The tree-structure for the Horn-formula $(\neg x_1\lor\neg\top\lor x_2)\land(\neg x_1\lor\neg x_2\lor\bot)$. The two rightmost branches represent the two clauses. The rest of the branches (from left to right) represent positive and negative values for $x_1, x_2$, $\top$ (i.e., $x_3$), and $\bot$  (i.e., $x_4$).}
\label{proofs:fppcomplete:example}
\end{minipage}\hfill
\begin{minipage}{.47\textwidth}
    \begin{center}
\resizebox{1\linewidth}{!}{
    \begin{tikzpicture}[shorten >=1pt,node distance=2cm,on grid,auto]
      \tikzstyle{every state}=[fill={rgb:black,0;white,10},minimum size=40pt,inner sep=0pt]
    
      \node[state,initial, ] (s_0)  {$\{\}$};
      \node[state, align=center, node distance=3cm] (x21) [below of=s_0] {$\{\}$};
      \node[state, node distance=1.75cm] (nx11) [left of=x21] {$\{\textit{neg}\}$};
      \node[state, node distance=1.5cm] (x11) [left of=nx11] {$\{\textit{pos}\}$};
      \node[state, node distance=1.75cm] (q21) [left of=x11] {$\{\}$};
      \node[state, node distance=1.5cm] (q11) [left of=q21] {$\{q, v\}$};
      \node[state, node distance=1.5cm] (nx21) [right of=x21] {$\{\}$};
      \node[state, align=center, node distance=1.75cm] (no11) [right of=nx21] {$\{\textit{eneg},f\}$};
      \node[state, align=center, node distance=1.5cm] (o11) [right of=no11] {$\{\textit{epos},f\}$};
      \node[state, node distance=1.5cm] (o21) [right of=o11] {$\{f\}$};
    
      \node[state] (x22) [below of=x21] {$\{\textit{pos}\}$};
      \node[state] (nx12) [below of=nx11] {$\{\}$};
      \node[state] (x12) [below of=x11] {$\{\}$};
      \node[state] (q22) [below of=q21] {$\{q, v\}$};
      \node[state] (q12) [below of=q11] {$\{\}$};
      \node[state] (nx22) [below of=nx21] {$\{\textit{neg}\}$};
      \node[state] (no12) [below of=no11] {$\{\textit{eneg'}\}$};
      \node[state] (o12) [below of=o11] {$\{\}$};
      \node[state] (o22) [below of=o21] {$\{\textit{epos}\}$};
    
      \node[state] (x23) [below of=x22] {$\{\}$};
      \node[state] (nx13) [below of=nx12] {$\{\}$};
      \node[state] (x13) [below of=x12] {$\{\}$};
      \node[state] (q23) [below of=q22] {$\{\}$};
      \node[state] (q13) [below of=q12] {$\{\}$};
      \node[state] (nx23) [below of=nx22] {$\{\}$};
      \node[state] (no13) [below of=no12] {$\{\textit{neg}\}$};
      \node[state] (o13) [below of=o12] {$\{\textit{pos}\}$};
      \node[state] (o23) [below of=o22] {$\{\textit{pos}\}$};

      \path[->]
      (s_0) edge  node {} (x21)
      (s_0) edge  node {} (nx11)
      (s_0) edge  node {} (x11)
      (s_0) edge  node {} (q21)
      (s_0) edge [bend right=10] node {} (q11)
      (s_0) edge  node {} (nx21)
      (s_0) edge  node {} (no11)
      (s_0) edge  node {} (o11)
      (s_0) edge [bend left=10] node {} (o21);
    
      \path[->]
      (x21)  edge node {} (x22)
      (nx11) edge node {} (nx12)
      (x11)  edge node {} (x12)
      (q21)  edge node {} (q22)
      (q11)  edge node {} (q12)
      (nx21) edge node {} (nx22)
      (no11) edge node {} (no12)
      (o11)  edge node {} (o12)
      (o21)  edge node {} (o22);
    
      \path[->]
      (x22)  edge node {} (x23)
      (nx12) edge node {} (nx13)
      (x12)  edge node {} (x13)
      (q22)  edge node {} (q23)
      (q12)  edge node {} (q13)
      (nx22) edge node {} (nx23)
      (no12) edge node {} (no13)
      (o12)  edge node {} (o13)
      (o22)  edge node {} (o23);
    
      \path[->]
      (x23)  edge [loop below] node {} ( )
      (nx13) edge [loop below] node {} ( )
      (x13)  edge [loop below] node {} ( )
      (q23)  edge [loop below] node {} ( )
      (q13)  edge [loop below] node {} ( )
      (nx23) edge [loop below] node {} ( )
      (no13) edge [loop below] node {} ( )
      (o13)  edge [loop below] node {} ( )
      (o23)  edge [loop below] node {} ( );
    \end{tikzpicture}
}
\end{center}
\caption{Kripke structure built for the QBF formula $\forall x_1.\exists x_2. x_1\lor x_2$}
\label{proofs:hierarchycomplete:example}
\end{minipage}
\end{figure}

\end{proof}

	\section{EXP-completeness of \mcfpfpdag}\label{proofs:fpexpcomplete}
In this section, we show that \mcfpfpdag is EXP-complete.
The full proofs for \Cref{proofs:fpexpcomplete:inexp} and Thm.~\ref{proofs:fpexpcomplete:complete} can be found in App.~\ref{app:fpexpcomplete}.
\begin{observation}
\label{proofs:dagremark}
An acyclic Kripke structure with $n$ states has at most $2^n$ traces.
\end{observation}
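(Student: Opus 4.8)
The plan is to bound the number of traces by the number of maximal paths in the underlying graph, and then to bound the latter by $2^n$ using a simple counting argument over the DAG structure.

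First I would observe that, by the definition of an acyclic Kripke structure, the only cycles in $(S,\delta)$ are self-loops at states whose sole outgoing transition is that self-loop (call these \emph{sinks}). Removing the self-loops therefore yields a genuine directed acyclic graph on $S$, which admits a topological ordering $v_1,\dots,v_n$, where every edge goes from a smaller to a larger index. Every trace of $K$ is generated by a run starting in $s_0$, and since the non-self-loop part of the graph is acyclic, any such run must, after finitely many steps, reach a sink and then follow its self-loop forever. Consequently each trace is induced by a finite simple path $s_0 = u_0 \to u_1 \to \cdots \to u_m$ ending in a sink $u_m$, followed by the trailing repetition of $u_m$. As each trace arises from at least one such maximal path, the number of traces is at most the number of maximal paths.

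The key step is to show that a path in a DAG is completely determined by the \emph{set} of vertices it visits. Indeed, consecutive vertices of a path are joined by an edge, so the topological index strictly increases along any path; hence the vertices of a path, listed in the order they occur, are exactly the elements of their underlying set sorted by topological order. Thus the map sending a maximal path to its vertex set is injective, and the number of maximal paths is at most the number of subsets of $S$, namely $2^n$. Combining this with the previous paragraph yields at most $2^n$ traces.

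I expect the main subtlety to lie in the careful handling of the self-loop convention together with the infinite-trace semantics: one must justify that every infinite run collapses to a unique finite simple path plus a trailing self-loop, so that counting finite maximal paths genuinely bounds the number of traces. If one prefers an explicit computation to the subset-injection step, an alternative is a reverse-topological induction showing that the number $f(v_i)$ of maximal paths starting at $v_i$ satisfies $f(v_i) \le \sum_{j>i} f(v_j) \le 2^{n-i}-1 < 2^{n-i}$, which again yields the $2^n$ bound (in fact a slightly tighter one); either route closes the argument.
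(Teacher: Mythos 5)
Your proof is correct. The paper states this as an unproved observation, and your argument --- every trace arises from a simple path in the DAG obtained by deleting the terminal self-loops, and such a path is determined by its vertex set because the topological index strictly increases along it, giving at most $2^n$ paths and hence at most $2^n$ traces --- is precisely the counting intuition the paper leaves implicit.
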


\begin{lemma}\label{proofs:fpexpcomplete:inexp}
\mcfpfpdag is in EXP. 
\end{lemma}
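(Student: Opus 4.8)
The plan is to mirror the recursive evaluation used for the tree case in \Cref{proofs:fppcomplete:inp}, and to track how the bounds change when the structure is acyclic rather than tree-shaped. The one structural difference we must pay for is the number of traces: by \Cref{proofs:dagremark}, an acyclic Kripke structure with $n$ states has at most $2^n$ traces, each of length at most $n$ (every trace corresponds to a finite path from $s_0$ to a state whose only outgoing edge is a self-loop, and such a path visits each state at most once). As a first step I would therefore materialize $\textit{Traces}(K)$ explicitly as a finite set $T$ with $|T|\le 2^n$; this enumeration can be done by a depth-first traversal of the underlying DAG in time $2^{O(n)}$.

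Given $T$, I would run exactly the recursive procedure of \Cref{proofs:fppcomplete:inp}, re-deriving each of its three cases over $T$ in place of over a tree. For a quantifier-free subformula, under a fixed trace (and trace-set) assignment, \Cref{cor:noquantifiers} gives evaluation in time polynomial in $|T|$, hence $2^{O(n)}$. For a first-order quantifier $\mathbb Q\,\pi\in X.\varphi'$, we range $\pi$ over the at most $2^n$ members of $\Delta(X)\subseteq T$ and evaluate $\varphi'$ for each, contributing a multiplicative factor of $2^n$. For a fixpoint quantifier $(X,\curlyvee,\varphi_{fp}).\varphi'$, the unique least solution $A$ can be computed by the standard saturation: starting from $A=\emptyset$ we repeatedly add every trace forced by a conjunct of $\varphi_{fp}$; since $A\subseteq T$ grows monotonically and $|T|\le 2^n$, there are at most $2^n$ iterations, and each iteration evaluates the (first-order) formula $\varphi_{fp}$, which by the induction hypothesis already costs at most exponential time.

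The key point, and the place where the analysis must be done with care, is to verify that composing these cases still yields only a single exponential rather than a tower. This is where data complexity is essential: the formula $\varphi$ is fixed, so it contains only a constant number of (first-order and fixpoint) quantifiers and has constant nesting depth. Writing $N=|T|\le 2^n$, each quantifier or fixpoint layer multiplies the running time by at most $\mathrm{poly}(N)$, so the total is bounded by $N^{O(1)}\cdot\mathrm{poly}(N)=2^{O(n)}$. In particular the nested first-order quantifiers contribute at most $N^{k}=(2^n)^{k}=2^{O(n)}$ for the constant $k=|\varphi|$, and nested fixpoints each contribute an independent $\mathrm{poly}(N)$ factor; the product of constantly many exponentials in $n$ is again exponential in $n$. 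Hence the whole procedure runs in deterministic time $2^{O(n)}$, placing \mcfpfpdag in EXP.

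An equivalent and arguably cleaner way to package this is by a reduction to the tree case: unfolding the acyclic structure $K$ into its computation tree $K'$ preserves $\textit{Traces}$, and since the model checking relation depends on $K$ only through $\textit{Traces}(K)$, we have $K\models\varphi$ iff $K'\models\varphi$. The tree $K'$ has at most $n\cdot 2^n$ states, so applying the polynomial-time algorithm of \Cref{proofs:fppcomplete:inp} to $K'$ runs in time polynomial in $|K'|=2^{O(n)}$, again yielding an EXP bound. Either way, the main obstacle is precisely the bookkeeping of the preceding paragraph: confirming that the exponential number of traces, together with constantly many nested quantifiers and fixpoints, never escalates beyond a single exponential.
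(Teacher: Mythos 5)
Your proposal is correct and, in its final paragraph, is exactly the paper's proof: unroll the acyclic structure into a tree with at most $n\cdot 2^n$ states (preserving the trace set) and run the polynomial-time algorithm of \Cref{proofs:fppcomplete:inp} on it, yielding an exponential bound. The preceding direct analysis over the materialized trace set is just a more explicit account of the same algorithm, with the (correct) bookkeeping that a fixed formula contributes only constantly many $\mathrm{poly}(2^n)$ factors.
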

\begin{proof}[Sketch]
Let $K$ be an acyclic Kripke structure. 
We can unroll $K$  to a tree-structure~$T$, in which the initial state branches to the $2^n$ traces of $K$ (cf.~Obs.~\ref{proofs:dagremark}): the  initial state of $T$ has $2^n$ transitions, and each branch is a linear branch corresponding to one trace of $K$.
We can then apply the algorithm for \mcfpfptree which is polynomial in the size of $T$, and exponential in the size of $K$. \qed
\end{proof}

\begin{theorem}\label{proofs:fpexpcomplete:complete}
\mcfpfpdag is EXP-complete.
\end{theorem}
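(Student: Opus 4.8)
Containment in EXP is \Cref{proofs:fpexpcomplete:inexp}, so the work is to establish EXP-hardness, and since we measure data complexity the reduction must produce a \emph{fixed} \fpfp formula together with an acyclic Kripke structure whose size is polynomial in the input. The plan is to reduce from the acceptance problem of a fixed alternating Turing machine running in polynomial space, which is EXPTIME-complete (alternating polynomial space equals deterministic exponential time, by the theorem of Chandra, Kozen and Stockmeyer). This choice is the natural exponential lift of the tree case: there we reduced from Horn-satisfiability and used the least fixpoint of \fpfp to perform Horn forward chaining; here the least fixpoint will instead compute \emph{alternating reachability} on the configuration graph of the machine, which is exactly a Horn closure over the (now exponentially many) configurations, with the transition rules playing the role of the Horn clauses.

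Fix an input $w$ with $|w| = n$; the machine uses space $m = \mathrm{poly}(n)$, so each configuration is encoded by a word of length $O(m)$ over a fixed alphabet, and we may assume each configuration has exactly two successors (a standard normal form for alternating machines). The first step is to build an acyclic structure $K$ whose traces enumerate all configuration encodings. Exploiting \Cref{proofs:dagremark} in the tight direction, I would use a layered ``diamond'' DAG: for each position $i \le O(m)$ a constant number of states labeled with the possible symbols at position $i$, each connected to every state of layer $i+1$. This gadget has $O(m) = \mathrm{poly}(n)$ states but its traces are exactly the $2^{O(m)}$ words of the right length, i.e.\ all candidate configurations, with time step $=$ tape position. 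To $K$ I would add one dedicated linear branch that spells out the initial configuration $c_0(w)$ and carries a distinguished proposition $\mathit{init}$; this lets the formula locate $c_0$ by matching on the configuration propositions via $=_{\textit{AP}}$.

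The fixed \fpfp formula $\varphi$ then quantifies the least fixpoint $A$ of ``accepting configurations.'' Its fixpoint body $\varphi_{fp}$ is a conjunction of implications of the permitted shape $\forall \pi_1 \in X_1 \dots \forall \pi_\ell \in X_\ell.\ \psi_{\mathit{step}} \rightarrow \pi_M \triangleright A$: a base rule $\forall c \in \mathfrak G.\ (\text{$c$ accepting-halting}) \rightarrow c \triangleright A$; an existential rule $\forall c \in \mathfrak G.\forall c' \in A.\ (\text{$c$ existential} \wedge \text{$c'$ a successor of $c$}) \rightarrow c \triangleright A$; and a universal rule $\forall c \in \mathfrak G. \forall c_1 \in A. \forall c_2 \in A.\ (\text{$c$ universal} \wedge \text{$c_1,c_2$ are the two successors of $c$}) \rightarrow c \triangleright A$. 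Since least fixpoints in \fpfp are exactly the closure under these rules, $A$ becomes the set of accepting configurations, matching the alternating acceptance semantics. The outer formula finally asserts that $c_0$ is accepting, e.g.\ $\exists \pi \in \mathfrak G.\ \exists \pi' \in A.\ \mathit{init}_\pi \wedge \bigwedge_{p}\LTLglobally(p_\pi \leftrightarrow p_{\pi'})$ over the configuration propositions $p$. Because $K$ is polynomial in $n$ and $\varphi$ is fixed (it depends only on the machine, not on $w$), the equivalence ``$w$ accepted iff $K \models \varphi$'' yields EXP-hardness in $|K|$, and together with \Cref{proofs:fpexpcomplete:inexp} the EXP-completeness of \mcfpfpdag.

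The main obstacle is expressing the predicates inside $\psi_{\mathit{step}}$ --- ``$c$ is existential/universal,'' ``$c'$ is a successor of $c$,'' and the accepting-halting test --- as \emph{quantifier-free} temporal formulas over the trace encodings. As in the classical Cook--Levin tableau, successorship is a purely local constraint: the two encodings must agree everywhere except in the constant-size window around the head, where they must be related by a transition of the machine, and this can be written with $\LTLglobally$, $\LTLnext$ and Boolean combinations of the configuration propositions without any quantifier, so that it legitimately sits as the premise $\psi_{\mathit{step}}$ of a fixpoint clause. The only real care needed is to make the successor window alignment work against the DAG's position-indexed labeling, and to ensure the layered gadget yields precisely the well-formed configuration words (ruling out, or harmlessly ignoring, malformed ones); both are handled by additional conjuncts of $\psi_{\mathit{step}}$ and are routine once the encoding is fixed.
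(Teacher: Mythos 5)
Your containment argument coincides with the paper's (\Cref{proofs:fpexpcomplete:inexp}), but your hardness proof takes a genuinely different route. The paper reduces from the Succinct Circuit Value problem via a three-phase construction: a fixpoint set $X$ that evaluates the succinct circuit $\mathcal{C}_S$ on all exponentially many inputs, a set $Y$ that reassembles the scattered output bits into complete input--output pairs (using an \textit{incomplete}-marker trick to grow the output one bit per fixpoint iteration), and a set $Z$ that finally evaluates the represented circuit $\mathcal{C}$. You instead reduce from acceptance of a fixed alternating polynomial-space machine (EXP-complete by APSPACE $=$ EXP), represent the $2^{\mathrm{poly}(n)}$ configurations as the traces of a polynomial-size layered DAG (the tight direction of \Cref{proofs:dagremark}, exactly as the paper exploits it), and let a \emph{single} least fixpoint compute the set of accepting configurations through three Horn-style closure rules (accepting-halting base case, one-successor rule for existential states, two-successor rule for universal states). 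This is arguably the more conceptual argument: it is the direct exponential lift of the paper's own P-hardness proof for trees (Horn forward chaining becomes alternating reachability, i.e., Horn closure over exponentially many configurations), it needs only one fixpoint quantifier instead of four, and your rules fit the required shape $\forall\pi_1\in X_1.\dots\forall\pi_n\in X_n.\psi_{\textit{step}}\rightarrow\pi_M\triangleright X$ with quantifier-free premises, since the successor relation of a Turing machine is a constant-window local condition in the Cook--Levin sense. The price is that you must carry out the tableau-style encoding carefully --- forward-only LTL window conditions for head movement, exactly-one-head-marker well-formedness filters, and the normal form with exactly two successors per configuration --- whereas the paper's circuit reduction avoids any appeal to alternation and machine encodings at the cost of a much longer multi-phase construction. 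I see no gap in your outline; the steps you defer as routine are indeed standard, though they constitute the bulk of the remaining work, comparable in volume to the paper's Phases A--C.
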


\begin{proof}[Sketch]
We provide a polynomial time reduction from the Succinct Circuit Value problem~\cite{CompuComplexity}.
We only provide intuition here. For the full definition of the Succinct Circuit Value problem, and for the full proof, 
see App.~\ref{app:fullproof:fpexpcomplete}.

In the Succinct Circuit Value problem we are given a succinct Boolean representation $\mathcal{C}_S$ of a Boolean circuit $\mathcal{C}$.
The circuit $\mathcal{C}$ has no input gate and a single output gate, and the decision problem is to
decide, given the succinct representation  $\mathcal{C}_S$, whether the output gate of  $\mathcal{C}$ is evaluated to true.
The inputs to the circuit~$\mathcal{C}_S$ are two binary numbers $i$ and $j$, and its outputs are two numbers $q$ and $r$,
where: $i$ and $q$ are indices of gates in $\mathcal{C}$ such that $q$ is the $j$-th neighbor of gate $i$; $r$ represents which kind of gate $i$ is.
This results in an encoding where every input-output pair of $\mathcal{C}_S$ represents exactly one edge in $\mathcal{C}$.

Now, given a succinct circuit $\mathcal{C}_S$, 
we build a Fixpoint Hyper$^2$LTL$_{fp}$ formula $\varphi$ (that does not depend on~$\mathcal C$ or $\mathcal{C}_S$) and a Kripke structure $K$ such that $K\models\varphi$ if and only if $\mathcal{C}$ evaluates to true.
The reduction consists of three phases, A, B, and C, constructing subformulas that define sets $X, Y$, and $Z$.~The first two phases collect the output of $\mathcal{C}_S$ for every input, and the last phase solves the Circuit Value problem on $\mathcal{C}$.
At each phase, we describe both subformulas and substructures, and explain how they correspond. However, the construction of the formulas and the construction of the structures are independent.

Phase A starts by evaluating $\mathcal{C}_S$ under all possible inputs.
For that, our fixpoint formula defines a set $X$ containing traces that represent the values for all edges in $\mathcal{C}_S$ for all possible inputs. 
For each edge  $(u, v)$ in $\mathcal{C}_S$ we add a substructure in $K$, in which 
for every possible input $e$ to $\mathcal{C}_S$ there are two traces encoding~$e$ and $(u, v)$, one is labeled $\textit{pos}$ and one is labeled $\textit{neg}$.
Both traces additionally encode all relevant information about $(u, v)$ such as the gates that it connects and their kinds.
The formula $\varphi$ then defines a set $X$ such that 
a trace labeled with \textit{pos}, encoding input $e$ and representing edge $(u, v)$ of $\mathcal{C}_S$ is contained in $X$ if $(u, v)$ carries positive value when $e$ is given as input to $\mathcal{C}_S$.
Similarly, a trace labeled with \textit{neg},  encoding input $e$ and representing edge $(u, v)$ of $\mathcal{C}_S$ is contained in $X$ if $(u, v)$ carries negative value when $e$ is given as input to $\mathcal{C}_S$.
Thus, $X$ captures the complete state of $\mathcal{C}_S$ under each possible input.

In phase B, we build a set $Y$ that contains, for each input, a trace encoding the input and the corresponding output of $\mathcal{C}_S$.
To do so, we add a substructure to $K$ containing branches encoding arbitrary values with \textit{inp} and \textit{outp}, and can be labeled with \textit{incomplete} at some point.
A trace encoding $e$ with \textit{inp}, $o$ with \textit{outp} and that visits a state labeled with \textit{incomplete} at the $k+1$-st state, encodes that for input $e$, the first $k$ output bits of $\mathcal{C}_S$ are the first $k$ bits of~$o$.
The constructed Fixpoint Hyper$^2$LTL$_{fp}$ formula incrementally adds traces whose \textit{incomplete} label appears later:
Let~$\pi$ be a trace encoding $e$ with \textit{inp}, $o$ with \textit{outp} and visiting a state labeled with \textit{incomplete} at step $k+1$.
We add $\pi$ to $Y$ if there exists a different trace $\pi'\in Y$ that visits a state labeled with \textit{incomplete} at the $k$-th step and encoding the same $o$ in the first $k-1$ steps and the same $e$ as $\pi$.
Additionally, the $k$-th output bit on $\pi$ has to be the $k$-th output bit of $\mathcal{C}_S$ on input~$e$.
Therefore, we additionally require that there is a trace in $X$ that indicates that the $k$-th output gate of $\mathcal{C}_S$ for input $e$ has the same value as the $k$-th bit encoded with $o$ on~$\pi$.
Complete input-output pairs of $\mathcal{C}_S$ can be found on all traces in $Y$ where the \textit{incomplete} mark is pushed the furthest.

In phase C, which is almost analogous to phase A, 
we are left to evaluate the circuit $\mathcal{C}$.
We add another substructure to the structure~$K$, annotating traces from $Y$ with \textit{pos} and \textit{neg}.
The fixpoint formula
defines a set $Z$ representing the state of $\mathcal{C}$.
It adds a trace of~$Y$ representing an edge $(u, v)$ of $\mathcal{C}$ and labeled with \textit{pos} if the edge carries a positive value in $\mathcal{C}$, and the trace labeled with \textit{neg} otherwise.
As $\mathcal{C}$ has no inputs, we do not encode an input to $\mathcal{C}$ on the traces.

The final, second-order quantifier free part of the constructed Fixpoint Hyper$^2$LTL$_{fp}$ formula, is satisfied if and only if there exists a trace in $Z$ indicating that the edge leaving the output gate of $\mathcal{C}$ carries a positive value.\qed
\end{proof}

	\section{The Complexity of \mcfulltree}
We now turn to study the complexity of the model checking problem for the full logic Hyper$^2$LTL, starting with the analysis of tree-shaped models. 

\begin{theorem}\label{proofs:hierarchycomplete:complete}
 The problem of  \textsc{MC[$\Sigma_k$-Hyper$^2$LTL, tree]} is $\Sigma^p_{k+1}$-complete, and the problem of \textsc{MC[$\Pi_k$-Hyper$^2$LTL, tree]}
   $\Pi^p_{k+1}$-complete.
\end{theorem}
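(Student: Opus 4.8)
The plan is to establish both the upper and lower bounds for the two claims simultaneously, exploiting the duality between $\Sigma_k$ and $\Pi_k$ (which mirrors the duality between $\Sigma^p_{k+1}$ and $\Pi^p_{k+1}$). Throughout, recall from Observation~\ref{proofs:treeremark} that a tree-shaped Kripke structure with $n$ states has at most $n$ traces, and by Corollary~\ref{lemma:noquantifiers} that a quantifier-free Hyper$^2$LTL formula can be evaluated in polynomial time under fixed assignments. The outermost second-order quantifier determines membership in $\Sigma$ or $\Pi$, and the $k$ second-order alternations will account for the offset to level $k+1$.

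For the \textbf{upper bound}, I would analyze the structure of a $\Sigma_k$-Hyper$^2$LTL formula by describing an alternating polynomial-time procedure (equivalently, a polynomial-time oracle hierarchy). Each second-order quantifier $\exists X$ or $\forall X$ ranges over subsets $A \subseteq \textit{Traces}(K)$; since there are at most $n$ traces, each such subset is encoded by a polynomial-size (in fact $n$-bit) witness, so guessing or universally branching over $A$ costs one existential or universal polynomial-size nondeterministic phase. The first-order quantifiers $\exists \pi \in X$ and $\forall \pi \in X$ range over the at most $n$ traces in the current set and can be absorbed into the same nondeterministic phase as the innermost enclosing second-order block without adding alternations (this is exactly why the first-order quantifiers are ignored in counting alternations). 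The $k$ second-order alternations, prefixed by an outermost existential block, thus yield $k+1$ alternating blocks starting with $\exists$, and the quantifier-free kernel is evaluated deterministically in polynomial time by Corollary~\ref{lemma:noquantifiers}. This places the problem in $\Sigma^p_{k+1}$; the $\Pi_k$ case is symmetric, giving $\Pi^p_{k+1}$. Care is needed to confirm that a block of first-order quantifiers sandwiched between two second-order quantifiers of the \emph{same} polarity does not silently introduce an alternation, and that first-order quantifiers at the deepest level collapse into the final deterministic evaluation rather than forcing an extra level.

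For the \textbf{lower bound}, I would reduce from the canonical $\Sigma^p_{k+1}$-complete problem, namely truth of a quantified Boolean formula with $k+1$ alternating blocks of propositional quantifiers beginning with $\exists$, $\exists \bar{x}_1 \forall \bar{x}_2 \cdots Q \bar{x}_{k+1}.\, \chi$ with $\chi$ quantifier-free. The idea, illustrated by Fig.~\ref{proofs:hierarchycomplete:example} for the QBF $\forall x_1.\exists x_2.\, x_1 \lor x_2$, is to build a tree-shaped $K$ whose branches encode, for each propositional variable, a positive-assignment trace and a negative-assignment trace (labeled with \textit{pos}/\textit{neg}), together with gadget branches encoding the matrix $\chi$. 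Each second-order set quantifier $\exists X_m$ / $\forall X_m$ in the Hyper$^2$LTL formula selects a subset of assignment traces corresponding to one block $\bar{x}_m$; a first-order block then checks that exactly one of the positive/negative traces for each relevant variable is chosen, so that $X_m$ genuinely encodes a Boolean assignment to $\bar{x}_m$. The quantifier-free kernel, using the equality and membership sugar $=_{\textit{AP}}$ and $\triangleright$ together with the clause/gadget branches, evaluates $\chi$ under the assignment jointly encoded by all the selected sets. The reduction uses a \emph{fixed} formula shape with exactly $k$ alternations (the outermost existential matching the QBF), and $K$ is polynomial in the QBF size, yielding $\Sigma^p_{k+1}$-hardness; swapping the outermost quantifier gives $\Pi^p_{k+1}$-hardness from the dual $\Pi^p_{k+1}$-complete QBF problem.

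The main obstacle I anticipate is the \textbf{lower-bound encoding of the assignment constraint}: ensuring that each second-order set $X_m$ is forced to correspond to a \emph{valid, total, single-valued} Boolean assignment to its variable block, using only first-order quantification over the set (which does not count toward alternations) and without inadvertently adding a quantifier alternation or making the constraint unsatisfiable for the intended sets. Concretely, I must express ``for each variable in block $m$, exactly one of its two assignment traces lies in $X_m$'' via a first-order formula over $X_m$ and $\mathfrak G$ (leveraging $\exists! \pi \in X_m$ and the binary index-matching technique), and verify that this wrapper formula respects the polarity of the enclosing second-order quantifier so that an $\exists X_m$ block together with its validity check still reads as a single existential level, and a $\forall X_m$ block with its check reads as a single universal level. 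Getting this polarity bookkeeping exactly right, so that $k$ second-order alternations map precisely to $k+1$ polynomial-hierarchy levels in both directions, is the delicate technical core; the temporal evaluation and the tree structure itself are routine given the earlier lemmas.
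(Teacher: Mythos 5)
Your upper bound is essentially the paper's argument: second-order quantifiers are guessed as polynomial-size ($n$-bit) witnesses, first-order quantifiers are absorbed by iterating over the at most $n$ traces without adding an alternation, and the quantifier-free kernel is evaluated in polynomial time; the paper packages this as an induction on the number of alternations (Lemma~\ref{proofs:hierarchycomplete:lowerboundgen}) together with a dualization observation, but the content is the same. Your lower-bound setup (QBF reduction, one \textit{pos} and one \textit{neg} branch per variable, one set quantifier $X_m$ per quantifier block, first-order checks forcing each $X_m$ to encode a total single-valued assignment, $\land$ versus $\rightarrow$ connectives to respect polarity) also matches the paper.

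There is, however, a genuine gap in how you evaluate the matrix $\chi$. You claim the \emph{quantifier-free kernel}, using $=_{\textit{AP}}$ and $\triangleright$ together with the gadget branches, evaluates $\chi$ under the jointly encoded assignment. But the reduction must produce a \emph{fixed} formula (depending only on $k$ and the outermost quantifier) for the data-complexity hardness to apply, and a fixed quantifier-free (or even fixed first-order) formula cannot compute the value of an arbitrary, unboundedly deep Boolean formula whose structure lives only in the tree: truth values must be propagated bottom-up through the subexpression gadgets, which is an inherently recursive computation. The paper resolves this with an \emph{additional} second-order quantifier $Z$ in $\varphi_{k+2}$, whose polarity is chosen to match that of $X_{k+1}$ so that no extra alternation is introduced; the constraints force $Z$ to be the unique superset of $X_1\cup\dots\cup X_{k+1}$ containing, for every subexpression, exactly the trace carrying its correct truth value, and the final check is simply whether the $f$-marked trace in $Z$ is labeled \textit{pos}. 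Your proposal omits this device entirely. An alternative repair would be to restrict the QBF matrix to $3$-CNF or $3$-DNF (whichever preserves $\Sigma^p_{k+1}$-completeness for the given parity of the innermost block), so that a fixed first-order formula can check clause satisfaction directly, but you would then need to handle the normal-form/parity bookkeeping explicitly; as written, neither route is present and the reduction does not go through.
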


We prove Thm.~\ref{proofs:hierarchycomplete:complete} by proving a sequence of claims: we show the $\Sigma^p_{k+1}$ (resp. $\Pi^p_{k+1}$) completeness by first showing containment in the respective class.
Then we prove hardness in these classes by reducing from the QBF problem. \Cref{proofs:hierarchycomplete:lowerboundgen} below reasons about a general trace-set $T$. We will use it also in~\cref{proofs:exphierarchy}, to prove containment in $\Sigma^{EXP}_{k+1}$ and $\Pi^{EXP}_{k+1}$ . 

\begin{lemma}\label{proofs:hierarchycomplete:lowerboundgen}
For assignments $\Delta, \Pi$, a set of traces $T$, 
and a formula $\varphi$,
we have that $\Pi, \Delta \models_T \varphi$ can be decided in $\Sigma^p_{k+1}$ 
in the size of $T$ if 
$\varphi$ is in $\Sigma_k$-Hyper$^2$LTL,
and it can be decided in $\Pi^p_{k+1}$ in the size of $T$ if $\varphi$ is in $\Pi_k$-Hyper$^2$LTL.
\end{lemma}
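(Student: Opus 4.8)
The plan is to prove \Cref{proofs:hierarchycomplete:lowerboundgen} by structural induction on the quantifier prefix of $\varphi$, measuring everything in the size of the trace-set $T$ (which is the parameter that matters, since the first-order inner formula is quantifier-free and decidable in polynomial time in $|T|$ by \Cref{cor:noquantifiers}). I treat the $\Sigma_k$ and $\Pi_k$ cases symmetrically; by duality it suffices to describe the $\Sigma_k$ case and observe that negation swaps the roles. The base case is $k=0$: here $\varphi$ is a $\Sigma_0$-Hyper$^2$LTL formula, i.e.\ it has no second-order quantifier alternation and begins (if at all) with an existential second-order block, possibly interleaved with first-order quantifiers over already-bound trace-set variables. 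I want to show this is decidable in $\Sigma^p_1 = \mathrm{NP}$ in $|T|$.

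\textbf{Base case.}
For $\Sigma_0$, the strategy is to guess all the second-order witnesses at once. Each second-order existential $\exists X$ ranges over $A \subseteq T$; since $|T|$ is the size parameter, such a subset is described by $|T|$ bits and can be guessed nondeterministically. After guessing the assignment $A$ for every second-order variable, the remaining formula is a first-order HyperLTL formula (quantifiers $\forall\pi\in X$ / $\exists\pi\in X$ over concrete finite sets, followed by a quantifier-free body). The first-order quantifiers each range over the at most $|T|$ traces in the corresponding set, so they can be eliminated by polynomially many evaluations; and by \Cref{cor:noquantifiers} each quantifier-free evaluation under a fixed assignment is polynomial in $|T|$. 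Hence the whole check, after the single nondeterministic guess, runs in polynomial time, placing $\Sigma_0$ in $\mathrm{NP} = \Sigma^p_1$. The dual argument (universal guess, i.e.\ a single $\mathrm{coNP}$ check) handles $\Pi_0$, giving $\Pi^p_1$.

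\textbf{Inductive step.}
For the inductive step, suppose $\varphi$ is in $\Sigma_k$-Hyper$^2$LTL with $k\ge 1$. Its outermost second-order block is existential, and stripping it off (together with any first-order quantifiers that precede the next second-order alternation) leaves a formula $\varphi'$ that begins with a universal second-order block, i.e.\ $\varphi'$ is in $\Pi_{k-1}$-Hyper$^2$LTL. The plan is: nondeterministically guess the subsets $A\subseteq T$ for the variables in the leading existential block (polynomially many bits each), resolve the interleaved first-order quantifiers by iterating over the relevant $t\in\Delta(X)$, and then invoke the induction hypothesis, which gives a $\Pi^p_k$ oracle for $\Pi_{k-1}$-Hyper$^2$LTL. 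A nondeterministic polynomial-time machine with a $\Pi^p_k$ oracle decides exactly $\Sigma^p_{k+1}$, so $\varphi$ is decided in $\Sigma^p_{k+1}$. The $\Pi_k$ case is dual, using a co-nondeterministic machine with a $\Sigma^p_k$ oracle to land in $\Pi^p_{k+1}$. The subtle point I would be careful about is the treatment of the first-order quantifiers that sit \emph{between} second-order blocks: because they quantify over fixed finite sets, they add only a polynomial branching factor that can be folded into the (co-)nondeterministic guesses and the polynomial-time oracle queries without raising the alternation count, which is precisely why the first-order quantifiers are ignored in the definition of second-order alternation.

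\textbf{Main obstacle.}
The main technical obstacle is bookkeeping the alternation bound correctly when first-order and second-order quantifiers interleave, and making sure the oracle calls are charged to the right level of the polynomial hierarchy rather than accumulating. The clean way to handle this is to phrase the argument directly in terms of the oracle-machine characterization $\Sigma^p_{k+1} = \mathrm{NP}^{\Pi^p_k}$ (equivalently $\mathrm{NP}^{\Sigma^p_k}$): the single leading existential second-order block is exactly one layer of nondeterminism on top of the inductively obtained oracle, and everything first-order is polynomial-time glue that neither adds nondeterminism levels nor alternations. I would also make explicit that the induction is on the number $k$ of second-order alternations and not on formula length, so that the polynomial-time bounds in $|T|$ provided by \Cref{cor:noquantifiers} remain the only size-dependent ingredient at each level.
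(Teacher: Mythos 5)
Your overall strategy is essentially the paper's: induction on the number of second-order alternations, deciding an existential block by nondeterministically guessing subsets of $T$ (each describable in $|T|$ bits), evaluating first-order quantifiers by iterating over the at most $|T|$ traces of the relevant sets, using \Cref{cor:noquantifiers} for the quantifier-free core, and obtaining the $\Pi_k$ case by dualization. Your phrasing via the oracle characterization $\Sigma^p_{k+1}=\mathrm{NP}^{\Pi^p_k}$ is a block-at-a-time repackaging of the paper's quantifier-at-a-time structural induction and buys nothing essentially different.

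There is, however, one concrete gap in the step ``guess all the second-order witnesses at once.'' In Hyper$^2$LTL a second-order existential may occur in the scope of a first-order universal of the \emph{same} block: $\exists X.\,\forall\pi\in X.\,\exists Y.\,\psi$ is a $\Sigma_0$-Hyper$^2$LTL formula, since first-order quantifiers are ignored when counting alternations. Its semantics lets the witness for $Y$ depend on the trace assigned to $\pi$, so guessing a single set $B$ for $Y$ up front decides the inequivalent formula $\exists X.\,\exists Y.\,\forall\pi\in X.\,\psi$. The conclusion ($\mathrm{NP}$ membership) is still correct, but the guess must be a witness \emph{table}: one subset of $T$ per instantiation of the enclosing first-order variables, which is still polynomially many bits because each first-order variable ranges over at most $|T|$ traces and the formula (hence the nesting depth) is fixed. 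Equivalently, you can do what the paper does and peel off quantifiers one at a time by structural induction, so that the guess for an inner $\exists Y$ is made separately inside each of the polynomially many evaluations generated by an outer $\forall\pi\in X$; this works because $\Sigma^p_i$ is closed under polynomial conjunctions and disjunctions. The same repair is needed in your inductive step, where you guess all sets of the leading existential block before resolving the interleaved first-order quantifiers. Your ``main obstacle'' paragraph gestures at folding the first-order branching into the guesses, but the proof as written performs the guess in the wrong order; with the table repair it goes through and coincides with the paper's argument.
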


\begin{observation}\label{proofs:hatvarphi}
For each formula $\varphi$ we can define a corresponding formula $\hat\varphi$ where each quantifier is converted into its dual and the innermost quantifier free formula is negated. The formula $\hat\varphi$ can be constructed from $\varphi$ in polynomial time, and it has the same number of (second-order) quantifier alternations as $\varphi$.  The outermost second-order quantifier in $\hat\varphi$ is the dual to the outermost second-order quantifier in $\varphi$: If $\Pi, \Delta \models\hat\varphi$ can be decided in some complexity class C, then $\Pi, \Delta \models \varphi$ can be decided in co-C and vice versa. 
\end{observation}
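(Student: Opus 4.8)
The plan is to make the transformation $\varphi \mapsto \hat\varphi$ explicit, prove that $\hat\varphi$ is semantically equivalent to $\neg\varphi$, and then read off the three syntactic claims (polynomial construction, preserved alternation count, dualized outermost quantifier) together with the complexity consequence. Since every Hyper$^2$LTL formula is in prenex form by the grammar, I would write $\varphi = Q_1 Q_2 \cdots Q_m.\,\psi$, where each $Q_i$ is one of $\forall X$, $\exists X$, $\forall\pi\in X$, $\exists\pi\in X$, and $\psi$ is quantifier free. I then define $\hat\varphi := \bar Q_1 \bar Q_2 \cdots \bar Q_m.\,\neg\psi$, where $\bar Q$ swaps $\forall$ and $\exists$ (for both first- and second-order quantifiers) and the matrix is replaced by its negation $\neg\psi$, which is again a valid quantifier-free formula by the production for $\psi$. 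This is a single traversal of the formula tree, hence computable in linear time, giving the polynomial-time construction claim immediately.

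The core step is to prove, by induction on $m$, that $\Pi, \Delta \models_T \hat\varphi$ iff $\Pi, \Delta \not\models_T \varphi$ for every $T, \Pi, \Delta$. The base case $m=0$ is immediate from the semantic clause for negation. For the inductive step with $\varphi = Q_1.\varphi'$ and $\hat\varphi = \bar Q_1.\widehat{\varphi'}$, I would use the four quantifier-duality equivalences that follow directly from the semantics: for instance, $\Pi,\Delta \not\models_T \forall X.\varphi'$ iff there is some $A\subseteq T$ with $\Pi,\Delta[X\mapsto A]\not\models_T \varphi'$, which by the induction hypothesis holds iff there is such an $A$ with $\Pi,\Delta[X\mapsto A]\models_T \widehat{\varphi'}$, i.e.\ iff $\Pi,\Delta\models_T \exists X.\widehat{\varphi'} = \hat\varphi$. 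The three remaining cases ($\exists X$, $\forall\pi\in X$, $\exists\pi\in X$) are entirely symmetric, using $t\in\Delta(X)$ for the first-order clauses. This establishes $\hat\varphi \equiv \neg\varphi$.

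Given the equivalence, the remaining assertions are bookkeeping. The number of second-order quantifier alternations is invariant under dualization: flipping every second-order quantifier sends each adjacent pair of second-order quantifiers to a pair with the same ``agree/disagree'' status, so the alternation count is unchanged, and since first-order quantifiers are ignored in the count, flipping them is irrelevant. By construction the outermost second-order quantifier of $\hat\varphi$ is the dual of that of $\varphi$, so combined with the preserved count we get $\varphi \in \Sigma_k$-Hyper$^2$LTL iff $\hat\varphi \in \Pi_k$-Hyper$^2$LTL, and vice versa. For the complexity claim, note that on a fixed input $(T,\Pi,\Delta)$ we have $\Pi,\Delta\models_T \varphi$ iff $\Pi,\Delta\not\models_T \hat\varphi$, so the decision problem for $\varphi$ is exactly the complement of the decision problem for $\hat\varphi$; hence if the latter lies in a class C, the former lies in co-C, and the symmetric direction follows since double dualization returns an equivalent formula.

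The only genuinely substantive content is the inductive equivalence $\hat\varphi \equiv \neg\varphi$, and its single subtlety, which I expect to be the ``hard'' (though still routine) point, is that the negation is applied once to the entire matrix rather than being pushed through the temporal operators: consequently I never need De Morgan-style identities for $\LTLuntil$ or $\LTLnext$, and the whole duality argument lives purely in the quantifier prefix.
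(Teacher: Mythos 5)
Your proof is correct, and it takes essentially the approach the paper leaves implicit: the paper states this result as an observation \emph{without any proof}, relying on precisely the facts you establish — that the grammar forces prenex form, that dualizing the quantifier prefix and negating the quantifier-free matrix yields a formula equivalent to $\neg\varphi$ (your induction on the prefix), and that the syntactic claims and the co-C complementation follow immediately. You have simply written out the routine argument the authors considered evident, so there is nothing to compare and no gap to report.
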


\begin{proof}[of \Cref{proofs:hierarchycomplete:lowerboundgen}]
By induction on the number $k$ of quantifier alternations:\\
\textbf{Base Case $k=0$:}
$\varphi$ contains only either existential or universal second-order quantifiers.
We first prove, by structural induction over $\varphi$, that if $\varphi$ only contains existential second-order quantifiers, then $\Pi, \Delta \models \varphi$ can be decided in $\Sigma^p_{1}$.
Our structural induction hypothesis is that for all $\Pi$ and $\Delta$, the model checking problem $\Pi, \Delta\models\varphi'$ can be decided in $\Sigma^p_{1}$. 
\begin{itemize}[nosep] 
    \item Base Case: For a quantifier-free formula $\varphi$, model checking can be decided in polynomial time (Cor.~\ref{cor:noquantifiers}).
    \item If $\varphi = \forall\pi\in X.\varphi'$ or $\varphi = \exists\pi\in X.\varphi'$: By induction hypothesis, for any $t$ we can decide in $\Sigma^p_{1}$ whether $\Pi[\pi\mapsto t], \Delta\models \varphi'$ holds.
    The number of traces in $\Delta(X)$ is at most linear in $|T|$.
    Evaluating $\Pi[\pi\mapsto t], \Delta\models \varphi'$ for all possible values of $t$ is therefore still in~$\Sigma^p_{1}$.
    \item If $\varphi = \exists X.\varphi'$: We need to show that $\Pi, \Delta \models \varphi$ can be decided in $\Sigma^p_{1}$.
    By induction hypothesis we know that, if $A$ is given, then $\Pi, \Delta[X\mapsto A] \models \varphi'$ can be decided in $\Sigma^p_{1}$.
    If we first non-deterministically guess $A$ by guessing for each of the linearly many traces whether it is in $A$ and then evaluate $\Pi, \Delta[X\mapsto A] \models \varphi'$ we can decide $\Pi, \Delta \models \varphi$ in $\Sigma^p_{1}$.
\end{itemize}
This concludes the analysis for formulas only containing existential second-order quantifiers.
If $\varphi$ only contains universal second-order quantifiers, we can use the dual formula $\hat\varphi$ which only contains existential second-order quantifiers,
and $\Pi, \Delta \models\hat\varphi$ can be decided in $\Sigma^p_{1}$. By Obs.~\ref{proofs:hatvarphi} we have that $\Pi, \Delta \models\varphi$ can be decided in $\Pi^p_{1}$. This concludes the base case. 

\noindent\textbf{Induction Step:} We need to show that if $\varphi$ contains $k+1$ second-order quantifier alternations, then $\Pi, \Delta \models\varphi$ can be decided in $\Sigma^p_{k+2}$ (resp. $\Pi^p_{k+2}$) in the size of~$T$.
By induction hypothesis we know that for any $\Delta, \Pi$ and $\varphi'$ with $k$ second-order quantifier alternations $\Pi, \Delta \models\varphi'$ can be decided in $\Sigma^p_{k+1}$ (resp. $\Pi^p_{k+1}$) in the size of $T$.
Again, we first prove the claim for a formula $\varphi$ whose outermost second-order quantifier is existential, by structural induction over~$\varphi$.
We have two induction hypotheses here:
First, the model checking problem for Hyper$^2$LTL with $k$ second-order quantifier alternations is in $\Sigma^p_{k+1}$ (resp. $\Pi^p_{k+1}$).
Second, for all $\Pi$ and $\Delta$ the model checking problem $\Pi, \Delta\models\varphi'$ can be decided in $\Sigma^p_{k+2}$.
\begin{itemize}[nosep]
\item  $\varphi$ is quantifier-free: it can be decided in polynomial time in $|T|$ (Cor.~\ref{cor:noquantifiers}).
\item If $\varphi = \forall\pi\in X.\varphi'$ or $\varphi = \exists\pi\in X.\varphi'$: By induction hypothesis, for any $t$ we can decide in $\Sigma^p_{k+2}$ whether $\Pi[\pi\mapsto t], \Delta\models \varphi'$ holds.
    The number of traces in $\Delta(X)$ is at most the size of $T$.
    Evaluating $\Pi[\pi\mapsto t], \Delta\models \varphi'$ for all possible values of $t$ is therefore still in $\Sigma^p_{k+2}$.
    \item If $\varphi = \exists X.\varphi'$: We need to show that $\Pi, \Delta \models \varphi$ can be decided in $\Sigma^p_{k+2}$ in the size of $T$.
    By induction hypothesis, if $\varphi'$ still has $k+1$ quantifier alternations then its outermost second-order quantifier is existential and $\Pi, \Delta[X\mapsto A] \models \varphi'$ can be decided in $\Sigma^p_{k+2}$ for any $A$.
    By the first induction hypothesis it follows that if $\varphi'$ has $k$ quantifier alternations then its outermost second-order quantifier is universal and $\Pi, \Delta[X\mapsto A] \models \varphi'$ can be decided in $\Pi^p_{k+1}$.
    For both cases holds the following:
    If we first non-deterministically guess $A$ by guessing for each of the linearly many traces whether it is in $A$ and then evaluate $\Pi, \Delta[X\mapsto A] \models \varphi'$ we can decide $\Pi, \Delta \models \varphi$ in $\Sigma^p_{k+2}$.
\end{itemize}
This concludes the analysis for formulas only containing existential second-order quantifiers.
If the outermost second-order quantifier of $\varphi$ is universal, then, by Obs.~\ref{proofs:hatvarphi}, the outermost second-order quantifier of $\hat\varphi$ is existential and $\Pi, \Delta \models\hat\varphi$ can be decided in $\Sigma^p_{k+2}$.
Therefore $\varphi$ can be decided in $\Pi^p_{k+2}$. \qed
\end{proof}

By Obs.~\ref{proofs:treeremark}, for a tree-shaped structure $K$,  
 the size of $\textit{Traces}(K)$ is polynomial in the size of $K$. 
 Therefore, from \Cref{proofs:hierarchycomplete:lowerboundgen} we have the following. 
 
\begin{corollary}\label{proofs:hierarchycomplete:lowerbound}
\textsc{MC[$\Sigma_k$-Hyper$^2$LTL,tree]}
is in $\Sigma^p_{k+1}$ in the size of $K$,
and \textsc{MC[$\Pi_k$-Hyper$^2$LTL,tree]}
 is in $\Pi^p_{k+1}$.
\end{corollary}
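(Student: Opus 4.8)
The plan is to obtain this corollary as a direct instantiation of Lemma~\ref{proofs:hierarchycomplete:lowerboundgen}, with the only additional ingredient being the size bound from Observation~\ref{proofs:treeremark}. Recall that $K \models \varphi$ is defined as $\emptyset, [\mathfrak G \mapsto \textit{Traces}(K)] \models_{\textit{Traces}(K)} \varphi$. So I would set $T := \textit{Traces}(K)$, $\Pi := \emptyset$, and $\Delta := [\mathfrak G \mapsto \textit{Traces}(K)]$, and apply the lemma to these particular arguments.

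By Observation~\ref{proofs:treeremark}, a tree-shaped Kripke structure with $n$ states has at most $n$ traces, and each such trace is a root-to-leaf path, hence of length at most $n$. Therefore $\textit{Traces}(K)$ can be represented in space polynomial (indeed quadratic) in $|K|$, and can be computed from $K$ in polynomial time by enumerating the leaves and reading off the corresponding paths. Lemma~\ref{proofs:hierarchycomplete:lowerboundgen} then yields a $\Sigma^p_{k+1}$ (resp.\ $\Pi^p_{k+1}$) decision procedure for $\Pi, \Delta \models_T \varphi$ measured in $|T|$; since $|T|$ is polynomial in $|K|$, this procedure is also $\Sigma^p_{k+1}$ (resp.\ $\Pi^p_{k+1}$) in $|K|$.

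The only point requiring care---and it is routine rather than a genuine obstacle---is the composition of the two polynomial bounds: the lemma's complexity is stated in $|T|$, while the corollary wants it in $|K|$. Since $p(q(|K|))$ is still polynomial in $|K|$ for polynomials $p, q$, and since each nondeterministic guess of a subset $A \subseteq T$ or of a witness trace $t \in \Delta(X)$ in the inductive argument of Lemma~\ref{proofs:hierarchycomplete:lowerboundgen} ranges over an object of size polynomial in $|K|$, all resource bounds stay within the required hierarchy level after the substitution $|T| \mapsto \mathrm{poly}(|K|)$. I therefore expect no substantive difficulty; the corollary amounts to a change of the complexity parameter from trace-set size to structure size, justified by the linear trace bound for trees.
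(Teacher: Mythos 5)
Your proposal is correct and follows essentially the same route as the paper: the paper likewise derives this corollary by instantiating Lemma~\ref{proofs:hierarchycomplete:lowerboundgen} with $T = \textit{Traces}(K)$ and invoking Observation~\ref{proofs:treeremark} to bound $|T|$ polynomially in $|K|$. Your additional remarks on representing the traces explicitly and composing the polynomial bounds are routine elaborations of what the paper leaves implicit.
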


\begin{corollary}\label{proofs:hierarchycomplete:pspace}
\mcfulltree is in PSPACE.
\end{corollary}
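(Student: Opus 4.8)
The plan is to read the PSPACE bound straight off the polynomial-hierarchy characterization established in \Cref{proofs:hierarchycomplete:lowerbound}, using the fact that we are measuring data complexity, so the formula $\varphi$ is fixed and in particular its number of second-order quantifier alternations is a constant $k$. First I would observe that every Hyper$^2$LTL formula falls into exactly one of the two fragments analyzed there: depending on whether its outermost second-order quantifier is existential or universal (first-order quantifiers being ignored for the alternation count), $\varphi$ is in $\Sigma_k$-Hyper$^2$LTL or in $\Pi_k$-Hyper$^2$LTL for this fixed $k$. Invoking \Cref{proofs:hierarchycomplete:lowerbound} then places the model checking problem for $\varphi$ in $\Sigma^p_{k+1}$ (respectively $\Pi^p_{k+1}$) in the size of $K$. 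The final step is the standard inclusion of the polynomial hierarchy in polynomial space: for every constant $m$ we have $\Sigma^p_m \subseteq \text{PSPACE}$ and $\Pi^p_m \subseteq \text{PSPACE}$, so since $k$ is fixed, \mcfulltree lies in PSPACE.

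To make the bound self-contained rather than merely a corollary of $\text{PH}\subseteq\text{PSPACE}$, I would alternatively spell out the direct recursive procedure that the $\Sigma^p$/$\Pi^p$ argument implicitly describes. A second-order quantifier $\exists X$ (resp.\ $\forall X$) is handled by iterating over all subsets $A\subseteq\textit{Traces}(K)$ encoded by a counter of $|\textit{Traces}(K)|$ bits, reusing the same space across successive iterations and recursing on the body; a first-order quantifier iterates over the (by \Cref{proofs:treeremark}) polynomially many traces; and a quantifier-free subformula is evaluated in polynomial time by \Cref{cor:noquantifiers}. The recursion depth equals the number of quantifiers of $\varphi$, and each stack frame stores only a trace/trace-set assignment together with such a counter, all of size polynomial in $|\textit{Traces}(K)|$, hence polynomial in $|K|$.

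The single point that requires care is that there are exponentially many candidate second-order sets $A$ (namely $2^{|\textit{Traces}(K)|}$), so they can be neither stored nor enumerated explicitly. I expect this to be the main obstacle, and it is resolved exactly as in the proof that $\text{PH}\subseteq\text{PSPACE}$: the sets are tried one at a time with a \emph{reusable} polynomial-size counter, so the exponential enumeration costs only time, not space. Combined with the constant recursion depth guaranteed by fixing $\varphi$ in the data-complexity setting, this keeps the total space polynomial and yields the corollary.
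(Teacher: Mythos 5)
Your first paragraph is exactly the paper's (implicit) argument: the corollary is stated immediately after \Cref{proofs:hierarchycomplete:lowerbound} and follows because a fixed formula has a constant number $k$ of second-order alternations, placing the problem in $\Sigma^p_{k+1}$ or $\Pi^p_{k+1}$, and the polynomial hierarchy is contained in PSPACE. The additional self-contained recursive procedure you sketch is correct but not needed; your proposal matches the paper's route.
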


\subsection{Lower bound}\label{QBF}

For the hardness result, 
We reduce from the Quantified Boolean Formula Problem (QBF), which we define as in \cite{computersintractability}.
\begin{definition}[Quantified Boolean Formula~\cite{computersintractability}]
	Let $k, m_1, \dots, m_{k+1}\in\mathbb N$, and let $y$ be
	$ y = 
		\mathbb \exists x_{1, 1}.\dots\mathbb \exists x_{1, m_1}. \mathbb \forall x_{2, 1}.\dots\mathbb \forall x_{2, m_2}. \dots \mathbb Q x_{k+1, 1}.\dots\mathbb Q x_{k+1, m_{k+1}} E
		$ or
        $y =\mathbb \forall x_{1, 1}.\dots\mathbb \forall x_{1, m_1}. \allowbreak \mathbb \exists x_{2, 1}.\dots\mathbb \exists x_{2, m_2}. \dots \allowbreak\mathbb Q x_{k+1, 1}.\dots\mathbb Q x_{k+1, m_{k+1}} E$
        where $E$ is an arbitrary Boolean formula over the variables $x_{1,1},\dots, x_{k+1, m_{k+1}}$, and
        where the last quantifier $\mathbb Q$ is the same at the first (i.e., existential for the first option and universal for the second) iff $k$ is even.
In both cases we say that $y$ has $k$ quantifier alternations.
	The QBF problem is then to decide whether $y$ is valid.
\end{definition}
The QBF problem 
with $k$
quantifier alternations 
and outermost existential quantifier 
is $\Sigma^p_{k+1}$-complete,
and it is $\Pi^p_{k+1}$-complete for formulas with outermost universal quantification~\cite{computersintractability}.
Thus, the QBF problem with $k$ quantifier alternations, behaves similarly to \mcfulltree with $k$ second-order quantifier alternations (cf. Cor.~\ref{proofs:hierarchycomplete:lowerbound}).
We use this intuition in the following where we reduce QBF with $k$ alternations to \mcfulltree with $k$ second-order quantifier alternations, thus providing a lower bound and proving Thm.~\ref{proofs:hierarchycomplete:complete}. 

\begin{proof}[of \Cref{proofs:hierarchycomplete:complete}]
We provide a polynomial time reduction from the QBF problem. 
Given a QBF formula $y$ with $k$ quantifier alternations, we build a tree-shaped Kripke structure $K$ in polynomial time,
and a Hyper$^2$LTL formula $\varphi$ which only depends on $k$ and on the first quantifier of $y$, such that $K\models\varphi$ iff $y$ is valid.
The $AP$ set of $K$ is $\{\textit{q,v, pos, neg, eneg, eneg', epos, epos', f}\}$.
The structure~$K$ consists of an initial state that is connected to several branches, which do not branch further.
Every branch has length $l$, where $l$ is the sum of the number of variables and Boolean operators in~$y$.
Note that $k\le l$.

We associate
every variable $x_{i,j}$ and sub-formula $e$ in $y$ with indices $N(x_{i,j})$ and $N(e)$, respectively. We index the variables and sub-expressions subsequently, such that the index of every variable is smaller than the index of every subexpression.
We then construct $K$ as follows. 
\begin{itemize}[nosep]
\item For every variable $x_{i, j}$ $K$ contains two branches:
One is marked with $\textit{pos}$ on its $N(x_{i,j})$-th state, and the other is marked with $\textit{neg}$ on its $N(x_{i,j})$-th state.
\item For sub-expressions $E_1\otimes E_2$, where $\otimes\in \{\wedge, \vee \}$,
$K$ contains three~branches:
\begin{itemize}[nosep]
	\item If $\otimes = \land$:  the first branch is labeled with $\textit{pos}$ on its $N(\!E_1\!\otimes\!E_2\!)$-th state, with $\textit{epos}$ on~its $N(E_1)$-th state, and with $\textit{epos'}$ on its $N(E_2)$-th state.
	The second branch is labeled with $\textit{neg}$ on its $N(\!E_1\!\otimes\!E_2\!)$-th state and with $\textit{eneg}$ on its $N(E_1)$-th state.
	The third branch is labeled with $\textit{neg}$ on its $N(\!E_1\!\otimes\!E_2\!)$-th state and with $\textit{eneg}$ on~its~$N(E_2)$-th~state. 
	\item 
    If $\otimes = \lor$: the first branch
    is labeled with $\textit{pos}$ on its $N(\!E_1\!\otimes\!E_2\!)$-th state and with $\textit{epos}$ on its $N(E_1)$-th state.
	The second branch is labeled with $\textit{pos}$ on its $N(\!E_1\!\otimes\!E_2\!)$-th state and with $\textit{epos}$ on its $N(E_2)$-th state.
 	The third branch 
    is labeled with $\textit{neg}$ on its $N(\!E_1\!\otimes\!E_2\!)$-th state, with $\textit{eneg}$ on its $N(E_1)$-th state, and with $\textit{eneg'}$~on~its~$N(E_2)$~th~state.
\end{itemize}
\item For a negated sub-expression $E = \neg E_1$ in $y$, $K$ has two branches:
One labeled with $\textit{pos}$ on its $N(E)$-th state and $\textit{eneg}$ on its $N(E_1)$-th state.
The second is labeled with $\textit{neg}$ on its $N(E)$-th state and $\textit{epos}$ on its $N(E_1)$-th state.
\item $K$ contains $k+1$ additional branches:
The $i$-th branch is labeled at the $i$-th position with atomic proposition $q$.
The $j$-th state of the $i$-th branch is labeled with $v$ if and only if there exists an $a\le m_i$ such that $N(x_{i, a}) = j$.
Therefore, the state is labeled if and only if there exists a variable in the $i$-th quantifier sequence that is associated with $j$.
\end{itemize}
Thus, each branch for a sub-formula of $y$ indicates its value with \textit{pos} or \textit{neg} if its sub-formulas have the value indicated by \textit{epos, epos', eneg} and \textit{eneg'}.
All branches of the outermost Boolean operator of $y$ are labeled with $f$ in the first state.
See Fig.~\ref{proofs:hierarchycomplete:example} for an example.
We now describe the formula $\varphi$, which has $k+2$ second-order quantifications.
We divide $\varphi$ into sub-formulas, each contains exactly one second-order quantification:
$\varphi = \varphi_1 \oplus_1 \varphi_2 \oplus_2\dots \varphi_{k+1}\oplus_{k+1}\varphi_{k+2}$ where 
connective $\oplus_i$ is $\land$ if $\varphi_i$ contains an existential second-order quantifier and $\rightarrow$ otherwise.
For all $1 \le i \le k+1$ formula $\varphi_i$ is of the following form:
\begin{align*}
&\mathbb Q X_i.(\forall\pi\in \mathfrak G.\forall\pi'\in \mathfrak G.\exists!\pi''\in X_i.
\quad\LTLnext^i q_{\pi'}\land\LTLglobally(\neg\textit{epos}_\pi\land \neg\textit{eneg}_\pi\land\neg q_\pi)\\&\land\LTLfinally (v_{\pi'}\land(\textit{pos}_{\pi}\lor\textit{neg}_{\pi}))\quad\rightarrow\LTLglobally((\textit{pos}_\pi\lor\textit{neg}_\pi)\leftrightarrow (\textit{pos}_{\pi''}\lor\textit{neg}_{\pi''})))
\end{align*}
If, in $y$,
variables $x_{i, 1}$ to $x_{i, m_i}$ are quantified universally, then $\mathbb Q := \forall$.
If they are quantified existentially, then $\mathbb Q := \exists$.

An instantiation of $X_i$ satisfies $\varphi_i$ if it contains for each variable $x_{i, 1}$ to $x_{i, m_i}$ one trace assigning it positive or negative value.
For each pair of traces $\pi$ and $\pi'$, the set $X_i$ has to contain exactly one trace $\pi''$.
The sub-formula $\LTLnext^i q_{\pi'}$ restricts instantiations for $\pi'$ such that it can only be the trace that is labeled with $v$ at all positions corresponding to the relevant variables.
Instantiations for $\pi$ are restricted such that $\pi$ has to represent some variable $x_{i, 1}$ to $x_{i, m_i}$.
All instantiations for $\pi''$ therefore also represent the positive or negative assignment to some variable $x_{i, 1}$ to $x_{i, m_i}$.

The formula $\varphi_{k+2}$ should not introduce another quantifier alternation.
Therefore, its quantifier depends on the second-order quantifier in $\varphi_{k+1}$.
\begin{align*}
&\varphi_{k+2} =~  \mathbb Q Z. (\forall\pi_1\in X_1.\dots\forall\pi_{k+1}\in X_{k+1}. \pi_1\triangleright Z\land\dots\land\pi_{k+1}\triangleright Z) & (1)\\
&\land(\exists \pi\in Z. \LTLfinally f_\pi) & (2)\\
&\land(\forall \pi\in Z.\exists\pi'\in Z.\exists\pi''\in Z. \LTLglobally((\textit{epos}_\pi\leftrightarrow\textit{pos}_{\pi'})\land (\textit{eneg}_\pi\leftrightarrow\textit{neg}_{\pi'}) & (3)\\
&\land(\textit{epos'}_\pi\leftrightarrow\textit{pos}_{\pi''})\land (\textit{eneg'}_\pi\leftrightarrow\textit{neg}_{\pi''}))) & (4)\\
&\land(\forall\pi\in Z. \LTLglobally(\neg\textit{eneg}_\pi\land\neg\textit{epos}_\pi) \rightarrow\bigvee_{i=1}^{k+1}\pi\triangleright X_i) \oplus\exists\pi\in Z.\LTLfinally f_\pi\land\LTLfinally\textit{pos}_\pi & (5)
\end{align*}
If $\varphi_{k+1}$ is universally quantified, then $\mathbb Q = \forall$ and $\oplus = \rightarrow$.
If $\varphi_{k+1}$ is existentially quantified, then $\mathbb Q = \exists$ and $\oplus = \land$.
There is only one unique instantiation for $Z$ that satisfies $\varphi_{k+2}$.
The instantiation is a superset of $X_1, \ldots, X_{k+1}$ and contains a trace representing the whole quantifier free part of $y$ (line 2).
Similar to the proof of Thm.~\ref{proofs:fpexpcomplete:complete}, the instantiation may only contain a trace if the subformulas of the represented formula have the correct value (lines 3-4).
The instantiation thus evaluates $E$, and it remains to see whether there is a~trace~$\pi$ indicating that $E$ has a positive value (line 5, right).
While $\varphi$ is not given in formal Hyper$^2$LTL syntax it can be easily transformed into it. See App.~\ref{proofs:hierarchycomplete} for full proof.
\qed
\end{proof}

	\section{The Complexity of \mcfulldag}\label{proofs:exphierarchy}
We show that the problem of 
\textsc{MC[$\Sigma_k$-Hyper$^2$LTL, acyclic]}
is in $\Sigma^{EXP}_{k+1}$,
and \textsc{MC[$\Pi_k$-Hyper$^2$LTL, acyclic]}
is in $\Pi^{EXP}_{k+1}$.
From this, it follows that Hyper$^2$LTL model checking on acyclic models is decidable and is in EXPSPACE.
We then use reduction from acceptance problems for alternating Turing machines, 
and show that \mcfulldag is complete in these classes (Thm.~\ref{proofs:exphierarchy:complete}).
Containment follows directly from \Cref{proofs:hierarchycomplete:lowerboundgen} and Obs.~\ref{proofs:dagremark}. Therefore, we have:

\begin{corollary}\label{proofs:exphierarchy:lowerbound} 
The problem of \textsc{MC[$\Sigma_k$-Hyper$^2$LTL, acyclic]}
is in $\Sigma^{EXP}_{k+1}$, 
and the problem of 
\textsc{MC[$\Pi_k$-Hyper$^2$LTL, acyclic]}
is in $\Pi^{EXP}_{k+1}$.
\end{corollary}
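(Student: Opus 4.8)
The plan is to obtain this corollary as a direct instantiation of \Cref{proofs:hierarchycomplete:lowerboundgen} to the trace set of an acyclic structure, while tracking how the exponential blow-up in the number of traces interacts with the polynomial hierarchy. Concretely, given an acyclic Kripke structure $K$ with $n$ states and a $\Sigma_k$-Hyper$^2$LTL formula $\varphi$ (the $\Pi_k$ case being entirely symmetric), I would first recall that by definition $K \models \varphi$ unfolds to $\emptyset, [\mathfrak G \mapsto \textit{Traces}(K)] \models_{\textit{Traces}(K)} \varphi$. Hence the model checking question is \emph{exactly} an instance of the general satisfaction problem analyzed in \Cref{proofs:hierarchycomplete:lowerboundgen}, taking $T := \textit{Traces}(K)$ as the trace set and the initial closed-world assignments as $\Pi = \emptyset$, $\Delta = [\mathfrak G \mapsto T]$.

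The quantitative input is Obs.~\ref{proofs:dagremark}: since $K$ is acyclic, $|T| \le 2^n$, and because every trace stabilizes within $n$ steps each individual trace admits a representation polynomial in $|K|$. Thus $T$ is an object of size at most exponential in $|K|$. \Cref{proofs:hierarchycomplete:lowerboundgen} then tells us that $\emptyset, [\mathfrak G \mapsto T] \models_T \varphi$ is decidable in $\Sigma^p_{k+1}$ \emph{measured in} $|T|$ (and in $\Pi^p_{k+1}$ in the $\Pi_k$ case). Substituting the bound $|T| \le 2^n$ turns a procedure that is polynomial in $|T|$ into one that is exponential in $|K|$, and does so level by level: the $k+1$ alternating blocks of nondeterministic set guesses from \Cref{proofs:hierarchycomplete:lowerboundgen}, each guessing a subset of $T$, become guesses of exponentially long bit-strings over $|K|$, and the final polynomial-in-$|T|$ deterministic verification becomes an exponential-in-$|K|$ check. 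This is precisely the shape of a $\Sigma^{EXP}_{k+1}$ (resp. $\Pi^{EXP}_{k+1}$) computation, which yields the claimed containment of \textsc{MC[$\Sigma_k$-Hyper$^2$LTL, acyclic]} and \textsc{MC[$\Pi_k$-Hyper$^2$LTL, acyclic]}.

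The only point requiring care, rather than a genuine obstacle, is to confirm that the alternation-preserving passage from the polynomial hierarchy to its exponential analogue is sound. One must check that every resource stays singly exponential in $|K|$: the guessed subsets have size $\le 2^n$, the deterministic verifier runs in time polynomial in $|T|$ hence exponential in $|K|$, and the \emph{implicit} first-order loops inside \Cref{proofs:hierarchycomplete:lowerboundgen}, which range over $\Delta(X) \subseteq T$, execute at most exponentially many iterations in $|K|$, each iteration being exponential-time. Since all bounds remain singly exponential and the number $k+1$ of alternations is unchanged, no level of the hierarchy is crossed, and the containment follows. Collapsing over all $k$ then gives \mcfulldag membership in EXPSPACE as the downstream consequence recorded in Cor.~\ref{proofs:exphierarchy:expspace}.
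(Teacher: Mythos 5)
Your proposal is correct and follows essentially the same route as the paper, which obtains this corollary directly by instantiating \Cref{proofs:hierarchycomplete:lowerboundgen} with $T = \textit{Traces}(K)$ and invoking Obs.~\ref{proofs:dagremark} to bound $|T| \le 2^n$. Your additional check that the level-by-level translation from $\Sigma^p_{k+1}$ in $|T|$ to $\Sigma^{EXP}_{k+1}$ in $|K|$ preserves the number of alternations is a sound elaboration of what the paper leaves implicit.
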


\begin{corollary}\label{proofs:exphierarchy:expspace}
\mcfulldag is in EXPSPACE.
\end{corollary}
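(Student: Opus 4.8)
The plan is to obtain this as an immediate consequence of Corollary~\ref{proofs:exphierarchy:lowerbound}, exactly mirroring the way \mcfulltree is shown to be in PSPACE in Corollary~\ref{proofs:hierarchycomplete:pspace} via Corollary~\ref{proofs:hierarchycomplete:lowerbound}. Since we study data complexity, the formula $\varphi$ is fixed, and hence it has some constant number $k$ of second-order quantifier alternations. Its outermost second-order quantifier is either existential, in which case $\varphi$ is a $\Sigma_k$-Hyper$^2$LTL formula, or universal, in which case $\varphi$ is a $\Pi_k$-Hyper$^2$LTL formula. In either case Corollary~\ref{proofs:exphierarchy:lowerbound} places the model checking problem in $\Sigma^{EXP}_{k+1}$ or in $\Pi^{EXP}_{k+1}$, for a fixed index $k$ depending only on $\varphi$.

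It then remains to argue that every such level of the exponential-time hierarchy is contained in EXPSPACE, i.e.\ that $\Sigma^{EXP}_{k+1}$ and $\Pi^{EXP}_{k+1}$ are contained in EXPSPACE for every fixed $k$. This is the exponential analogue of the textbook inclusion of the polynomial hierarchy in PSPACE, and I would establish it by the same deterministic enumeration argument. A $\Sigma^{EXP}_{k+1}$ computation alternates $k+2$ blocks of existential and universal guesses, where each guessed certificate has length at most $2^{p(n)}$ for some polynomial $p$, and the base predicate is decidable in EXP. An EXPSPACE machine decides acceptance by deterministically cycling through all candidate certificates for the outermost block using an exponentially large counter, recursing into the next block, and finally evaluating the base predicate, which runs in exponential time and therefore uses at most exponential space. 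Because the number of alternations $k+2$ is constant, this nested enumeration maintains only $O(k)$ many exponential-size counters together with the working space of the base predicate, so the whole procedure runs in space exponential in $n$; the $\Pi^{EXP}_{k+1}$ case is symmetric.

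Combining the two observations yields the claim: for the fixed formula $\varphi$, model checking on an acyclic structure $K$ lies in a fixed level of the exponential hierarchy and hence in EXPSPACE in the size of $K$. The only point requiring a little care is uniformity: the constant $k$ and the polynomial bounding the certificate lengths depend on $\varphi$, but since $\varphi$ is fixed in the data-complexity setting, these dependencies are absorbed into the (model-independent) constants, and the resulting EXPSPACE bound is uniform in $|K|$. I do not expect any genuine obstacle here; the single substantive step is the standard hierarchy-versus-space inclusion, which is routine but worth spelling out at the exponential level.
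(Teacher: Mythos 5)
Your proposal is correct and follows essentially the same route as the paper: the paper states \Cref{proofs:exphierarchy:expspace} as an immediate consequence of \Cref{proofs:exphierarchy:lowerbound} (placing the problem, for the fixed formula, in a fixed level $\Sigma^{EXP}_{k+1}$ or $\Pi^{EXP}_{k+1}$ of the exponential hierarchy) together with the standard inclusion of that hierarchy in EXPSPACE, exactly mirroring the tree/PSPACE case. Your explicit nested-enumeration argument for the hierarchy-versus-space inclusion is sound and merely spells out the step the paper leaves implicit.
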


\begin{theorem}\label{proofs:exphierarchy:complete}
\textsc{MC[$\Sigma_k$-Hyper$^2$LTL, acyclic]}
is $\Sigma^{EXP}_{k+1}$-complete, and 
\textsc{MC[$\Pi_k$-Hyper$^2$LTL, acyclic]}
 is $\Pi^{EXP}_{k+1}$-complete.
\end{theorem}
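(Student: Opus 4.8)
The plan is to prove Theorem~\ref{proofs:exphierarchy:complete} in two independent directions. Containment in $\Sigma^{\textit{EXP}}_{k+1}$ (resp.\ $\Pi^{\textit{EXP}}_{k+1}$) is already handed to me by Corollary~\ref{proofs:exphierarchy:lowerbound}, so the entire work lies in the matching hardness result. For hardness I would reduce from the word problem for alternating Turing machines with a bounded number of alternations: a $\Sigma^{\textit{EXP}}_{k+1}$-complete problem is the acceptance problem for an alternating Turing machine running in exponential time that begins in an existential state and makes $k$ alternations between existential and universal modes (and dually with an initial universal state for $\Pi^{\textit{EXP}}_{k+1}$). The guiding intuition is that the $k{+}1$ second-order set quantifiers in the formula will directly simulate the $k$ mode-alternations of the machine, each second-order quantifier guessing (existentially) or checking (universally) one block of nondeterministic choices, exactly mirroring the structure of the tree-shaped reduction in Theorem~\ref{proofs:hierarchycomplete:complete} but now scaled up by one exponential.

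The key steps, in order, are as follows. First I would fix an alternating TM $M$ and input $w$, where $M$ runs in time $2^{p(|w|)}$ for a polynomial $p$, so that each configuration has size $2^{p(|w|)}$ and a full computation has $2^{p(|w|)}$ configurations. Second, I would build an \emph{acyclic} Kripke structure $K$ of size polynomial in $|w|$ whose exponentially many traces encode, in binary, the individual cells of all configurations: here I would lean on Observation~\ref{proofs:dagremark}, which tells me an acyclic structure with $n$ states carries up to $2^n$ traces, so a polynomial-size gadget suffices to generate the $2^{\Theta(p(|w|))}$ distinct traces needed to address every (configuration-index, cell-index) pair. Third, I would write a $\Sigma_k$-Hyper$^2$LTL formula $\varphi$ (depending only on $M$ and $k$, not on the succinct details) whose $k{+}1$ second-order quantifiers alternately existentially guess and universally challenge the existential/universal branching decisions of $M$, and whose innermost second-order--quantifier-free part enforces the local consistency of the computation: that the guessed sets encode a valid sequence of configurations, that successive configurations respect $M$'s transition relation cell-by-cell, and that the final configuration is accepting. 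The alignment of quantifier polarity must be set so that $\varphi$ is in $\Sigma_k$-Hyper$^2$LTL when $M$ starts existentially, and in $\Pi_k$-Hyper$^2$LTL when $M$ starts universally; the duality observed in Observation~\ref{proofs:hatvarphi} makes the two cases symmetric.

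The main obstacle I expect is the local-consistency check: expressing that consecutive exponential-size configurations are related by $M$'s transition function using only a first-order (trace-level) subformula over traces that each encode a single cell together with its binary address. The standard technique is to require, for every trace representing cell $c$ at configuration $g$, the existence of companion traces representing $c$ and its neighbors at configuration $g{+}1$, and to check that the windowed update rule of $M$ holds; the delicate part is encoding the arithmetic of ``index $g{+}1$'' and ``adjacent cell'' in binary purely through the \emph{equality-of-encoded-number} machinery $\LTLglobally(p_\pi \leftrightarrow p_{\pi'})$ introduced in the syntactic-sugar section, and threading this through the $\exists!$ and $\triangleright$ abbreviations without inadvertently introducing a spurious second-order alternation. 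I would handle the binary increment and adjacency constraints exactly as the phase~B input--output matching was handled in the sketch of Theorem~\ref{proofs:fpexpcomplete:complete}, reusing the idea of matching traces that agree on a prefix of bits and differ by one carry. Once local consistency is a first-order formula over the guessed sets, everything folds into the prescribed quantifier-alternation pattern, and correctness follows by identifying each second-order set with the corresponding player's strategy choice in the alternation game. The full construction is deferred to the appendix.
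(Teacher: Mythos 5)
Your proposal is correct and follows essentially the same route as the paper: containment via Corollary~\ref{proofs:exphierarchy:lowerbound}, and hardness by reducing acceptance of an exponential-time alternating Turing machine with $k$ alternations to model checking over a polynomial-size acyclic structure whose exponentially many traces encode the computation, with the $k+1$ second-order quantifiers mirroring the machine's alternation blocks and a first-order subformula enforcing local consistency. The only differences are implementation details: the paper encodes the computation as timestamped transition and tape-write events (resolving cell contents via auxiliary second-order sets $S$ and $G$ for the successor and order relations on the counter, quantified with a polarity that adds no alternation) rather than as a cell-by-cell configuration tableau, and it arranges for the formula to depend only on $k$ and the initial mode of $M$ rather than on $M$ itself.
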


\begin{proof}[Sketch]
We reduce (using a polynomial time reduction) from the corresponding acceptance problem of alternating Turing machines: 
given an alternating Turing machine $M$ and a $m\in\mathbb{N}$ (encoded in binary), decide whether $M$ accepts after $m$ steps.
We assume that $M$ alternates at most~$k$ times between universal and existential states (we call each such alternation \emph{alternation block}).
Additionally, its initial state is universal if and only if we prove completeness for~$\Pi_{k+1}^{EXP}$. See App.~\ref{app:exphierarchy} for a formal proof.

Given an alternating Turing machine $M$ and $m\in\mathbb{N}$ in binary, 
we build a structure $K$ and a formula $\varphi$ such that $K\models\varphi$ iff~$M$ accepts in at most $m$ steps.
$\varphi$ only depends on $k$ and whether the initial state of $M$ is universal or existential.

We encode $M$ into $K$ using the  
following substructures:
A substructure representing the traces for a counter, encoding all possible numbers (up to $m$); One substructure per transition in $M$, using encodings for states and tape letters; 
A substructure containing traces for the tape entries; A substructure containing traces representing that all tape cells are filled with blanks at time stamp 0; and A substructure containing exactly one trace encoding the number $m$. 
We use atomic propositions to encode all of the above, and to differentiate between the different substructures. 
All traces in $K$ have length $l+1$ which we define as the number of bits needed to encode $2\cdot m$.
Thus, $K$ encodes all valid transitions of $M$, some initial configuration, and additional traces such that every possible state and tape content can be represented as a set of traces.

The formula $\varphi$ collects valid computation fragments (i.e. a fragment that is taken within the same alternation block) into sets. It
quantifies over sets $X_1,\ldots, X_{k+1}$ simulating an execution of $M$.
The instantiations of a set $X_i$ are restricted such that each instantiation of $X_i$ is a valid execution branch through alternation block $i$ of $M$.
Each taken transition in the simulated execution is represented by a trace encoding the transition and a trace encoding the written letter to the tape.
Additionally, a computation fragment must be connectable to the last step of the previous fragment and the first step of the next fragment.
Further, it is only allowed to contain either transition starting in universal states or in existential but it may not alternate. We describe all of these using the formula $\varphi$ as we can freely quantify over the different sets. Then,
the answer to the model checking problem is equal to the question whether there is a computation step in the last fragment which visits the accepting state.

The alternations of $M$ are encoded in the quantification over the sets collecting computation fragments; 
If the computation fragment is universal then $\varphi$ accepts if all sets that represent a valid computation fragment lead to an accepting state.
If the computation fragment is existential then $\varphi$ accepts if there exists a set that represents a valid computation to an accepting state.\qed
\end{proof}

	\section{Complexity in the Size of the Model and Formula}\label{proofs:pspacecomplete}
In this section, we show that if the size of the formula is not assumed to be constant, then \mcfulltree as well as \mcfpfptree are PSPACE-complete in the combined size of the Kripke structure and formula.

\begin{lemma}\label{proofs1:trees:inpspace}
\mcfulltree is in PSPACE in the combined size of the structure and formula.
\end{lemma}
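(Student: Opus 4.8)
I need to prove that MC[Hyper²LTL, Tree] is in PSPACE in the combined size of structure and formula.

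Let me think about the approach.The plan is to give a recursive model-checking procedure that evaluates $\varphi$ on $K$ using only polynomial space in $|K| + |\varphi|$. The key structural fact I would lean on is Observation~\ref{proofs:treeremark}: a tree-shaped $K$ with $n$ states has at most $n$ traces, and each trace corresponds to a leaf. Consequently, a set of traces $A \subseteq \textit{Traces}(K)$ can be represented explicitly as a bit-vector over the leaves, using only $O(n)$ space, and a single trace can be named by its leaf in $O(\log n)$ space. Crucially, this means that a second-order quantifier $\forall X.\varphi'$ or $\exists X.\varphi'$, which semantically ranges over all $A \subseteq \textit{Traces}(K)$, ranges over at most $2^n$ candidates that can each be written down in polynomial space, so I can iterate through them (for $\forall$) or search over them (for $\exists$) one at a time, reusing the same space.

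First I would set up the recursion on the structure of $\varphi$, carrying along the current trace assignment $\Pi$ and trace-set assignment $\Delta$; both are stored explicitly, $\Pi$ as a list of leaf-names (one $O(\log n)$ entry per trace variable) and $\Delta$ as a list of bit-vectors (one $O(n)$ entry per trace-set variable). Since the number of variables is bounded by $|\varphi|$, these assignments occupy $O(|\varphi| \cdot n)$ space total. The quantifier cases are then handled by the standard PSPACE technique: for $\exists X.\varphi'$ (respectively $\forall X.\varphi'$) I enumerate all bit-vectors $A$ over the leaves, and for each one recurse on $\varphi'$ with $\Delta[X \mapsto A]$, accepting if some (respectively all) branch accepts, and reusing the work space across iterations. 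First-order quantifiers $\exists \pi \in X.\varphi'$ and $\forall \pi \in X.\varphi'$ are handled analogously, but iterating only over the (at most $n$) traces in $\Delta(X)$. In each case the depth of recursion is bounded by the quantifier-nesting depth, hence by $|\varphi|$, and each stack frame uses polynomial space, so the total space is polynomial.

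The base case is a quantifier-free formula $\psi$ evaluated under fixed $\Pi, \Delta$. Here I would appeal to Corollary~\ref{lemma:noquantifiers}, which gives evaluation in \emph{polynomial time} (hence polynomial space) in $|K|$ under fixed assignments; I need the analogous statement to hold in the combined size $|K| + |\psi|$, which follows from the usual automaton-free / direct unfolding evaluation of the temporal operators $\LTLuntil, \LTLnext$ along the finitely many tree-traces, since on a tree each trace stabilizes after at most $n$ steps so the temporal semantics can be decided by examining a finite prefix. I would state this explicitly as the only point where formula size enters the leaf computation.

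The main obstacle I anticipate is bookkeeping the space bound carefully across the two kinds of recursion so that it is polynomial in the \emph{combined} size rather than exponential: naively storing all guessed sets along a quantifier chain could blow up, so I must emphasize that the enumeration over candidate sets at each quantifier is done sequentially, overwriting the single $O(n)$-bit register for that quantifier's current guess, and that only one assignment per bound variable lives on the stack at a time. Once this reuse-of-space invariant is stated, the bound $O(|\varphi| \cdot n + \mathrm{poly}(|K|+|\varphi|))$ follows, establishing membership in PSPACE; I would close by noting that $\mcfpfptree$ is a fragment of $\mcfulltree$, so the same procedure (restricting the second-order quantifier to the uniquely determined fixpoint set, computed in polynomial space by iterating the $\varphi_{fp}$-closure over the leaf bit-vector) yields the PSPACE bound there as well.
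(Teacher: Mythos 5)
Your proposal is correct and follows essentially the same approach as the paper: a recursive evaluation of quantifiers from the outside in, enumerating instantiations one at a time (traces as leaves, sets as leaf-markings/bit-vectors), reusing space across iterations, and appealing to Corollary~\ref{lemma:noquantifiers} for the quantifier-free base case. Your explicit remark that the base-case bound must hold in the combined size $|K|+|\psi|$ is a small point of extra care the paper glosses over, but it does not change the argument.
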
 
\begin{proof}
Let $\varphi$ be a Hyper$^2$LTL formula of size $m$ and with $q$ quantifiers, and let $K$ be a tree-shaped Kripke structure with $n$ states.
We evaluate $\varphi$ recursively from the outermost quantifier to the innermost one.
To evaluate a quantifier~we iterate over every possible instantiation and recursively evaluate its inner formula.
We track instantiations of trace or trace-set quantifiers by marking the corresponding leaf or leaves respectively. 
When all traces are fixed, the inner, quantifier-free formula, can be decided in polynomial space~(Cor.~\ref{lemma:noquantifiers}).
To every time-point, we track at most $q$ quantifier instantiations.
Tracking a set of traces needs $\mathcal O(n\cdot\log(n))$ space, as there are at most $n$ traces in $K$ and tracking a trace needs $\mathcal O(\log n)$ space.
The space required to find the next set of traces or the next trace is polynomial in $n$ and constant in $m$.
Thus, we need at most $\mathcal O(q\cdot n\cdot\log(n))$ space - polynomial in the combined size of structure and formula. \qed
\end{proof}

\begin{theorem}\label{proofs:pspacecomplete:fpfp}
\mcfpfptree is PSPACE-complete in the combined input of the Kripke structure and formula.
\end{theorem}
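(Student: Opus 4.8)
The plan is to establish PSPACE-completeness by combining the containment result already proved in \Cref{proofs1:trees:inpspace} with a matching PSPACE lower bound. Containment is essentially free: the \fpfp fragment is a syntactic subfragment of full Hyper$^2$LTL, so the PSPACE upper bound for \mcfulltree (in the combined size of structure and formula) immediately gives a PSPACE upper bound for \mcfpfptree. The only subtlety is that the recursive evaluation procedure of \Cref{proofs1:trees:inpspace} must also handle the fixpoint quantifier $(X,\curlyvee,\varphi_{fp}).\varphi'$, but this is unproblematic: as observed in \Cref{proofs:fppcomplete:inp}, each fixpoint set contains at most linearly many traces (by \Cref{proofs:treeremark}) and is computed by at most linearly many iterations, each of which evaluates $\varphi_{fp}$ under the current instantiation of $X$. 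Tracking the fixpoint set costs $\mathcal{O}(n\log n)$ space just like a universal/existential trace-set instantiation, so the same $\mathcal{O}(q\cdot n\cdot\log n)$ space bound applies. Hence \mcfpfptree is in PSPACE in the combined size.

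For the lower bound I would give a polynomial-time (indeed logspace) reduction from a canonical PSPACE-complete problem into \mcfpfptree where the formula size is allowed to scale with the input. The most natural source is the QBF validity problem (TQBF), which is PSPACE-complete when the number of quantifier alternations is unbounded. The key point is that with a growing formula we can encode the full alternating quantifier prefix $\exists x_1 \forall x_2 \exists x_3 \cdots$ of a QBF instance directly, using one fixpoint block per quantifier to define a trace set capturing the chosen assignment for each variable, exactly along the lines of the construction in the proof of \Cref{proofs:hierarchycomplete:complete}, but now with the number of alternations $k$ no longer fixed. I would reuse the tree-shaped structure $K$ from that reduction --- branches encoding positive and negative assignments to each variable and branches encoding the Boolean subexpressions of the matrix $E$ --- and translate the alternating prefix into a chain of fixpoint-and-first-order blocks $\varphi_1 \ldots \varphi_{k+1}$ followed by the evaluation block $\varphi_{k+2}$. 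Because \fpfp permits first-order quantification $\forall\pi\in X$ and $\exists\pi\in X$ over the fixpoint-defined sets, the alternation of $\forall/\exists$ over Boolean variables can be simulated, and validity of the QBF corresponds to $K\models\varphi$.

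The main obstacle I expect is encoding the \emph{existential} quantifier blocks of the QBF within the fixpoint fragment. The fixpoint operator $(X,\curlyvee,\varphi_{fp}).\varphi'$ defines $X$ uniquely as the least set closed under $\varphi_{fp}$, so it has no intrinsic nondeterminism --- it cannot by itself ``guess'' an assignment. The standard trick (used already for $\varphi_{k+2}$ in \Cref{proofs:hierarchycomplete:complete}, where the outer universal/existential quantifier is chosen according to the block parity) is to push the genuine choice into the first-order quantifiers: the fixpoint computes a canonical set collecting \emph{all} candidate assignment traces for a variable, and the subsequent $\exists\pi\in X$ / $\forall\pi\in X$ quantifiers select which assignment is actually used when evaluating the matrix. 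Thus the challenge is to design $\varphi_{fp}$ so that each fixpoint set $X_i$ deterministically collects both the positive and negative assignment traces for the $i$-th variable block, while the evaluation formula $\varphi_{k+2}$ propagates the Boolean values up the expression tree of $E$ exactly as in the earlier proof. Verifying that this simulation is correct --- that the interleaving of $\forall/\exists$ first-order selections faithfully realizes the QBF semantics and that the final second-order-quantifier-free check detects a satisfying evaluation --- is the technically delicate part; the space and polynomiality bounds themselves are routine.

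\begin{proof}
We defer the formal details to the appendix and only outline the argument. Containment in PSPACE follows from \Cref{proofs1:trees:inpspace}, since every \fpfp formula is in particular a Hyper$^2$LTL formula, and the recursive evaluation additionally handles fixpoint quantifiers within the same space bound: by \Cref{proofs:treeremark} each fixpoint set has at most $n$ traces and is reached in at most $n$ iterations, each evaluating $\varphi_{fp}$, so tracking the set costs $\mathcal{O}(n\log n)$ space. For hardness we reduce from TQBF, which is PSPACE-complete. Given a quantified Boolean formula with an arbitrary number of alternations, we build the tree-shaped structure and the chain of fixpoint-and-first-order blocks as in the proof of \Cref{proofs:hierarchycomplete:complete}, but with the alternation count growing with the input rather than fixed. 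Each fixpoint set $X_i$ deterministically collects the positive and negative assignment traces of the $i$-th variable block, and the first-order quantifiers $\forall\pi\in X_i$, $\exists\pi\in X_i$ realize the universal and existential choices of the QBF; the evaluation block propagates Boolean values through the expression tree of the matrix. The reduction is computable in logarithmic space and the resulting formula and structure are of polynomial size, so $K\models\varphi$ iff the QBF is valid. \qed
\end{proof}
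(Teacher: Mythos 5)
Your containment argument is fine and matches the paper's: \Cref{proofs1:trees:inpspace} gives the space bound, and your remark that the fixpoint quantifier can be evaluated within the same $\mathcal O(n\log n)$ budget per set (at most $n$ traces, at most $n$ iterations, by \Cref{proofs:treeremark}) is a legitimate --- and slightly more careful --- version of what the paper does. The problem is the lower bound. The paper does not reduce from TQBF at all; it observes that HyperLTL is subsumed by \fpfp (first-order quantification over $\mathfrak G$ is already in the \fpfp grammar) and that HyperLTL model checking on trees is PSPACE-complete in the combined input~\cite{MonitorHLTL}, so hardness is inherited in one line.

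Your proposed reduction has a genuine gap exactly where you flag the ``main obstacle.'' The QBF construction of \Cref{proofs:hierarchycomplete:complete} realizes the quantifier prefix through genuine second-order choice: each $\mathbb Q X_i$ ranges over the many instantiations that pick, per variable of block $i$, either the positive or the negative trace, and the $\forall X_i$/$\exists X_i$ alternation is what mirrors the QBF alternation. A least-fixpoint quantifier $(X,\curlyvee,\varphi_{fp})$ yields a \emph{unique} set, so ``one fixpoint block per quantifier \dots capturing the chosen assignment'' cannot reproduce this: the fixpoint either deterministically contains both polarity traces for every variable or deterministically contains a fixed one; in neither case does it encode a choice, and universal versus existential blocks become indistinguishable. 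Your proposed repair --- let the fixpoint collect all candidate traces and push the choice into $\exists\pi\in X$ / $\forall\pi\in X$ --- does work, but then you need one first-order quantifier per Boolean variable and the fixpoint sets do no work at all: what remains is precisely the standard first-order (HyperLTL) encoding of QBF over a tree of assignment branches, i.e., the lower bound of~\cite{MonitorHLTL} that the paper invokes. So the theorem is recoverable from your own hint, but not by the route you describe; as written, the claim that the chain of fixpoint blocks ``faithfully realizes the QBF semantics'' is unsupported and, for the alternation step, false.
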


\begin{theorem}\label{proofs:pspacecomplete:full}
\mcfulltree is PSPACE-complete in the combined input of the Kripke structure and formula.
\end{theorem}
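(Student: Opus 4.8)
The plan is to prove the two bounds separately. The PSPACE upper bound is already established by Lemma~\ref{proofs1:trees:inpspace}, so the only remaining task is PSPACE-hardness in the combined size of $K$ and $\varphi$. Instead of building a fresh reduction from a PSPACE-complete problem, I would reduce from \mcfpfptree (PSPACE-hard by Theorem~\ref{proofs:pspacecomplete:fpfp}). The key observation is that the fixpoint fragment, though syntactically separate, can be embedded into full Hyper$^2$LTL by a polynomial-time, satisfaction-preserving translation; once such a translation is in hand, hardness transfers directly and, together with the upper bound, yields PSPACE-completeness.

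Concretely, I would translate each fixpoint quantifier $(X, \curlyvee, \varphi_{fp}).\varphi'$ into ordinary second-order quantification. As noted in the preliminaries, the least-fixpoint solution of $\varphi_{fp}$ is unique, so the fixpoint quantifier simply selects the smallest set $A$ satisfying $\varphi_{fp}$. I would encode ``$X$ is the smallest set satisfying $\varphi_{fp}$'' as $\varphi_{fp}\land\forall X'.\,\big(X'\subsetneq X\rightarrow\neg\varphi_{fp}[X'/X]\big)$, abbreviating $X'\subseteq X$ by $\forall\pi\in X'.\,\pi\triangleright X$ and $X'\subsetneq X$ by $(X'\subseteq X)\land\exists\pi\in X.\,\neg(\pi\triangleright X')$. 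The fixpoint quantifier then becomes $\exists X.\,\big(\mathit{smallest}(X,\varphi_{fp})\land T(\varphi')\big)$, where $T$ denotes the translation applied recursively to the body. Since by definition $\varphi_{fp}$ is a conjunction of first-order formulas over previously quantified sets whose step formula is quantifier-free, it contains no fixpoint quantifiers; hence copying it in the minimality clause causes no recursive blow-up, and the translation increases the formula size only linearly.

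Applying $T$ bottom-up to a hard instance $(K,\varphi)$ of \mcfpfptree produces a full Hyper$^2$LTL formula $\varphi''$ with $K\models\varphi\iff K\models\varphi''$, while leaving the tree-shaped structure $K$ untouched. This is a polynomial-time reduction from \mcfpfptree to \mcfulltree. Hence the PSPACE-hardness of the former transfers, and with the matching upper bound the proof is complete.

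The step I expect to be the main obstacle is verifying that the minimality encoding is semantically faithful and syntactically legal. On the semantic side I would prove that $\exists X.\,(\mathit{smallest}(X,\varphi_{fp})\land T(\varphi'))$ and $(X,\curlyvee,\varphi_{fp}).\varphi'$ agree under every assignment $\Pi,\Delta$, using the uniqueness of the least fixpoint so that the existential choice of $X$ is forced to the intended set. On the syntactic side I would check that all occurrences of the membership sugar $\triangleright$ introduced by the subset comparisons and by $\varphi_{fp}[X'/X]$ remain outside the scope of $\LTLnext$ and $\LTLuntil$, so that the quantifier-pulling transformation from the preliminaries applies and $\varphi''$ is a well-formed Hyper$^2$LTL formula.
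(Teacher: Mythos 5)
Your upper bound matches the paper's (both just invoke \Cref{proofs1:trees:inpspace}), but your lower bound takes a genuinely different---and considerably heavier---route. The paper's hardness argument is a one-liner: HyperLTL is a syntactic fragment of Hyper$^2$LTL (quantify every trace from $\mathfrak G$ and use no second-order quantifiers), and HyperLTL model checking on trees is already PSPACE-complete in the combined input by~\cite{MonitorHLTL}, so the identity map is the reduction. You instead reduce from \mcfpfptree via a translation of each fixpoint quantifier $(X,\curlyvee,\varphi_{fp}).\varphi'$ into an $\exists X$ guarded by an explicit minimality clause. This is sound in outline---it transliterates the definition of $\textit{sol}$, and the uniqueness of the least fixpoint makes the existential choice canonical---but it buys nothing for this theorem while incurring exactly the two verification obligations you flag: (i) your $\triangleright$-based encoding of $X'\subsetneq X$ only captures inclusion up to $=_{\textit{AP}}$, whereas $\textit{sol}$ is defined via genuine set inclusion, so you must additionally argue that trace-set assignments that agree up to $=_{\textit{AP}}$ satisfy the same formulas before the minimality clause is faithful in both directions; and (ii) the quantifiers hidden in $\triangleright$ and in $\neg\varphi_{fp}[X'/X]$ must be prenexed, which works only because they never sit under a temporal operator. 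Neither point is a real obstacle, and your embedding of \fpfp into full Hyper$^2$LTL is a nice observation in its own right, but it is more machinery than the statement needs; the subsumption of HyperLTL gives the same hardness for free and is what the paper actually does.
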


For both theorems, 
containment in PSPACE follows from \Cref{proofs1:trees:inpspace}.
Since HyperLTL model checking on trees is PSPACE-complete in the combined input consisting of structure and formula~\cite{MonitorHLTL}, and since HyperLTL is subsumed by \fpfp and thus by Hyper$^2$LTL, 
Theorems~\ref{proofs:pspacecomplete:fpfp} and~\ref{proofs:pspacecomplete:full} follow.

	\section{Conclusion}\label{conclusion}
In this work, we analyzed the complexity of model checking Hyper$^2$LTL and \fpfp on finite structures. 
This problem is particularly relevant for monitoring purposes, in which tree-shaped and acyclic models are used as data structures to maintain the (repeatedly growing) set of traces seen so far.   
We showed that the model checking complexity for  \fpfp is polynomial in the number of traces in the model.
It follows from our analysis, that model checking  \fpfp over finite structures is not much more complex than HyperLTL model checking, despite the fact that \fpfp is much more expressive.\footnote{On tree-shaped models, HyperLTL model checking is L-complete~\cite{MonitorHLTL} and \fpfp model checking is P-complete. On acyclic models, model checking for HyperLTL is PSPACE-complete~\cite{MonitorHLTL} and for \fpfp is EXP-complete.} 
Contrary to Fixpoint Hyper$^2$LTL$_{fp}$, unrestricted Hyper$^2$LTL reasons about all possible sets of traces in the system, significantly increasing time complexity, while preserving space complexity. 
Our complexity results validate the efficiency of the \fpfp fragment, motivating its use in finite-trace settings, such as monitoring. 

The main contribution of this work is establishing lower bounds for the respective model checking problems. In particular, in the monitoring setting, the model (current set of traces) is changing repeatedly, and therefore we focused in complexity analysis in the size of the model.
As future work, we intend to complete the analysis of \Cref{proofs:pspacecomplete} to consider also the size of the formula. 

\bibliographystyle{splncs04}
\bibliography{main}
\newpage

\appendix
\section{Appendix: Full Proofs} \label{app}

\subsection{Full Proofs for \Cref{proofs:fppcomplete}: 
 \mcfpfptree is P-complete}\label{app:fppcomplete}

We show that the reduction described in \Cref{sec:hornsat}, from the Horn saisfiability problem, is correct.

\begin{lemma}\label{proofs:fppcomplete:correct}
Let $K$ and $\varphi$ be the structure and formula described in the proof of \Cref{proofs:fppcomplete:complete}, and $h$ be the respective Horn formula. 
$K\models\varphi$ iff $h$ is satisfiable.
\end{lemma}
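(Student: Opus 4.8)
The plan is to relate the least fixpoint set $A$ that $\varphi$ quantifies over to the partial assignment produced by unit propagation on $h$, and then to invoke the standard correctness of unit propagation for Horn formulas. Throughout I fix the correspondence between traces and signed variable assignments: on the assignment branches, a trace encoding index $i$ with $\textit{pos}$ represents ``$x_i$ is true'', and a trace encoding $i$ with $\textit{neg}_1$ represents ``$x_i$ is false''; the clause branches are exactly the traces satisfying $\bigcirc c_\pi$. The line-$1$ conjuncts $\neg\bigcirc c_\alpha\wedge\neg\bigcirc c_{\alpha'}\wedge\neg\bigcirc c_\beta$ of $\varphi_{fp}$, together with the fact that $\varphi'_{fp}$ only ever forces the $a$-marked (hence assignment) traces into $A$, guarantee that $A$ contains \emph{only} assignment traces, never clause traces.

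First I would define a monotone propagation operator $\mathcal{P}$ on partial assignments: start from $x_{k+1}=\top$ true and $x_{k+2}=\bot$ false (exactly the $a$-marked traces forced into $A$ by $\varphi'_{fp}$), and close under the three Horn unit-propagation rules for a clause $(\neg x_d\vee\neg x_e\vee x_f)$: (i) $x_d,x_e$ true force $x_f$ true; (ii) $x_d$ true and $x_f$ false force $x_e$ false; (iii) $x_e$ true and $x_f$ false force $x_d$ false. The central step is to prove, by induction on the fixpoint iteration, that $A$ is precisely the set of assignment traces corresponding to the least fixpoint $F$ of $\mathcal{P}$. For the forward direction I match each of the three disjuncts in lines~$2$--$4$ of $\varphi_{fp}$, via the $\square(\textit{pos}_\cdot\leftrightarrow\textit{neg}_\cdot)$ equalities, to one of the rules (i)--(iii): $\alpha,\alpha'\in A$ supply the two premises and $\beta$ is the forced conclusion, so every trace added to $A$ mirrors one propagation step. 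Conversely, every propagation step is realizable because $K$ contains a clause branch for each $c_j$ and both assignment branches for each variable; and the minimality of the $\curlyvee$ (least-fixpoint) solution ensures $A$ holds nothing beyond the forced traces, so $A$ corresponds exactly to $F$.

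Next I would analyze the quantifier-free body $\forall\pi\in A.\forall\pi'\in A.\,\neg\square(\textit{pos}_\pi\leftrightarrow\textit{neg}_{1,\pi'})$. Since $A$ contains only assignment traces, $\square(\textit{pos}_\pi\leftrightarrow\textit{neg}_{1,\pi'})$ holds for a pair iff $\pi$ is the $\textit{pos}$-trace and $\pi'$ the $\textit{neg}_1$-trace of one and the same variable; hence the body holds iff no variable index occurs both as a $\textit{pos}$-trace and a $\textit{neg}_1$-trace in $A$, i.e.\ iff $F$ is consistent. To close the equivalence I invoke the two halves of Horn-SAT correctness. Soundness: every satisfying assignment of $h$ (with $\top$ true, $\bot$ false) must agree with $F$, because each propagation rule is a logical consequence of its clause; so an inconsistent $F$ implies $h$ is unsatisfiable. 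Completeness: if $F$ is consistent, the assignment extending $F$ by setting every unforced variable to false is a model, since a violated clause $(\neg x_d\vee\neg x_e\vee x_f)$ would require $x_d,x_e$ true (hence forced true by default-false) and $x_f$ false, but rule~(i) would then have forced $x_f$ true at the fixpoint---a contradiction. Chaining the equivalences yields $K\models\varphi$ iff $F$ is consistent iff $h$ is satisfiable.

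The main obstacle I expect is the fixpoint characterization: one must verify carefully that the three $\square$-equality disjuncts of $\varphi_{fp}$ capture exactly the three directions of Horn unit propagation and introduce no spurious additions, and that the least-fixpoint semantics of $\curlyvee$ together with the finite branch encodings rules out extra traces. The surrounding temporal and encoding bookkeeping---equal-length branches, the role of $\bigcirc$ (detecting clause branches and $a$-marks at position $1$) versus $\square$ (comparing encodings), and the synchronized self-loops that make $\square$-comparisons over infinite traces coincide with comparisons of the finite encoded prefixes---is routine but must be discharged so that the equalities genuinely mean equality of the encoded indices.
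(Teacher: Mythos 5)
Your proposal is correct and follows essentially the same route as the paper's proof: the paper likewise shows (by induction over the fixpoint iterations) that every trace forced into $A$ is mandated by any satisfying assignment of $h$, and conversely builds the default-false assignment from $A$ and verifies it clause by clause, which is exactly your soundness/completeness of unit propagation packaged as a least-fixpoint characterization of $A$. The only difference is presentational, so no further comment is needed.
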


\begin{proof}
We start by proving that the Horn formula is satisfiable if $K\models\varphi$.
If the answer to the model checking problem is positive, then we can construct a satisfying assignment from the set $A$.
If the trace that represents the positive form of a literal is in $A$, then we assign this literal to true, if the trace representing a negative form of a literal is in $A$, then we assign this literal to false.
If there is neither the positive nor the negative trace for a literal in $A$, we assign this literal to false and if the positive as well as the negative trace is contained in $A$, then the answer to the model checking problem would not be positive.

We prove by way of contradiction that this assignment is satisfying.
Otherwise, there is a clause $c$, which is not satisfied. We distinguish two cases:
\begin{itemize}[nosep]
\item If there are one or no literals in $c$ for which a trace representing positive assignment or a trace representing negative assignment is contained in $A$, then there are at least two literals in $c$ assigned to false, which satisfies $c$.
\item If there are two or three literals in $c$ for which the positive or negative trace is contained in $A$, then $A$ does contain traces $\alpha$ and $\alpha'$ for two of the literals in $c$, which do not satisfy $c$.
$\varphi_{fp}$ then enforces that the trace satisfying the remaining literal in $c$ is an element of $A$. This leads to an assignment that satisfies $c$ if only one of the traces for the literal is contained in $A$ or it leads to a resulting set $A$ which contains the positive as well as the negative trace for the remaining literal, which would then contradict the assumption that the answer to the model checking problem is positive.
\end{itemize}

It remains to show that $K\not\models\varphi$ implies that the given Horn formula is unsatisfiable.
We first prove that if a trace representing the assignment of one literal to true is in $A$, then there is no assignment satisfying the Horn formula, where this literal is assigned to false and vice versa.
More intuitively, we only add assignment traces whose assignment must be necessarily correct.
We do natural induction over the number of fixpoint iterations.

\begin{description}[nosep]
\item[Base Case:] $\varphi'_{fp}$ enforces that at least the traces assigning $\top$ to true and $\bot$ to false are part of $A$.
Per definition of the Horn-satisfiability problem, there is no valid assignment that assigns $\top$ to false or $\bot$ to true.
Because $A$ is empty in the beginning, $\varphi_{fp}$ does not add any traces in the first fixpoint iteration.

\item[Induction Case:] We consider a trace $\beta$ that is added to $A$ in the current iteration.
By induction hypothesis we know that the claim holds for each trace that is already in $A$ from previous iterations.
For trace $\beta$ to be added to $A$, there have to exist three traces:
$\pi\in \mathfrak G$ representing a horn clause $c$, and $\alpha, \alpha'\in A$ representing two assignments to literals.
Instantiating $\varphi_{fp}$ with these four traces must satisfies it.
The traces satisfy $\varphi_{fp}$ exactly iff $\alpha$ and $\alpha'$ represent an assignment for two of the literals in $c$ in such a way, that $c$ can only be satisfied if the third literal is assigned according to $\beta$.
By the induction hypothesis and the fact that $\alpha$ and $\alpha'$ were added to $A$ in previous iterations, we know that there does not exist a valid assignment satisfying the Horn formula, which conflicts with $\alpha$ or $\alpha'$. That means that there is also no satisfying assignment conflicting $\beta$.
\end{description}

The answer to the model checking problem is only negative if there exist two traces in $A$ which represent the assignment of the same literal to true and to false.
We can now conclude that in this case, there is no satisfying assignment to the Horn formula which does not assign true as well as false to the same literal, which would result in an invalid assignment.
\end{proof}

\begin{lemma}\label{proofs:fppcomplete:inl}
The given reduction needs space that is logarithmic in the size of $h$.
\end{lemma}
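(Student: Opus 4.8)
The plan is to analyze the space usage of the reduction from the Horn-satisfiability problem, as constructed in the proof of \Cref{proofs:fppcomplete:complete}. Recall that the reduction maps a Horn formula $h$ with $k$ variables and $n$ clauses to a tree-shaped Kripke structure $K$, while the formula $\varphi$ is \emph{fixed} and does not depend on $h$. Since $\varphi$ is constant, the only work that needs to be accounted for is the construction of $K$, and the claim is that this can be done using only $\mathcal O(\log|h|)$ working space (the output itself does not count against the space bound, as is standard for log-space reductions).

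The key observation is that every component of $K$ is highly regular and locally computable. First, I would note that the number of states per branch is $\lceil\log_2(k+3)\rceil$, so computing the branch length requires only storing $k$ (in binary) and performing a logarithm, all within logarithmic space. Second, for each branch I must output the labeling, which amounts to encoding an index (a variable index $i$, or the three indices $d,e,f$ of a clause) in binary using the propositions $\textit{pos},\textit{neg}_1,\textit{neg}_2$. To produce these labels, the transducer iterates over the branches --- one positive and one negative branch per variable $x_i$ for $i\in[1,k]$ (including the two dummy variables for $\top,\bot$), and one branch per clause $c_j$ --- and for each branch iterates over its states, at each state emitting the correct label. Maintaining the current branch identifier, the current state index within the branch, and the index being encoded each requires only $\mathcal O(\log|h|)$ bits, since all relevant quantities ($i$, $j$, the clause literal indices, the state position) are bounded by a polynomial in $|h|$.

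The steps I would carry out, in order, are: (1) argue the formula $\varphi$ contributes no space since it is constant; (2) compute and store the branch length $\lceil\log_2(k+3)\rceil$ in logarithmic space; (3) iterate through the variable branches, for each emitting the states and the binary encoding of the variable index with $\textit{pos}$ or $\textit{neg}_1$, plus the marking with $a$ for the $\top,\bot$ branches; (4) iterate through the clause branches, for each clause $(\neg x_d\vee\neg x_e\vee x_f)$ reading the three literal indices from the input and emitting their binary encodings with $\textit{neg}_1,\textit{neg}_2,\textit{pos}$ together with the $c$ label; and (5) emit the transition relation, which connects $s_0$ to the heads of all branches and links each branch linearly with a self-loop at the leaf --- all determinable from counters alone. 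Throughout, the only persistent storage is a constant number of counters, each of size $\mathcal O(\log|h|)$.

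I expect the main obstacle to be not conceptual difficulty but careful bookkeeping: I must verify that reading the clause literals from the input and computing the bit-positions at which each proposition should be emitted can be interleaved so that no intermediate result exceeds logarithmic size. In particular, the binary encoding of an index $i$ into a branch of length $\lceil\log_2(k+3)\rceil$ requires, at each state position $m$, extracting the $m$-th bit of $i$; this bit extraction can be done by repeated integer division by two while only storing $i$ and a position counter, both logarithmic in $|h|$. Once this is established for a single branch, the full construction is just a doubly-nested loop over branches and positions, and the logarithmic bound follows, completing the proof of \Cref{proofs:fppcomplete:inl}.
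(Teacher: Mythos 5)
Your proposal is correct and follows essentially the same route as the paper's proof: build the branches sequentially while maintaining only a counter for the current branch and a counter for the current depth, using the first to locate the relevant variable or clause in the input and the second to extract the corresponding bit of the binary-encoded index and to detect when a leaf is reached. The additional details you supply (the constancy of $\varphi$, the bit extraction by repeated division, the emission of the transition relation) are consistent elaborations of the same argument.
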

\begin{proof}
To show that this reduction only needs space which is logarithmic to the size of $h$, let $k$ be the highest number of a literal in the Horn formula and let $n$ be the number of clauses in the Horn formula.
The Kripke structure has $n+4+2k$ many branches with a depth of $\lceil\log_2(k+3)\rceil$.
These branches can be built sequentially by only remembering in which branch we are and at what depth.
This needs only $\lceil\log_2(n+4+2k)\rceil$ or $\lceil\log_2(\lceil\log_2(k+3)\rceil)\rceil$ space respectively.
With the first counter we can compute the corresponding literal or find the corresponding clause in the input and with the second counter we decide if the binary representation of the relevant literal(s) has a 1 at this position and if the current node has to be a leaf.
\end{proof}

\subsection{Full Proofs for \Cref{proofs:fpexpcomplete}:  \mcfpfpdag is EXP-complete}\label{app:fpexpcomplete}

\paragraph*{\Cref{proofs:fpexpcomplete:inexp} (restated)}
\mcfpfpdag is in EXP.
\begin{proof}
We reduce \mcfpfpdag to \mcfpfptree by unrolling the acyclic model to a tree.
An acyclic model can not have more than $2^{p(n)}$ traces, where $p(n)$ is some polynomial in the number of states of the model (Observation~\ref{proofs:dagremark}).
We can iterate over all traces in a manner similar to DFS.

Let $K$ be the given acyclic Kripke structure with $n$ states and $t \le2^{p(n)}$ traces and let $\varphi$ be the given \fpfp formula.
We construct a tree-shaped Kripke structure $T$ which contains all traces in $K$.
We iterate over all traces in $K$ and add for each trace in $K$ a new branch with equally labeled states to $T$.
This can be done in exponential time because $t$ is exponential in $n$ and adding a trace to $T$ does not take longer than the length of the trace which is bound by $n$.
If $|T|$ is the number of states of $T$, then $|T|\le n\cdot t \le n\cdot 2^{p(n)}$.

Because the sets of traces in $K$ and $T$ are equal $K\models\varphi$ holds exactly if $T\models\varphi$ holds.
Therefore we can use the algorithm from \Cref{proofs:fppcomplete:inp} whose runtime is polynomial in the size of the given structure.
Let $p'$ be the polynomial upper-bounding the runtime of this algorithm.
If we run this algorithm on $T$ its runtime can be expressed as $p'(|T|)\le p'(2^{p(n)})$ which implies the existence of a polynomial $p''$ such that $p'(2^{p(n)})\le2^{p''(n)}$.

Since the construction of $T$ can be done in exponential time and $2^{p''(n)}$ is an upper-bound for the runtime of the described algorithm, \mcfpfpdag can be solved in exponential time.
\end{proof}

\subsubsection{EXP-hardness of \mcfpfpdag}\label{app:exphard}

To prove that \mcfpfpdag is EXP-hard (\Cref{proofs:fpexpcomplete:complete}), we give a polynomial-time many-one reduction from Succinct Circuit Value.
We use very similar definitions as~\cite{CompuComplexity}.

\begin{definition}[Boolean Circuit]
A Boolean circuit is a directed acyclic Graph $\mathcal{C} = (V, E)$.
The nodes are called gates and every gate $u$ is annotated with a sort $\textit{sort}(u)$ which is TRUE, FALSE, AND, OR, NOT or INPUT.
If a gate is of sort TRUE, FALSE or INPUT, it has an indegree of 0.
Gates of sort NOT have indegree 1 and gates of sort AND or OR have an indegree of 2.
The gates with outdegree 0 are called output gates.
We index input gates and output gates.

The semantics of Boolean circuits for some input are defined by a function $\mathcal T$ that maps gates to Boolean values.
An input to a circuit is a bitstring (binary number) that assigns one of its bits to each input gate.
The value of $\mathcal T(u)$ for $u\in V$ is defined inductively.
\begin{itemize}[nosep]
\item If $u$ is an input gate, then $\mathcal T(u)$ is the value of the corresponding input bit.
\item If $u$ is of sort TRUE or FALSE, then $\mathcal T(u) = \top$ or $\mathcal T(u) = \bot$ respectively.
\item If $u$ is of sort NOT, then there exists exactly one gate $v$ such that $(v, u)\in E$.
	$\mathcal T(u)$ is then $\neg \mathcal T(v)$.
\item If $u$ is of sort OR, then there are exactly two gates $v, v'$ such that $(v, u), (v', u)\in E$.
	$\mathcal T(u)$ is defined as $\mathcal T(v)\lor \mathcal T(v')$.
\item If $u$ is of sort AND, then there are exactly two gates $v, v'$ such that $(v, u), (v', u)\in E$.
	$\mathcal T(u)$ is defined as $\mathcal T(v)\land \mathcal T(v')$.
\end{itemize}

We say an edge $(u, v)$ carries a positive value if $\mathcal T(u) = \top$ and we say that $(u, v)$ carries a negative value otherwise.
The output of a Boolean circuit is the bitstring (binary number) of the values assigned to the output gates by $\mathcal T$.
\end{definition}

\begin{definition}[Succinct Representation]
We say that a Boolean circuit $\mathcal{C}_S$ \emph{succinctly represents} a Boolean circuit $\mathcal{C}$ if $\mathcal{C}_S$ represents $\mathcal{C}$ as follows:
The gates of $\mathcal{C}$ are indexed by $1-m$.
The neighbors of a gate $u$ are all gates $v$ such that $(u, v)\in E$ or $(v, u)\in E$.
The neighbors of gate $u$ are also numbered such that the first up to two neighbors are the predecessors of $u$ and the rest are its successors.

Let $k$ be the smallest natural number satisfying $2^k - 1 \ge m$.
This allows us to encode every number between $0$ and $m$ in $k$ bits.
$\mathcal{C}_S$ has $2k$ input gates and $k + 3$ output~gates.

Let the input to $\mathcal{C}_S$ be of the form $i*j$, where $*$ is the concatenation of two bitstrings, and where $i$ and $j$ both consist of $k$ bits.
Let $q*r$ be the corresponding output where $q$ consists of $k$ bits and $r$ consists of three bits.
Let $u\in V$ be the gate with number $i$ and let $v\in V$ be the gate with number $q$.
The encoding is such that the $j$-th neighbor of $u$ is gate $v$ and $r$ encodes the kind of $u$.
If $u$ has no $j$-th neighbor, then the neighbor is some fictitious gate with the number 0.

For a gate $u$, $\textit{sort}(u)$ is encoded in $r$ as follows:
$\textit{TRUE} \rightarrow 001, \textit{FALSE}\rightarrow 010, \textit{AND}\rightarrow 011, \textit{OR}\rightarrow 100, \textit{NOT}\rightarrow 101$.
\end{definition}

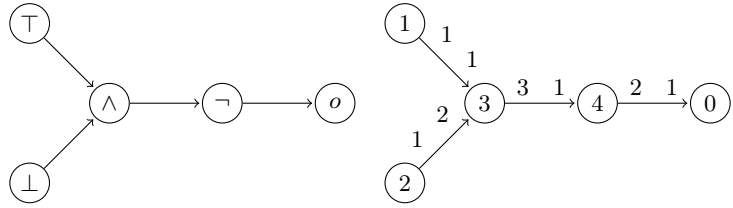
\begin{figure}
\begin{center}
\begin{subfigure}[c]{0.4\textwidth}
\begin{tikzpicture}[shorten >=1pt,node distance=1.5cm,on grid,auto]
  \tikzstyle{every state}=[fill={rgb:black,0;white,10},minimum size=15pt,inner sep=0pt]

  \node[state] (s0)  {$\top$};
  \node[state] (s2) [below right of=s0] {$\land$};
  \node[state] (s1) [below left of=s2] {$\bot$};
  \node[state] (s3) [right of=s2] {$\neg$};
  \node[state] (s4) [right of=s3] {$o$};

  \path[->]
  (s0) edge  node {} (s2)
  (s1) edge  node {} (s2)
  (s2) edge  node {} (s3)
  (s3) edge  node {} (s4);

\end{tikzpicture}
\end{subfigure}
\begin{subfigure}[c]{0.4\textwidth}
\begin{tikzpicture}[shorten >=1pt,node distance=1.5cm,on grid,auto]
  \tikzstyle{every state}=[fill={rgb:black,0;white,10},minimum size=15pt,inner sep=0pt]

  \node[state] (s0)  {$1$};
  \node[state] (s2) [below right of=s0] {$3$};
  \node[state] (s1) [below left of=s2] {$2$};
  \node[state] (s3) [right of=s2] {$4$};
  \node[state] (s4) [right of=s3] {$0$};

  \path[->]
  (s0) edge  node[near start] {1} node[near end] {1} (s2)
  (s1) edge  node[near start] {1} node[near end] {2} (s2)
  (s2) edge  node[near start] {3} node[near end] {1} (s3)
  (s3) edge  node[near start] {2} node[near end] {1} (s4);

\end{tikzpicture}
\end{subfigure}
\end{center}
\caption{Circuit $C$ with numbers for gates and neighbors.}
\label{proofs:fpexpcomplete:CS:circuit}
\end{figure}

\begin{definition}[Succinct Circuit Value Problem]
Given a Boolean circuit $\mathcal{C}_S$ with no input gates and which succinctly represents a Boolean circuit $\mathcal{C}$, decide whether $\mathcal T$ is true for the only output gate of $\mathcal{C}$.
\end{definition}

\begin{example}
Consider the circuit $\mathcal C$ in \Cref{proofs:fpexpcomplete:CS:circuit}.
It has no input gates and one output gate and is therefore suitable for the succinct encoding.
We fix a numbering for the nodes as indicated on the right.
Note that the output gate is required to have number $0$.
The numbering of the neighbors of each node are indicated on the outgoing edges.
The definition of a succinct representation requires us to give all neighbors connected with incoming edges smaller numbers than neighbors connected with outgoing edges.

A succinct representation $C_S$ of $C$ would be a circuit with $3 + 3$ inputs.
This is enough to encode the number of any gate (a number between 0 and 4) in the first three inputs and to encode the number of a neighbor in the next three inputs.
Because each node is connected to each other node at most once, a node can't have more than 4 neighbors in our example.
$C_S$ would have $3 + 3$ output gates.
The first three output gates encode the number of a gate and the next three output gates encode its kind.

For an input $i * j$ the succinct representation of $C$ will output the $j$-th neighbor of $i$ and its kind.
In particular, $C_S$ behaves as follows in this example:
\begin{itemize}
\item For input $1, 1$ it outputs $3, \textit{AND}$.
\item For input $2, 1$ it outputs $3, \textit{AND}$.
\item For input $3, 1$ it outputs $1, \textit{TRUE}$.
\item For input $3, 2$ it outputs $2, \textit{FALSE}$.
\item For input $3, 3$ it outputs $4, \textit{NOT}$.
\item For input $4, 1$ it outputs $3, \textit{AND}$.
\end{itemize}
The numbers for gates and neighbors are encoded in binary and the kind for each gate is encoded as given by the definition.
Importantly, $C_S$ will output $0$ in its first three bits for input $4, 2$ to indicate that the second neighbor of gate four is the output gate.
\end{example}

\paragraph{ Proof of \Cref{proofs:fpexpcomplete:complete}}\label{app:fullproof:fpexpcomplete}

We denote the set of all gates in $\mathcal{C}_S$ and $\mathcal{C}$ as $V(\mathcal{C}_S)$ and $V(\mathcal{C})$ respectively, and we denote the set of all edges in $\mathcal{C}_S$ and $\mathcal{C}$ as $E(\mathcal{C}_S)$ and $E(\mathcal{C})$ respectively.
Additionally, we denote the number of input gates of $\mathcal{C}_S$ as $|e|$ and the number of gates it has as $n$.

Note that every input-output pair of $\mathcal{C}_S$ represents an edge in $\mathcal{C}$.

To reduce Succinct Circuit Value to \mcfpfpdag we build an acyclic Kripke structure $K$ from the given circuit $\mathcal{C}_S$ and a fixed \fpfp formula $\varphi$ such that $K\models\varphi$ iff the Succinct Circuit Value Problem is true for $\mathcal{C}_S$.
The reduction consists of three phases.
The first two phases collect the output of $\mathcal{C}_S$ for every input and the last phase solves the Circuit Value problem on $\mathcal{C}$.

Intuitively, in phase A some set $X$ is built such that $X$ contains traces that represent the values for all edges in $\mathcal{C}_S$ for all possible inputs.
Therefore all traces have a binary number representing the input of $\mathcal{C}_S$ and two gate IDs encoded on them; one for the gate $u$ they are coming from and one for the gate $v$ they are going to.
Additionally, the trace encodes $\textit{sort}(v)$ and whether the edge carries positive or negative value for the given input.
For every edge $(u, v) \in E(\mathcal{C}_S)$ and every input $e$, $X$ contains exactly one trace encoding $e$, $(u, v)$ and $\textit{sort}(v)$.
$X$ contains the trace encoding $e$, $(u, v)$, $\textit{sort}(v)$ and positive value if edge $(u, v)$ carries positive value under input $e$.
If the edge carries negative value under input $e$, then $X$ contains the trace encoding $e$, $(u, v)$, $\textit{sort}(v)$ and negative value.

In phase B, we build a set $Y$ containing for each input one trace encoding the input and the corresponding output of $\mathcal{C}_S$.
For this, every trace in phase B encodes two binary numbers; one representing an input to $\mathcal{C}_S$ and one representing the corresponding output of $\mathcal{C}_S$.
We build $Y$ from set $Y'$, which contains traces where the last states don't encode output.
We call these traces with a partial output \emph{incomplete}.
The fixpoint iteration for $Y'$ adds, with every iteration traces that have one bit more output than the previous iteration.
In the last step of the iteration, the output is long enough to encode a complete input-output pair.
We can compute set $Y$ by taking all traces from $Y'$ that are not incomplete.

The succinct encoding is defined such that every edge in $\mathcal{C}$ is defined as two input-output pairs of $\mathcal C_S$.
This means that at this point all information about $\mathcal C$ is contained in set $Y$ and phase C can evaluate circuit $\mathcal C$.
All traces used in phase C are traces from $Y$ on which additionally appear the atomic propositions \textit{pos} and \textit{neg} to represent positive or negative value.
We construct set $Z$ such that for each trace in $Y$ the positive trace is included in $Z$ iff the represented edge carries positive value in $\mathcal{C}$.
If the represented edge is negative in $\mathcal{C}$, then we add the negative trace to $Z$.
At the last step, we check whether $Z$ contains a trace representing that the output edge carries a positive value.
The answer to the Succinct Circuit Value Problem is true iff this trace is in $Z$.

We now describe the reduction formally.
Given a circuit $\mathcal{C}_S$, we build a Kripke structure $K$ from $\mathcal{C}_S$.
The formula $\varphi$ is independent of $\mathcal{C}_S$.

\paragraph*{Phase A}\label{proofs:fpexpcomplete:phasea}

We construct the part of $K$ representing all edges in $\mathcal{C}_S$.
For this, we first modify $\mathcal{C}_S$.
First, we eliminate all binary gates that have two inputs coming from the same gate.
It follows, that there are no two edges $(u, v), (s, t)\in E(\mathcal{C}_S)$ with $u=s$ and $v = t$.
Then, we replace the TRUE and FALSE gates in $\mathcal{C}_S$ by circuits representing $p\land\neg p$ and  $p\lor\neg p$ respectively, where $p$ is some input gate.
This shortens the construction in phase C.
Further, we add to every gate producing an output of $\mathcal{C}_S$ an output edge going to a special gate of sort OUTPUT which we call \emph{output gate}.
Additionally, we give every gate $u$ in $\mathcal{C}_S$ a unique ID$>0$ denoted by $ID(u)$.
The IDs are distributed such that for every $(u, v)\in E(\mathcal{C}_S)$ holds $ID(u) < ID(v)$.
Finally, we calculate $l\in\mathbb N$ such that traces of length $l$ are long enough to encode one ID of a gate in $\mathcal{C}_S$ as well as inputs and outputs of $\mathcal{C}_S$. E.g. $l = \max(\lceil\log_2(n)\rceil, |e|, o)$ if $\mathcal{C}_S$ has $o$ output gates.

In the case where $|e| < l$ we add $l - |e|$ \emph{dummy input gates} to $\mathcal{C}_S$ which are not connected to any other gate.
Similarly, if $o < l$ we add $l - o$ \emph{dummy output gates} to $\mathcal{C}_S$ which always output false.
$ID(u)$ is undefined if $u$ is a dummy input gate or dummy output gate.
We call every input where the inputs to the dummy input gates are false \emph{valid}.
This ensures that inputs and outputs occupy the whole trace when encoding inputs or outputs of $\mathcal C_S$ on a trace of $K$.

For every edge in $\mathcal{C}_S$, we build two substructures connected to the initial state in $K$.
These substructures consist of $l$ layers with two states per layer.
The first structure contains only traces representing positive value on the edge and the second structure contains only traces representing negative value on the edge.
Both structure are able to encode arbitrary inputs to $\mathcal C_S$ on their traces.

Let $a_{u, v, i, j, \textit{pos}}$ be the $j$-th state in the $i$-th layer of one of the substructures representing the edge $(u, v)\in E(\mathcal{C}_S)$ and let $a_{u, v, i, j, \textit{neg}}$ be the same state in the other substructure for $(u, v)$.
Note that $i\in\{1,\dots, l\}$ and $j\in\{1, 2\}$.
In every substructure the two states of one layer are fully connected to the states of the next layer and the states of the last layer have self-loops.
There are no other transitions in these substructures (see \Cref{proofs:fpexpcomplete:phasea:example:one}).

\begin{figure}[t]

\begin{subfigure}{0.3\textwidth}
\resizebox{.6\linewidth}{!}{
\begin{tikzpicture}[shorten >=1pt,node distance=2cm,on grid,auto]
  \tikzstyle{every state}=[fill={rgb:black,0;white,10},minimum size=30pt,inner sep=0pt]

  \node[state,initial, ] (s_0)  {};
  \node[state] (a11) [below of=s_0] {$\{a,\\ \textit{not}\}$};
  \node[state, align=center] (a12) [left of=a11] {$\{a, \textit{inp}$\\$\textit{not}\}$};
  \node[state, align=center] (a21) [below of=a11] {$\{\textit{fromid},$\\$\textit{toid}\}$};
  \node[state, align=center] (a22) [below of=a12] {$\{\textit{fromid},$\\$\textit{toid},\textit{inp}\}$};
  \node[state] (a31) [below of=a21] {$\{\textit{toid}\}$};
  \node[state] (a32) [below of=a22] {$\{$\textit{inp},\\ \textit{toid}$\}$};

  \path[->]
  (s_0) edge  node {} (a11)
  (s_0) edge  node {} (a12)
  (a11) edge  node {} (a21)
  (a11) edge  node {} (a22)
  (a12) edge  node {} (a21)
  (a12) edge  node {} (a22)
  (a21) edge  node {} (a31)
  (a21) edge  node {} (a32)
  (a22) edge  node {} (a31)
  (a22) edge  node {} (a32)
  (a31) edge  [loop below] node {} ( )
  (a32) edge  [loop below] node {} ( );
\end{tikzpicture}
}
\caption{The positive substructure for an edge between gates $u$ and $v$ with $\textit{ID(u)}\!= \!2$,$ \textit{ID(v)} \!= \!6$ and $v$ being a NOT gate. Every state is additionally labeled with $\textit{pos}$.}
\label{proofs:fpexpcomplete:phasea:example:one}
\end{subfigure}
\hfill
\begin{subfigure}{0.3\textwidth}
\resizebox{.6\linewidth}{!}{
\begin{tikzpicture}[shorten >=1pt,node distance=2cm,on grid,auto]
  \tikzstyle{every state}=[fill={rgb:black,0;white,10},minimum size=30pt,inner sep=0pt]

  \node[state,initial, ] (s_0)  {};
  \node[state, align=center] (a11) [below of=s_0] {$\{a,\textit{or},$\\$\textit{input}\}$};
  \node[state, align=center] (a12) [left of=a11] {$\{a, \textit{inp}$\\$\textit{or, input}\}$};
  \node[state, align=center] (a22) [below of=a12] {$\{\textit{fromid},$\\$\textit{inp}\}$};
  \node[state] (a31) [below of=a21] {$\{\textit{toid}\}$};
  \node[state, align=center] (a32) [below of=a22] {$\{\textit{inp},$\\$\textit{toid}\}$};

  \path[->]
  (s_0) edge  node {} (a11)
  (s_0) edge  node {} (a12)
  (a11) edge  node {} (a22)
  (a12) edge  node {} (a22)
  (a22) edge  node {} (a31)
  (a22) edge  node {} (a32)
  (a31) edge  [loop below] node {} ( )
  (a32) edge  [loop below] node {} ( );
\end{tikzpicture}
}
\caption{The positive substructure for an edge between gates $u$ and $v$ with $\textit{ID(u)}\! = \!2$,$\textit{ID(v)} \!= \!4$ and $v$ being an OR gate. $u$ takes the second input bit. Every state is additionally labeled with $\textit{pos}$.}
\label{proofs:fpexpcomplete:phasea:example:two}
\end{subfigure}
\hfill
\begin{subfigure}{0.3\textwidth}
\resizebox{.6\linewidth}{!}{
\begin{tikzpicture}[shorten >=1pt,node distance=2cm,on grid,auto]
  \tikzstyle{every state}=[fill={rgb:black,0;white,10},minimum size=30pt,inner sep=0pt]
  \node[state,initial, ] (s_0)  {};
  \node[state, align=center] (a11) [below of=s_0] {$\{a,$\\$\textit{output}\}$};
  \node[state, align=center] (a12) [left of=a11] {$\{a, \textit{inp},$\\$\textit{output}\}$};
  \node[state, align=center] (a21) [below of=a11] {$\{\textit{fromid}\}$};
  \node[state, align=center] (a22) [below of=a12] {$\{\textit{inp},$\\$\textit{fromid}\}$};
  \node[state, align=center] (a31) [below of=a21] {$\{\textit{toid}$\\$\textit{bit}\}$};
  \node[state, align=center] (a32) [below of=a22] {$\{\textit{inp},\textit{bit}$\\$\textit{toid}\}$};

  \path[->]
  (s_0) edge  node {} (a11)
  (s_0) edge  node {} (a12)
  (a11) edge  node {} (a21)
  (a11) edge  node {} (a22)
  (a12) edge  node {} (a21)
  (a12) edge  node {} (a22)
  (a21) edge  node {} (a31)
  (a21) edge  node {} (a32)
  (a22) edge  node {} (a31)
  (a22) edge  node {} (a32)
  (a31) edge  [loop below] node {} ( )
  (a32) edge  [loop below] node {} ( );
\end{tikzpicture}
}
\caption{The negative substructure for an edge between gates $u$ and $v$ with $\textit{ID(u)} \!= \!2$,$ \textit{ID(v)} \!=\! 4$. Gate $v$ is the third output bit. Every state is additionally labeled with $\textit{neg}$.}
\label{proofs:fpexpcomplete:phasea:example:three}
\end{subfigure}
\caption{Substructures for phase A with $l=|e| = 3$.}
\label{proofs:fpexpcomplete:phasea:example}
\end{figure}
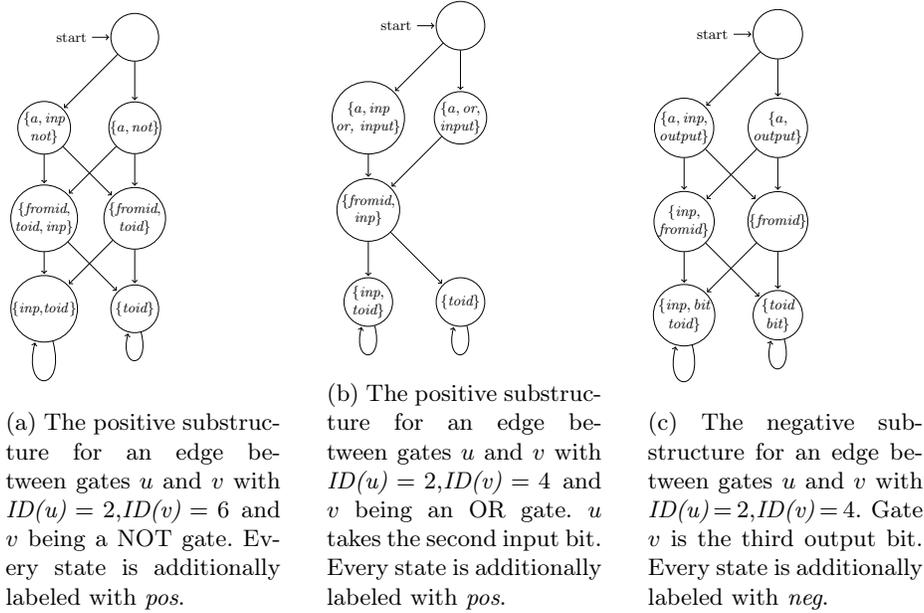

All states $a_{u, v, i, j,\textit{pos}}$ are labeled with $\textit{pos}$.
All states $a_{u, v, i, j,\textit{neg}}$ are labeled with $\textit{neg}$.
Additionally states $a_{u, v, 1, j, \textit{pos}}$ and $a_{u, v, 1, j, \textit{neg}}$ (the states in the first layer) are labeled with atomic proposition $a$ to represent that they are part of phase A.
We label the states $a_{u, v, i, j, \textit{pos}}$ and $a_{u, v, i, j, \textit{neg}}$ with the atomic proposition $\textit{fromid}$ iff the $i$-th bit of $ID(u)$ is 1 and equally with atomic proposition $\textit{toid}$ for $ID(v)$.
Therefore $u$ is encoded with \textit{fromid} and $v$ is encoded with \textit{toid} on every trace in every structure.

We use atomic propositions $\textit{output},\textit{and},\textit{or},\textit{not}$ to encode $\textit{sort}(v)$.
States $a_{u, v, 1, j, \textit{pos}}$ and $a_{u, v, 1, j, \textit{neg}}$ are labeled with one of these propositions.

States $a_{u, v, i, 2, \textit{pos}}$ and $a_{u, v, i, 2, \textit{neg}}$ with $i\le |e|$ are labeled with $\textit{inp}$.
This gives us traces encoding arbitrary bitstrings with $\textit{inp}$ which will be later used to encode inputs to $\mathcal C_S$.
An example of a completely labeled structure is given in \Cref{proofs:fpexpcomplete:phasea:example:one}.

If $u$ is the input gate which takes the $m$-th input, then states $a_{u, v, 1, j,\textit{pos}}$ and $a_{u, v, 1, j,\textit{neg}}$ are labeled with $\textit{input}$.
Additionally, there are no states $a_{u, v, m, 1, \textit{pos}}$ and $a_{u, v, m, 2, \textit{neg}}$.
By that we remove the states that allowed us to encode certain inputs to $\mathcal C_S$.
The positive structure for $(u, v)$ now does not contain any trace where the $m$-th input bit is negative and the negative structure does not contain any trace where the $m$-th input bit is positive.

For the edges going to the $m$-th output gate in $\mathcal{C}_S$, the states $a_{u, v, m, j, \textit{pos}}$ and $a_{u, v, m, j, \textit{pos}}$ are labeled with atomic proposition $\textit{bit}$ to mark the represented output bit.
Example substructures for edges leaving input gates and entering output gates are given in \Cref{proofs:fpexpcomplete:phasea:example:two} and \Cref{proofs:fpexpcomplete:phasea:example:three}.

For every dummy output gate responsible for the $m$-th output of $\mathcal{C}_S$ we build one substructure, which looks exactly like a negative substructure for a normal output gate but there are no gate IDs encoded on this substructure.
We label these substructures with atomic proposition $\textit{dummy}$ in the first states.

To summarize which traces are now part of $K$ and labeled with $a$ somewhere:
For each edge $(u, v)\in E(\mathcal{C}_S)$ and each valid input $e$ we have one trace labeled with $\textit{pos}$ and one trace labeled with $\textit{neg}$.
Both of these traces have the IDs of $u$ and $v$ as well as $e$ and the type of gate $v$ encoded on them.
Positive traces representing an edge leaving the $m$-th input gate only exist if the $m$-th bit of the encoded $e$ is a 1.
The corresponding negative traces only exist if the $m$-th bit of the encoded $e$ is a 0.
For each edge going to the $m$-th output gate the traces representing that edge are labeled with $\textit{bit}$ in their $m+1$-st state.
For every valid input $e$ and every dummy output gate, responsible of the $m$-th output, there are traces encoding $e$, labeled with $\textit{bit}$ at the $m+1$-st step and labeled with $\textit{neg}$.
These are the only traces labeled with atomic proposition $\textit{dummy}$.

The set $X$ should contain a trace representing edge $(u, v)\in E(\mathcal{C}_S)$ labeled with $\textit{pos}$ and input~$e$ iff the value of edge $(u, v)$ is positive for input $e$.
Similarly, the corresponding $\textit{neg}$ trace should be in $X$ iff the value of edge $(u, v)$ is negative for input $e$.
The fixpoint formula  $\varphi$ describing $X$ has different parts for each sort of gate, 
and is defined as follows: $\varphi = (X, \curlyvee, \varphi_{\textit{input}}\land \varphi_{\textit{and}}\land \varphi_{\textit{or}}\land \varphi_{\textit{not}}).\varphi_B$. 

Since AND and OR gates behave similarly, $\varphi_{\textit{and}}$  adds negative traces for OR gates and $\varphi_{\textit{or}}$  adds negative traces for AND gates.
This allows us to describe the formula more succinctly.
We start by describing $\varphi_{\textit{input}}$
which adds all traces representing edges leaving input gates to $X$.
\begin{equation*}
\varphi_{\textit{input}}=\forall\pi\in \mathfrak G.\LTLfinally a_\pi\land\LTLfinally \textit{input}_\pi \rightarrow \pi\triangleright X
\end{equation*}
In $\varphi_{\textit{and}}$, defined below,
we quantify over three traces: $\pi_2$ and $\pi_3$ are two different traces already in $X$, and $\pi_1$ is the trace we add if the step formula is satisfied.
To make sure that we only consider traces that are relevant to phase A, we check whether all three traces are labeled with $a$.
Additionally, all three traces should encode the same input to $\mathcal{C}_S$, and $\pi_2$ and $\pi_3$ should point to the gate where $\pi_1$ is coming from.
We add $\pi_1$ to set $X$ if, additionally, all three traces represent positive values on the edges and the gate $\pi_2$ is going to is an AND gate.
Alternatively, we add $\pi_1$ to $X$ if all three traces represent negative values on their edges and $\pi_2$ is going to an OR gate.
\begin{align*}
&\varphi_{\textit{and}}=(\forall\pi_1\in\mathfrak G.\forall\pi_2,\pi_3\in X.\LTLfinally(\pi_2\neq_{AP}\pi_3) ~\land\LTLfinally a_{\pi_1}\land\LTLfinally a_{\pi_2}\land \LTLfinally a_{\pi_3}\\
&\land\LTLglobally((\textit{inp}_{\pi_1}\leftrightarrow \textit{inp}_{\pi_2}) \land (\textit{inp}_{\pi_2}\leftrightarrow \textit{inp}_{\pi_3}))\land\LTLglobally((\textit{toid}_{\pi_2}\leftrightarrow\textit{toid}_{\pi_3}) \land(\textit{toid}_{\pi_3}\leftrightarrow \textit{fromid}_{\pi_1}))\\
&\land((\LTLfinally\textit{and}_{\pi_2}\land\LTLfinally\textit{pos}_{\pi_1}\land\LTLfinally\textit{pos}_{\pi_2}\land\LTLfinally\textit{pos}_{\pi_3})\lor(\LTLfinally\textit{or}_{\pi_2}\land\LTLfinally\textit{neg}_{\pi_1}\land\LTLfinally\textit{neg}_{\pi_2}\land\LTLfinally\textit{neg}_{\pi_3}))\to\pi_1\triangleright X)
\end{align*}

The formula $\varphi_{\textit{or}}$, defined below, adds positive traces for outgoing edges for OR gates with at least one positive input, and negative edges for AND gates with at least one negative input.
First, we check again whether both traces are from the substructures from phase A, have the same input encoded, and whether $\pi_2$ is an input to the gate for which $\pi_1$ is an output.
Then we add $\pi_1$ to $X$ if it represents a positive output to an OR gate with one positive input or if $\pi_1$ represents a negative output for an AND gate with one negative input.
\begin{align*}
&\varphi_{\textit{or}}=(\forall\pi_1\in\mathfrak G.\forall\pi_2\in X.~\LTLfinally a_{\pi_1}\land\LTLfinally a_{\pi_2}\land\LTLglobally(\textit{inp}_{\pi_1}\leftrightarrow \textit{inp}_{\pi_2})\land\LTLglobally(\textit{toid}_{\pi_2}\leftrightarrow \textit{fromid}_{\pi_1})\\
&\land((\LTLfinally\textit{or}_{\pi_2}\land\LTLfinally\textit{pos}_{\pi_1}\land\LTLfinally\textit{pos}_{\pi_2})\lor(\LTLfinally\textit{and}_{\pi_2}\land\LTLfinally\textit{neg}_{\pi_1}\land\LTLfinally\textit{neg}_{\pi_2}))\to\pi_1\triangleright X)
\end{align*}
Similarly, for NOT gates we add $\pi_1$ if it represents the opposite value of $\pi_2$:
\begin{align*}
&\varphi_{\textit{not}}=(\forall\pi_1\in\mathfrak G.\forall\pi_2\in X.~\LTLfinally a_{\pi_1}\land\LTLfinally a_{\pi_2}\land\LTLglobally(\textit{inp}_{\pi_1}\leftrightarrow \textit{inp}_{\pi_2})\land\LTLglobally(\textit{toid}_{\pi_2}\leftrightarrow \textit{fromid}_{\pi_1})\\
&\land((\LTLfinally\textit{not}_{\pi_2}\land\LTLfinally\textit{pos}_{\pi_1}\land\LTLfinally\textit{neg}_{\pi_2})\lor(\LTLfinally\textit{not}_{\pi_2}\land\LTLfinally\textit{neg}_{\pi_1}\land\LTLfinally\textit{pos}_{\pi_2}))\to\pi_1\triangleright X)
\end{align*}

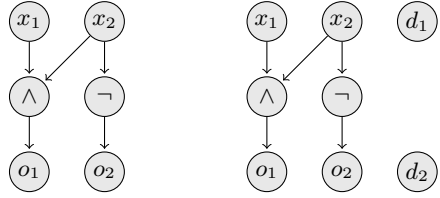
\begin{figure}[t]
\begin{center}
\begin{subfigure}[c]{0.25\textwidth}
\begin{tikzpicture}[shorten >=1pt,node distance=1cm,on grid,auto]
  \tikzstyle{every state}=[fill={rgb:black,1;white,10},minimum size=15pt,inner sep=0pt]

  \node[state] (s0)  {$x_1$};
  \node[state] (s1) [right of=s0] {$x_2$};
  \node[state] (s3) [below of=s0] {$\land$};
  \node[state] (s4) [below of=s1] {$\neg$};
  \node[state] (s5) [below of=s3] {$o_1$};
  \node[state] (s6) [below of=s4] {$o_2$};

  \path[->]
  (s0) edge  node {} (s3)
  (s1) edge  node {} (s3)
  (s1) edge  node {} (s4)
  (s3) edge  node {} (s5)
  (s4) edge  node {} (s6);

\end{tikzpicture}
\end{subfigure}
\begin{subfigure}[c]{0.25\textwidth}
\begin{tikzpicture}[shorten >=1pt,node distance=1cm,on grid,auto]
  \tikzstyle{every state}=[fill={rgb:black,1;white,10},minimum size=15pt,inner sep=0pt]

  \node[state] (s0)  {$x_1$};
  \node[state] (s1) [right of=s0] {$x_2$};
  \node[state] (s2) [right of=s1] {$d_1$};
  \node[state] (s3) [below of=s0] {$\land$};
  \node[state] (s4) [below of=s1] {$\neg$};
  \node[state] (s5) [below of=s3] {$o_1$};
  \node[state] (s6) [below of=s4] {$o_2$};
  \node[state] (s7) [right of=s6] {$d_2$};

  \path[->]
  (s0) edge  node {} (s3)
  (s1) edge  node {} (s3)
  (s1) edge  node {} (s4)
  (s3) edge  node {} (s5)
  (s4) edge  node {} (s6);

\end{tikzpicture}
\end{subfigure}
\end{center}
\caption{The circuit considered in \Cref{proofs:fpexpcomplete:phasea:example:circuittext}}
\label{proofs:fpexpcomplete:phasea:example:circuit}
\end{figure}

\begin{example}\label{proofs:fpexpcomplete:phasea:example:circuittext}
Consider the circuit $\mathcal{C}_S$ of \Cref{proofs:fpexpcomplete:phasea:example:circuit} (left).
A valid distribution of IDs is from the top-left to the bottom-right with IDs one to six.
There are six states, so we set $l = 3$.
The resulting circuit after adding dummy gates $d_1$ and $d_2$ is depicted in \Cref{proofs:fpexpcomplete:phasea:example:circuit} on the right.
We denote the states in this example by the symbol they are labeled with such that we have the states $x_1, x_2,\land,\neg,o_1, o_2, d_1$ and $d_2$ and $ID(x_1) = 1, ID(x_2) = 2, ID(\land) = 3, ID(\neg) = 4, ID(o_1) = 5, ID(o_2) = 6$.

The circuit
$\mathcal{C}_S$ has three input gates, three output gates and four valid inputs.
The construction of $K$ and $\varphi_{\textit{input}}$ enforces that $X$ contains the following traces:
\begin{itemize}[nosep]
\item $t_1, t_2, t_3$: Negative traces encoding input $000$ and representing edges $(x_1, \land)$, $(x_2, \land)$ and $(x_2,\neg)$.
\item $t_4$: A negative trace encoding input $010$ and representing edge $(x_1,\land)$.
\item $t_5, t_6$: Positive traces encoding $010$ and representing edges $(x_2,\land)$ and $(x_2, \neg)$.
\item $t_7$: A positive trace encoding input $100$ and representing edge $(x_1,\land)$.
\item $t_8, t_9$: Negative traces encoding $100$ and representing edges $(x_2,\land)$ and $(x_2, \neg)$.
\item $t_{10}, t_{11}, t_{12}$: Positive traces encoding $110$ and representing edges $(x_1, \land), (x_2, \land),(x_2,\neg)$.
\end{itemize}
Once traces $t_1-t_{12}$ are in $X$, the formula $\varphi_{\textit{and}}$ adds more traces to $X$:
\begin{itemize}[nosep]
\item $t_{13}$: A positive trace encoding input $110$ and edge $(\land, o_1)$ by instantiating $\pi_2$ with $t_{10}$ and $\pi_3$ with $t_{11}$.
\end{itemize}
Similarly, $\varphi_{\textit{or}}$ is only satisfied if the following traces are also in $X$:
\begin{itemize}[nosep]
\item $t_{14}$: A negative trace encoding input $000$ and edge $(\land, o_1)$ by instantiating $\pi_2$ with $t_1$ or $t_2$.
\item $t_{15}$: A negative trace encoding  $010$ and  $(\land, o_1)$ by instantiating $\pi_2$ with $t_4$.
\item $t_{16}$: A negative trace encoding  $100$ and  $(\land, o_1)$ by instantiating $\pi_2$ with $t_8$.
\end{itemize}
Finally, $\varphi_{\textit{not}}$ requires the following traces in $X$ to be satisfied:
\begin{itemize}[nosep]
\item $t_{17}$: A positive trace encoding  $000$ and  $(\neg, o_2)$ by instantiating $\pi_2$ with $t_3$.
\item $t_{18}$: A negative trace encoding  $010$ and  $(\neg, o_2)$ by instantiating $\pi_2$ with $t_6$.
\item $t_{19}$: A positive trace encoding  $100$ and  $(\neg, o_2)$ by instantiating $\pi_2$ with $t_9$.
\item $t_{20}$: A negative trace encoding  $110$ and  $(\neg, o_2)$ by instantiating $\pi_2$ with $t_{12}$.
\end{itemize}
By the construction of the substructures, traces $t_{13}, t_{14}, t_{15}$, $t_{16}$ are labeled with $\textit{bit}$ in the second step, and traces $t_{17}, t_{18}, t_{19}$, $t_{20}$ are labeled with $\textit{bit}$ in the third step. Traces
$t_{13}$ to $t_{16}$ represent the value of the first output bit for every input, and traces $t_{17}$ to $t_{20}$ represent the value of the second output bit for every input.
Since the input and output of~$\mathcal{C}_S$ is entirely represented by these traces and the traces labeled with $\textit{dummy}$, they are the only traces relevant for phase~B.
\end{example}

\begin{lemma}\label{proofs:fpexpcomplete:phasea:poly}
The construction in phase A takes polynomial time in the size of~$\mathcal{C}_S$.
\end{lemma}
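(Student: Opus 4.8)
The plan is to track the growth of each construction step and verify it remains polynomially bounded in $|\mathcal{C}_S|$. First I would argue that the preprocessing of $\mathcal{C}_S$ — eliminating binary gates whose two inputs come from the same gate, replacing the TRUE and FALSE gates by the two-gate circuits for $p\land\neg p$ and $p\lor\neg p$, adding output edges to the single OUTPUT gate, and padding with dummy input and output gates — increases the number of gates by only a constant amount per original gate, and hence keeps the circuit size polynomial (in fact linear) in the number of gates $n$. Assigning the unique IDs subject to $ID(u)<ID(v)$ for every edge $(u,v)$ is exactly a topological ordering of the acyclic graph underlying $\mathcal{C}_S$, which is computable in polynomial time.

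Next I would bound $l$. Since $l=\max(\lceil\log_2(n)\rceil,|e|,o)$ and each of these three quantities is at most $n$, we have $l\le n$, so $l$ is polynomial (indeed at most linear) in $|\mathcal{C}_S|$.

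Then I would count the states and transitions of the Phase~A part of $K$. Because every gate of $\mathcal{C}_S$ has in-degree at most two, the number of edges satisfies $|E(\mathcal{C}_S)|=O(n)$; for each such edge we build two substructures, each consisting of $2l$ states arranged in $l$ fully-connected layers, and we build at most $l$ further dummy-output substructures of size $O(l)$ each. Hence the total number of states is $O(n\cdot l + l^2)=O(n^2)$ and the number of transitions is likewise polynomial, so the skeleton of $K$ has polynomial size and can be emitted in polynomial time.

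Finally I would verify that the labeling is polynomial. Each state's label is determined by its layer index $i$ and position $j$, the $i$-th bits of $ID(u)$ and $ID(v)$ (one comparison each), whether $u$ is the $m$-th input gate or $v$ an output gate (a lookup in the precomputed gate table), and whether $i\le|e|$ for the $\textit{inp}$ marking. All of these are computable in time polynomial in $n$ per state, and there are polynomially many states, so the whole labeling — and therefore the entire Phase~A construction — runs in polynomial time. The main obstacle I expect is the bookkeeping around the preprocessing step: one must check that the TRUE/FALSE replacement and the duplicate-input elimination do not cascade into new duplicate-input gates, so that a single pass suffices and the size blow-up stays polynomial; once this is established, the remaining counting arguments are routine.
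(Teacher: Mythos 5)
Your proposal is correct and follows essentially the same route as the paper's proof: topological ordering for the IDs, bounding $l$ and the number of edges polynomially in $|\mathcal{C}_S|$, and observing that each edge contributes substructures whose form and labeling depend only on the two gates it connects. You simply spell out the state/transition counts and the labeling computation in more detail than the paper's terse argument does.
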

\begin{proof}
Since $\mathcal{C}_S$ is explicitly given as input we pre-process $\mathcal{C}_S$ in polynomial time in the size of the input.
The IDs can be distributed by calculating a topological order on the gates of $\mathcal{C}_S$ and we add at most $l$ dummy gates.
The substructures can be built by iterating over every edge and building both relevant substructures, as the form of the substructure only depends on the two gates the edge connects.
Last, $l$ and the number of edges in $\mathcal{C}_S$ are polynomial in the size of the input, so the structure has only polynomially many states and transitions.
\end{proof}

\begin{lemma}\label{proofs:fpexpcomplete:phasea:correct}
For all gates $u, v \in V(\mathcal{C}_S)$, all inputs $e$ to $\mathcal{C}_S$, and all atomic propositions $d\in\{\textit{pos},\textit{neg}\}$, it holds that a trace $\pi$ encoding $ID(u)$ with \textit{fromid}, $ID(v)$ with \textit{toid}, and $e$ with \textit{inp}, and containing $d$ is an element of $X$ iff all these three conditions hold: (1) $(u, v)\in E(\mathcal{C}_S)$; (2) $e$ is a valid input; and (3) for input $e$ this edge carries a positive value if $d = \textit{pos}$ and a negative value if $d = \textit{neg}$.
\end{lemma}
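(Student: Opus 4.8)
The plan is to exploit that the outermost quantifier $(X,\curlyvee,\varphi_{fp})$ with $\varphi_{fp}=\varphi_{\textit{input}}\land\varphi_{\textit{and}}\land\varphi_{\textit{or}}\land\varphi_{\textit{not}}$ denotes the unique least fixpoint (so $\textit{sol}$ is a singleton and it is legitimate to speak of \emph{the} set $X$), and to write $X=\bigcup_{m\ge 0}X_m$ with $X_0=\emptyset$ and $X_{m+1}$ obtained from $X_m$ by adding every trace forced by one of the step implications. I would then prove the two directions of the biconditional by complementary inductions.

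For the ``only if'' direction (soundness: every trace of $X$ is value-correct) I would induct on the first iteration $m$ at which a trace enters $X$. Conditions (1) and (2) are structural: since $X\subseteq\mathfrak G=\textit{Traces}(K)$, any $\pi\in X$ is a genuine trace of $K$, and by the construction of the phase-A substructures such a trace carrying \textit{fromid}$=ID(u)$, \textit{toid}$=ID(v)$ and a well-formed \textit{inp}-encoding of $e$ exists only when $(u,v)\in E(\mathcal C_S)$ and $e$ is a valid input. The substance is condition (3). In the base case the trace is added by $\varphi_{\textit{input}}$, so $(u,v)$ leaves an input gate; the removal of the states $a_{u,v,m,1,\textit{pos}}$ and $a_{u,v,m,2,\textit{neg}}$ guarantees that the only such trace present in $K$ carries \textit{pos} exactly when the $m$-th input bit is $1$, i.e.\ exactly the value the edge carries. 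In the inductive step the trace is forced by $\varphi_{\textit{and}}$, $\varphi_{\textit{or}}$ or $\varphi_{\textit{not}}$ from traces $\pi_2$ (and $\pi_3$) that already lie in $X$; the conjuncts $\LTLglobally(\textit{inp}_{\pi_1}\leftrightarrow\textit{inp}_{\pi_2})$ together with the \textit{toid}/\textit{fromid} matching ensure that these witnesses encode the same input $e$ on exactly the incoming edges of the source gate $u$, while the Boolean guard of each step formula mirrors the semantics of $\mathcal T$; invoking the induction hypothesis on $\pi_2,\pi_3$ then shows that $\pi_1$ carries $\mathcal T(u)$, the correct value of $(u,v)$.

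For the ``if'' direction (completeness: every value-correct trace lies in $X$) I would induct on $ID(u)$, which is well-founded because the IDs strictly increase along every edge. If $u$ is an input gate it has no predecessors and $\varphi_{\textit{input}}$ adds the unique value-correct trace directly. Otherwise every incoming edge $(w,u)$ has $ID(w)<ID(u)$, so by the induction hypothesis the trace encoding the correct value $\mathcal T(w)$ of $(w,u)$ under $e$ already belongs to $X$; depending on $\textit{sort}(u)$ and on $\mathcal T(u)$ the matching step formula then fires and inserts the trace for $(u,v)$. The main obstacle is the bookkeeping of which step formula is responsible in which case: positive outputs of AND gates and negative outputs of OR gates are produced by $\varphi_{\textit{and}}$, whereas negative outputs of AND gates and positive outputs of OR gates are produced by $\varphi_{\textit{or}}$, so the case split crosses the two formulas, and in each case one must check that the guard (including the distinctness requirement $\pi_2\neq_{AP}\pi_3$ in $\varphi_{\textit{and}}$) is both satisfiable by the intended witnesses and unsatisfiable by any incorrect combination, so that no spurious trace is added. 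Here the earlier preprocessing that replaces the TRUE and FALSE gates by $p\land\neg p$ and $p\lor\neg p$ is precisely what lets me confine the analysis to the sorts \textit{input}, \textit{and}, \textit{or} and \textit{not}.
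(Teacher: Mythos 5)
Your proposal is correct, and it establishes the same facts the paper does (conditions (1) and (2) are discharged structurally from the shape of the phase-A substructures; the substance is the value-correctness of condition (3), argued via the case analysis on $\textit{sort}(u)$ and the cross-cutting assignment of positive-AND/negative-OR to $\varphi_{\textit{and}}$ and negative-AND/positive-OR to $\varphi_{\textit{or}}$). The organization differs, though: the paper proves the full biconditional in a \emph{single} induction on $ID(u)$, exploiting that every witness trace required by a step formula represents an edge leaving a gate of strictly smaller ID, so the induction hypothesis simultaneously delivers ``the correct witnesses are already in $X$'' (for completeness) and ``only correct witnesses can be in $X$'' (for soundness). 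You instead split the biconditional into two complementary inductions -- soundness by induction on the fixpoint stage $m$ at which a trace enters $X$, completeness by induction on $ID(u)$. Both are sound; your soundness direction is arguably more robust because it only uses the fact that witnesses must occur at an earlier fixpoint stage, rather than relying on the ID ordering, and it mirrors the stage-indexed induction the paper itself uses for the Horn-satisfiability reduction (Lemma~\ref{proofs:fppcomplete:correct}). The paper's single induction is more compact but tacitly needs the observation that the two directions can share the same well-founded order. Your attention to the distinctness guard $\pi_2\neq_{AP}\pi_3$ (together with the preprocessing that removes double edges and eliminates TRUE/FALSE gates) is exactly the point that makes the AND/OR cases go through, and matches the paper's treatment.
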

\begin{proof}
All formulas adding traces to $X$ only add traces that are labeled with $a$ and therefore come from the substructures described in phase A.
We say that a trace $\pi$ \emph{represents edge} $(u, v)$ if $\pi$ has $ID(u)$ encoded with $\textit{fromid}$ and $ID(v)$ encoded with $\textit{toid}$.
Further, we say that $\pi$ \emph{represents a positive value} if it is labeled with \textit{pos} and it \emph{represents a negative value} if it is labeled with \textit{neg}.

Because these substructures only contain traces encoding edges of $\mathcal{C}_S$ there can not be a trace representing an edge that does not exist in $\mathcal{C}_S$.

Additionally, only the first $|e|$ layers are labeled with $\textit{inp}$ such that $e$ can not have positive inputs for dummy input gates.
Thus, every encoded input is valid.

Further, we observe that $\textit{sort}(u)$ can be found as follows:
Let $\pi$ represent edge $(u, v)\in E(\mathcal{C}_S)$ and $\pi_2$ represent edge $(w, u)\in E(\mathcal{C}_S)$.
Then $\pi_2$ is labeled with the atomic proposition representing $\textit{sort}(u)$.

We show, by induction over  $ID(u)$, that $\pi$ is in $X$ iff $\pi$ represents the correct value.
We distinguish between the different types of $u$.

\begin{description}[nosep]
\item[$u$ is an input gate:] By construction of the substructures only traces where $u$ is an input gate are labeled with $\textit{input}$.
We show that the process of adding all traces on which atomic proposition $\textit{input}$ appears, adds $\pi$ to $X$ iff $\pi$ represents the correct value for the edge $(u, v)$.
Let $m\in\mathbb N$ be such that $u$ takes the $m$-th input bit of $\mathcal{C}_S$.
If the represented value is positive, then $\pi$ should be added iff the $m$-th bit of the encoded input $e$ is 1.
The only substructure containing traces that represent $u, v$ and $\textit{pos}$ the state $a_{u, v, m, 1, \textit{pos}}$ is missing and $\pi$ has to visit $a_{u, v, m, 2, \textit{pos}}$.
Because $a_{u, v, m, 2, \textit{pos}}$ is labeled with \textit{inp} and it is the $m$-th state encoding $e$, the $m$-th bit of $e$ has to be 1.
The case where the represented value is negative is analogous.
\item[$u$ is an AND gate:] If $u$ is an AND gate, then there are gates $s, s'\in V(\mathcal{C}_S)$ with $s\neq s'$ such that $(s, u), (s', u)\in E(\mathcal{C}_S)$.
By the distribution of the IDs, we know $ID(s), ID(s')~<~ID(u)$.
Thus by induction hypothesis, we know that for every valid input $e$ there are traces $\pi_2, \pi_3$ representing $(s, u)$ and $(s', u)$ in $X$, that represent the correct value of the edges for input~$e$.
If the represented value on $\pi$ positive then it should be included in $X$ iff $\pi_2$ and $\pi_3$ also represent a positive value.
The only formula that adds traces starting in an AND gate ($\LTLfinally \textit{and}_{\pi_2}$) and representing a positive value ($\LTLfinally\textit{pos}_\pi$) is $\varphi_{\textit{and}}$, which adds $\pi$ iff $\pi, \pi_2$ and $\pi_3$ represent positive value.
If the represented value is negative then $\pi$ should be included in $X$ iff $\pi_2$ or $\pi_3$ is represent negative value. The formula
$\varphi_{\textit{or}}$ is the only formula adding traces that represent negative value and $u$ is an AND gate.
$\varphi_{\textit{or}}$ adds $\pi$ if it can find a negative trace for the same input and entering $u$.
This is the case iff $\pi_2$ or $\pi_3$ is~negative.
\item[$u$ is an OR gate:] It holds that $p\lor q\equiv \neg(\neg p\land \neg q)$.
Every formula adding traces leaving AND gates, namely $\varphi_\textit{and}$ and $\varphi_\textit{or}$, also add traces leaving OR gates but with swapped $\textit{pos}$ and $\textit{neg}$ on $\pi,\pi_2,\pi_3$.
Therefore, this case is analogous to the case where $u$ is an AND gate.
\item[$u$ is a NOT gate:] If $u$ is a NOT gate, then there exists exactly one gate $s \in V(\mathcal{C}_S)$ such that $(s, u)\in E(\mathcal{C}_S)$.
We know that $ID(s) < ID(u)$ by the distribution of the IDs.
We know by induction hypothesis that for every valid input $e$ there is a trace $\pi_2\in X$ that represents $(s, u)$.
If the represented value is positive, then $\pi$ should be added iff $\pi_2$ represents a negative value and if the represented value is negative, then $\pi$ should be added iff $\pi_2$ is negative.
The formula $\varphi_{\textit{not}}$ does exactly this.
\end{description}
This concludes the construction and correctness of phase A. 
\end{proof}

\paragraph*{Phase B}
In phase B we collect one trace per input-output pair of $\mathcal{C}_S$ in a set $Y$.
In the set $X$ of phase A, the output of $\mathcal{C}_S$ for every input is distributed over several traces (one trace per output gate). Here, we would like to collect them on one trace using a binary representation.

\begin{figure}[t]
\begin{subfigure}{0.47\textwidth}
\resizebox{.9\linewidth}{!}{
\begin{tikzpicture}[shorten >=1pt,node distance=2cm,on grid,auto]
  \tikzstyle{every state}=[fill={rgb:black,1;white,10},minimum size=30pt,inner sep=0pt]

  \node[state,initial, ] (s_0)  {};
  \node[state, align=center, node distance=2.5cm] (a11) [below of=s_0] {$\{b, \textit{inp},$\\$\textit{outp}\}$};
  \node[state] (a12) [right of=a11] {$\{b, \textit{inp}\}$};
  \node[state] (a13) [right of=a12] {$\{b, \textit{outp}\}$};
  \node[state] (a14) [right of=a13] {$\{b\}$};
  \node[state, node distance=2.5cm, fill={rgb:white,10}] (b12) [left of=a11] {$\{b\}$};
  \node[state, fill={rgb:white,10}] (b11) [left of=b12] {$\{b, \textit{inp}\}$};

  \node[state, align=center, node distance=4cm] (a21) [below of=a11] {$\{\textit{inp},$\\$\textit{outp}\}$};
  \node[state] (a22) [right of=a21] {$\{\textit{inp}\}$};
  \node[state] (a23) [right of=a22] {$\{\textit{outp}\}$};
  \node[state] (a24) [right of=a23] {$\{\}$};
  \node[state, node distance=2.5cm, fill={rgb:white,10}] (b22) [left of=a21] {$\{\}$};
  \node[state, fill={rgb:white,10}] (b21) [left of=b22] {$\{\textit{inp}\}$};

  \node[state, align=center, node distance=4cm] (a31) [below of=a21] {$\{\textit{outp},$\\$\textit{inp, end}\}$};
  \node[state, align=center] (a32) [right of=a31] {$\{\textit{inp}$\\$\textit{end}\}$};
  \node[state, align=center] (a33) [right of=a32] {$\{\textit{outp}$\\$\textit{end}\}$};
  \node[state] (a34) [right of=a33] {$\{\textit{end}\}$};
  \node[state, node distance=2.5cm, fill={rgb:white,10}] (b32) [left of=a31] {$\{\}$};
  \node[state, fill={rgb:white,10}] (b31) [left of=b32] {$\{\textit{inp}\}$};

  \path[->]
  (s_0) edge  node {} (a11)
  (s_0) edge  node {} (a12)
  (s_0) edge  node {} (a13)
  (s_0) edge  node {} (a14)
  (s_0) edge  node {} (b12)
  (s_0) edge  node {} (b11);

  \path[->]
  (a11) edge  node {} (a21)
  (a11) edge  node {} (a22)
  (a11) edge  node {} (a23)
  (a11) edge  node {} (a24)
  (a11) edge  node {} (b22)
  (a11) edge  node {} (b21)
  (a12) edge  node {} (a21)
  (a12) edge  node {} (a22)
  (a12) edge  node {} (a23)
  (a12) edge  node {} (a24)
  (a12) edge  node {} (b22)
  (a12) edge  node {} (b21)
  (a13) edge  node {} (a21)
  (a13) edge  node {} (a22)
  (a13) edge  node {} (a23)
  (a13) edge  node {} (a24)
  (a13) edge  node {} (b22)
  (a13) edge  node {} (b21)
  (a14) edge  node {} (a21)
  (a14) edge  node {} (a22)
  (a14) edge  node {} (a23)
  (a14) edge  node {} (a24)
  (a14) edge  node {} (b22)
  (a14) edge  node {} (b21)
  (b11) edge  node {} (b22)
  (b11) edge  node {} (b21)
  (b12) edge  node {} (b22)
  (b12) edge  node {} (b21);

  \path[->]
  (a21) edge  node {} (a31)
  (a21) edge  node {} (a32)
  (a21) edge  node {} (a33)
  (a21) edge  node {} (a34)
  (a21) edge  node {} (b32)
  (a21) edge  node {} (b31)
  (a22) edge  node {} (a31)
  (a22) edge  node {} (a32)
  (a22) edge  node {} (a33)
  (a22) edge  node {} (a34)
  (a22) edge  node {} (b32)
  (a22) edge  node {} (b31)
  (a23) edge  node {} (a31)
  (a23) edge  node {} (a32)
  (a23) edge  node {} (a33)
  (a23) edge  node {} (a34)
  (a23) edge  node {} (b32)
  (a23) edge  node {} (b31)
  (a24) edge  node {} (a31)
  (a24) edge  node {} (a32)
  (a24) edge  node {} (a33)
  (a24) edge  node {} (a34)
  (a24) edge  node {} (b32)
  (a24) edge  node {} (b31)
  (b21) edge  node {} (b32)
  (b21) edge  node {} (b31)
  (b22) edge  node {} (b32)
  (b22) edge  node {} (b31);

  \path[->]
  (a31) edge  [loop below] node {} ( )
  (a32) edge  [loop below] node {} ( )
  (a33) edge  [loop below] node {} ( )
  (a34) edge  [loop below] node {} ( )
  (b31) edge  [loop below] node {} ( )
  (b32) edge  [loop below] node {} ( );
\end{tikzpicture}
}
\caption{Substructures for phase B with $l=|e| = 3$. White states are additionally labeled with $\textit{incomplete}$.}
\label{proofs:fpexpcomplete:phaseb:example}
\end{subfigure}
\hfill
\begin{subfigure}{0.47\textwidth}
\resizebox{.9\linewidth}{!}{
\begin{tikzpicture}[shorten >=1pt,node distance=2.5cm,on grid,auto]
  \tikzstyle{every state}=[fill={rgb:white,10},minimum size=17pt,inner sep=0pt]
  \node[state,initial, ] (s_0)  {};
  \node[state,] (c14) [below left=3cm and 1cm of s_0] {$\{c\}$};
  \node[state, align=center, node distance=4cm] (c15) [below right=3cm and 1cm of s_0] {$\{\textit{inp, c,}$\\$\textit{outp}\}$};
  \node[state, node distance=1.5cm] (c13) [left of=c14] {$\{\textit{outp, c}\}$};
  \node[state, node distance=1.5cm] (c12) [left of=c13] {$\{\textit{inp, c}\}$};
  \node[state, align=center, node distance=1.5cm] (c11) [left of=c12] {$\{\textit{inp},c,$\\$\textit{outp}\}$};
  \node[state, node distance=1.5cm] (c16) [right of=c15] {$\{\textit{inp, c}\}$};
  \node[state, node distance=1.5cm] (c17) [right of=c16] {$\{\textit{outp, c}\}$};
  \node[state, node distance=1.5cm] (c18) [right of=c17] {$\{c\}$};

  \node[state, align=center] (c21) [below of=c11] {$\{\textit{inp},$\\$\textit{outp}\}$};
  \node[state,] (c22) [below of=c12] {$\{\textit{inp}\}$};
  \node[state,] (c23) [below of=c13] {$\{\textit{outp}\}$};
  \node[state,] (c24) [below of=c14] {$\{\}$};
  \node[state, align=center] (c25) [below of=c15] {$\{\textit{inp},$\\$\textit{outp}\}$};
  \node[state,] (c26) [below of=c16] {$\{\textit{inp}\}$};
  \node[state,] (c27) [below of=c17] {$\{\textit{outp}\}$};
  \node[state,] (c28) [below of=c18] {$\{\}$};

  \node[state, fill={rgb:black,1;white,10}, align=center] (c31) [below of=c21] {$\{\textit{inp},$\\$\textit{outp}\}$};
  \node[state, fill={rgb:black,1;white,10},] (c32) [below of=c22] {$\{\textit{inp}\}$};
  \node[state, fill={rgb:black,1;white,10},] (c33) [below of=c23] {$\{\textit{outp}\}$};
  \node[state, fill={rgb:black,1;white,10},] (c34) [below of=c24] {$\{\}$};
  \node[state, fill={rgb:black,1;white,10}, align=center] (c35) [below of=c25] {$\{\textit{inp},$\\$\textit{outp}\}$};
  \node[state, fill={rgb:black,1;white,10},] (c36) [below of=c26] {$\{\textit{inp}\}$};
  \node[state, fill={rgb:black,1;white,10},] (c37) [below of=c27] {$\{\textit{outp}\}$};
  \node[state, fill={rgb:black,1;white,10},] (c38) [below of=c28] {$\{\}$};

  \node[state, align=center] (c41) [below of=c31] {$\{\textit{inp},$\\$\textit{outp}\}$};
  \node[state,] (c42) [below of=c32] {$\{\textit{inp}\}$};
  \node[state,] (c43) [below of=c33] {$\{\textit{outp}\}$};
  \node[state,] (c44) [below of=c34] {$\{\}$};
  \node[state, align=center] (c45) [below of=c35] {$\{\textit{inp},$\\$\textit{outp}\}$};
  \node[state,] (c46) [below of=c36] {$\{\textit{inp}\}$};
  \node[state,] (c47) [below of=c37] {$\{\textit{outp}\}$};
  \node[state,] (c48) [below of=c38] {$\{\}$};

  \node[state, align=center] (c51) [below of=c41] {$\{\textit{inp},$\\$\textit{outp}\}$};
  \node[state,] (c52) [below of=c42] {$\{\textit{inp}\}$};
  \node[state,] (c53) [below of=c43] {$\{\textit{outp}\}$};
  \node[state,] (c54) [below of=c44] {$\{\}$};
  \node[state, align=center] (c55) [below of=c45] {$\{\textit{inp},$\\$\textit{outp}\}$};
  \node[state,] (c56) [below of=c46] {$\{\textit{inp}\}$};
  \node[state,] (c57) [below of=c47] {$\{\textit{outp}\}$};
  \node[state,] (c58) [below of=c48] {$\{\}$};

  \path[->]
  (s_0) edge [bend right=10] node {} (c11)
  (s_0) edge  node {} (c12)
  (s_0) edge  node {} (c13)
  (s_0) edge  node {} (c14)
  (s_0) edge  node {} (c15)
  (s_0) edge  node {} (c16)
  (s_0) edge  node {} (c17)
  (s_0) edge [bend left=10] node {} (c18);

  \path[->]
  (c11) edge  node {} (c21)
  (c11) edge  node {} (c22)
  (c11) edge  node {} (c23)
  (c11) edge  node {} (c24)
  (c12) edge  node {} (c21)
  (c12) edge  node {} (c22)
  (c12) edge  node {} (c23)
  (c12) edge  node {} (c24)
  (c13) edge  node {} (c21)
  (c13) edge  node {} (c22)
  (c13) edge  node {} (c23)
  (c13) edge  node {} (c24)
  (c14) edge  node {} (c21)
  (c14) edge  node {} (c22)
  (c14) edge  node {} (c23)
  (c14) edge  node {} (c24);

  \path[->]
  (c21) edge  node {} (c31)
  (c21) edge  node {} (c32)
  (c21) edge  node {} (c33)
  (c21) edge  node {} (c34)
  (c22) edge  node {} (c31)
  (c22) edge  node {} (c32)
  (c22) edge  node {} (c33)
  (c22) edge  node {} (c34)
  (c23) edge  node {} (c31)
  (c23) edge  node {} (c32)
  (c23) edge  node {} (c33)
  (c23) edge  node {} (c34)
  (c24) edge  node {} (c31)
  (c24) edge  node {} (c32)
  (c24) edge  node {} (c33)
  (c24) edge  node {} (c34);

  \path[->]
  (c31) edge  node {} (c41)
  (c31) edge  node {} (c42)
  (c31) edge  node {} (c43)
  (c31) edge  node {} (c44)
  (c32) edge  node {} (c41)
  (c32) edge  node {} (c42)
  (c32) edge  node {} (c43)
  (c32) edge  node {} (c44)
  (c33) edge  node {} (c41)
  (c33) edge  node {} (c42)
  (c33) edge  node {} (c43)
  (c33) edge  node {} (c44)
  (c34) edge  node {} (c41)
  (c34) edge  node {} (c42)
  (c34) edge  node {} (c43)
  (c34) edge  node {} (c44);

  \path[->]
  (c41) edge  node {} (c51)
  (c41) edge  node {} (c52)
  (c41) edge  node {} (c53)
  (c41) edge  node {} (c54)
  (c42) edge  node {} (c51)
  (c42) edge  node {} (c52)
  (c42) edge  node {} (c53)
  (c42) edge  node {} (c54)
  (c43) edge  node {} (c51)
  (c43) edge  node {} (c52)
  (c43) edge  node {} (c53)
  (c43) edge  node {} (c54)
  (c44) edge  node {} (c51)
  (c44) edge  node {} (c52)
  (c44) edge  node {} (c53)
  (c44) edge  node {} (c54);

  \path[->]
  (c51) edge [loop below] node {} ( )
  (c52) edge [loop below] node {} ( )
  (c53) edge [loop below] node {} ( )
  (c54) edge [loop below] node {} ( );

  \path[->]
  (c15) edge  node {} (c25)
  (c15) edge  node {} (c26)
  (c15) edge  node {} (c27)
  (c15) edge  node {} (c28)
  (c16) edge  node {} (c25)
  (c16) edge  node {} (c26)
  (c16) edge  node {} (c27)
  (c16) edge  node {} (c28)
  (c17) edge  node {} (c25)
  (c17) edge  node {} (c26)
  (c17) edge  node {} (c27)
  (c17) edge  node {} (c28)
  (c18) edge  node {} (c25)
  (c18) edge  node {} (c26)
  (c18) edge  node {} (c27)
  (c18) edge  node {} (c28);

  \path[->]
  (c25) edge  node {} (c35)
  (c25) edge  node {} (c36)
  (c25) edge  node {} (c37)
  (c25) edge  node {} (c38)
  (c26) edge  node {} (c35)
  (c26) edge  node {} (c36)
  (c26) edge  node {} (c37)
  (c26) edge  node {} (c38)
  (c27) edge  node {} (c35)
  (c27) edge  node {} (c36)
  (c27) edge  node {} (c37)
  (c27) edge  node {} (c38)
  (c28) edge  node {} (c35)
  (c28) edge  node {} (c36)
  (c28) edge  node {} (c37)
  (c28) edge  node {} (c38);

  \path[->]
  (c35) edge  node {} (c45)
  (c35) edge  node {} (c46)
  (c35) edge  node {} (c47)
  (c35) edge  node {} (c48)
  (c36) edge  node {} (c45)
  (c36) edge  node {} (c46)
  (c36) edge  node {} (c47)
  (c36) edge  node {} (c48)
  (c37) edge  node {} (c45)
  (c37) edge  node {} (c46)
  (c37) edge  node {} (c47)
  (c37) edge  node {} (c48)
  (c38) edge  node {} (c45)
  (c38) edge  node {} (c46)
  (c38) edge  node {} (c47)
  (c38) edge  node {} (c48);

  \path[->]
  (c45) edge  node {} (c55)
  (c45) edge  node {} (c56)
  (c45) edge  node {} (c57)
  (c45) edge  node {} (c58)
  (c46) edge  node {} (c55)
  (c46) edge  node {} (c56)
  (c46) edge  node {} (c57)
  (c46) edge  node {} (c58)
  (c47) edge  node {} (c55)
  (c47) edge  node {} (c56)
  (c47) edge  node {} (c57)
  (c47) edge  node {} (c58)
  (c48) edge  node {} (c55)
  (c48) edge  node {} (c56)
  (c48) edge  node {} (c57)
  (c48) edge  node {} (c58);

  \path[->]
  (c55) edge [loop below] node {} ( )
  (c56) edge [loop below] node {} ( )
  (c57) edge [loop below] node {} ( )
  (c58) edge [loop below] node {} ( );

\end{tikzpicture}
}
\caption{The substructure built for phase C with $l=5$ and two bits to encode a gate in $C$. The states on the left are additionally labeled with $\textit{pos}$ and the states on the right are labeled with $\textit{neg}$. Gray states are labeled with \textit{kstart} and \textit{nstart}.}
\label{proofs:fpexpcomplete:phasec:example}
\end{subfigure}

\end{figure}

We build two substructures for this process.
Similar to phase A, both structures consist of $l$ layers.
One structure has four states per layer, named $b_{i, j}$ with $i\in\{1,...,l\}, j\in\{1, 2, 3, 4\}$.
The other structure has two states per layer, which are named $b'_{i, j}$ with $i\in\{1,...,l\}, j\in\{1, 2\}$.

States $b_{1, j}$ and $b'_{1, j}$ are labeled with $b$ and there are transitions from the initial state of $K$ to $b_{1, j}, b'_{1, j}$. 
States $b_{i, 1}, b_{i, 2}$ and $b'_{i, 1}$ with $i\le |e|$ are labeled with atomic proposition $\textit{inp}$ and states $b_{i, 1}$ and $b_{i, 3}$ are labeled with $\textit{outp}$.
Additionally, all states $b'_{i, j}$ are labeled with atomic proposition $\textit{incomplete}$ and states $b_{l, j}$ are labeled with $\textit{end}$.
An example of a substructure for phase B is given in \Cref{proofs:fpexpcomplete:phaseb:example}.

As in phase A, every state is connected to all states in the next layer of their substructure and the states of the last layer of both structures have self-loops.
In contrast to phase A, all states $b_{i, j}$ with $i \in\{1, ..., l-1\}$ have a transition to states $b'_{i+1, 1}$ and $b'_{i+1, 2}$.

To summarize all traces in the substructure we build in phase B:
All traces have a number of length $l$ with proposition $\textit{inp}$ encoded on them.
Additionally, they have numbers of different lengths encoded with $\textit{outp}$ until they reach a state labeled with $\textit{incomplete}$.
We call these traces \emph{incomplete}.
Traces whose $\textit{outp}$ number consisting of $l$ bits never visit a state labeled with $\textit{incomplete}$ but visit a state labeled with $\textit{end}$.
We iteratively compute all traces encoding complete input-output pairs of~$\mathcal C_S$.
Each trace encodes an input with \textit{inp} from the beginning but the number of encoded output bits with \textit{outp} will increase.
To denote that the encoded output ends, the trace changes into the substructure that has its states labeled with \textit{incomplete}.

The formula $\varphi_B$ first computes the intermediate set $Y'$ which contains auxiliary traces labeled with \textit{incomplete} as well as the traces we are interested in.
The set $Y$ then consists only of traces from $Y'$ which do not visit a state labeled with \textit{incomplete}.
Therefore, $\varphi_B$ is defined as follows. 
\begin{align*}
\varphi_B = (Y', \curlyvee, \varphi_B'\land\varphi_B'').(Y,\curlyvee, \forall\pi\in Y'. \LTLglobally\neg\textit{incomplete}\to\pi\triangleright Y).\varphi_C
\end{align*}
The set $Y'$ should contain for every input $e$ and every prefix $e'$ of the corresponding output, one trace $\pi$, which has $e$ encoded with $\textit{inp}$ and $e'$ encoded with $\textit{outp}$.
For this, we first add all traces where $e'$ has length 0: 
\begin{align*}
\varphi_B' = \forall \pi\in\mathfrak G. \LTLfinally b\land\LTLnext\textit{incomplete}\to\pi\triangleright Y'
\end{align*}
$\varphi_B'$ adds to $Y'$ all traces which immediately visit a state labeled with $\textit{incomplete}$.
\begin{align*}
&\varphi_B'' = \forall\pi_1\in\mathfrak G.\forall\pi_2\in Y'.\forall\pi_3\in \mathfrak G.\forall\pi_4\in X.\LTLfinally b_{\pi_1}\land(\pi_3 =_{\textit{AP}}\pi_4\lor\LTLfinally \textit{dummy}_{\pi_3})\\
&\land \LTLglobally((\textit{inp}_{\pi_1}\leftrightarrow\textit{inp}_{\pi_2})\land(\textit{inp}_{\pi_2}\leftrightarrow\textit{inp}_{\pi_3}))\land (\textit{outp}_{\pi_1}\leftrightarrow\textit{outp}_{\pi_2})~\LTLuntil~\textit{incomplete}_{\pi_2}\\
&\land \LTLfinally(\neg\textit{incomplete}_{\pi_2}\land\LTLnext\textit{incomplete}_{\pi_2}\land\LTLnext(\neg\textit{incomplete}_{\pi_1}\land
(\textit{end}_{\pi_1}\lor\LTLnext\textit{incomplete}_{\pi_1})\\
&\quad\quad\quad\land\textit{bit}_{\pi_3}\land(\textit{outp}_{\pi_1}\leftrightarrow\LTLfinally\textit{pos}_{\pi_3})))
\to\pi_1\triangleright Y'
\end{align*}
The formula $\varphi''_B$ adds to $Y'$ traces which have one more output bit encoded on them than the traces added in the previous iteration.
This means they visit a state labeled with \textit{incomplete} one step later.
We quantify over traces $\pi_1, \pi_2, \pi_3, \pi_4$:
trace $\pi_2$ is the trace that we want to continue by one output bit, trace $\pi_1$ is the trace that we might want to add and trace $\pi_3$ is the trace that indicates the output of $\mathcal C_S$ at the bit by which $\pi_2$ should be continued.
Trace $\pi_3$ is either a trace representing a dummy output or $\pi_3$ is equal to $\pi_4$ and therefore $\pi_3\in X$.
This is the only purpose of $\pi_4$.

We check whether all three traces have the same input encoded on them and whether $\pi_1$ and $\pi_2$ have the same output encoded until $\pi_2$ enters the $\textit{incomplete}$ states.
Then, we identify the position $m$ one before $\pi_2$ enters a state labeled with \textit{incomplete}.
As $\pi_1$ should encode one bit of output more than $\pi_2$ it should visit a state labeled \textit{incomplete} for the first time one after $\pi_1$ does at position $m+1$.
Alternatively, the encoded output has reached its final length $l$ in which case $\pi_1$ can also visit a state labeled with \textit{end}.
To ensure that $\pi_3$ is a trace responsible for output bit $m$, we also require that it is marked with \textit{bit} at $m+1$.
Finally, $\pi_1$ is labeled with \textit{outp} at position $m+1$ iff $\pi_3$ represents positive value.

\begin{example}
We follow \Cref{proofs:fpexpcomplete:phasea:example:circuittext}.
To demonstrate how phase B works we focus on traces that specify the behavior of $\mathcal{C}_S$ for the input $100$.
The output of $\mathcal{C}_S$ for input $100$ is $010$.
Therefore $Y'$ should contain the following trace:
$
\{\} \{\textit{inp}\}\{\textit{outp}\}\{\textit{end}\}\cdots
$. The formula~$\varphi'_B$ is only satisfied if every trace that immediately visits an incomplete state is contained in~$Y'$.
This also leads to the fact that $Y'$ contains a trace $t'_1 = \{\}\{\textit{inp, incomplete}\}\{\textit{incomplete}\}\{\textit{incomplete}\}\cdots$.
Once $t'_1$ is an element of $Y'$ we can instantiate $\varphi''_B$ as follows:
\begin{align*}
\pi_1, t'_2 &=~ \{\}\{\textit{inp}\}\{\textit{incomplete}\}\{\textit{incomplete}\}\cdots\\
\pi_2, t'_1 &=~ \{\}\{\textit{inp, incomplete}\}\{\textit{incomplete}\}\{\textit{incomplete}\}\cdots\\
\pi_3,\pi_4, t_{16} &=~ \{\}\{\textit{neg, bit, toid, inp, output}\}\{\textit{neg, fromid}\}\{\textit{neg, fromid, toid}\}\cdots
\end{align*}
This is the only instantiation, where $\pi_3 =_{\textit{AP}}\pi_4$, all four traces encode input $100$, $\pi_1$ and $\pi_2$ agree on the output until $\pi_2$ visits an incomplete state (which it does immediately), and $\pi_1$ continues $\pi_2$ according to $\pi_3$.
That means that at the point where $\pi_2$ visits a state not labeled with $\textit{incomplete}$ and visits states labeled with $\textit{incomplete}$ in the next steps (which is for this instantiation of $\pi_2$ directly at the beginning of the trace) hold the following conditions:
\begin{itemize}[nosep]
\item $\pi_1$ does not visit an incomplete state in the next step. This holds since the second state of $t'_2$ is only labeled with $\textit{inp}$.
    \item $\pi_1$ visits an incomplete state in two steps, or a state labeled with $\textit{end}$ in the next step. This holds since the third state of $t'_2$ is labeled with $\textit{incomplete}$.
\item $\pi_3$ is labeled with $\textit{bit}$ in the next step. This holds since $t_{16}$ represents an edge to the first output gate and is therefore labeled with $\textit{bit}$ in its second state.
\item $\pi_1$ is labeled with $\textit{outp}$ in the next state iff $\pi_3$ visits a state labeled with $\textit{pos}$. This holds since $t'_2$ is not labeled with $\textit{outp}$, and $t_{16}$ is a negative trace.
\end{itemize}
The next fixpoint iteration works very similarly.
\begin{align*}
\pi_1, t'_3 &= \{\}\{\textit{inp}\}&&\{\textit{outp}\}\{\textit{incomplete}\}\cdots\\
\pi_2, t'_2 &= \{\}\{\textit{inp}\}&&\{\textit{incomplete}\}\{\textit{incomplete}\}\cdots\\
\pi_3, \pi_4, t_{19} &= \{\}\{\textit{pos, inp, fromid, output}\}&&\{\textit{pos, toid, fromid, bit}\}\{\textit{pos}\}\cdots
\end{align*}
For the last iteration, two things change.
First, there is no trace in $X$ for which $\textit{bit}$ holds in the third state after the initial state.
Therefore $\pi_3$ is instantiated with a trace for a dummy output gate.
Second, since $l=3$ the trace $t'_4$ which is added now to $Y'$ is not incomplete but ends in a state labeled with $\textit{end}$.
The instantiation now looks as follows:
\begin{align*}
\pi_1 &= t'_4 &=~& \{\}&&\{\textit{inp}\}&&\{\textit{outp}\}&&\{\textit{end}\}\cdots\\
\pi_2 &= t'_3 &=~& \{\}&&\{\textit{inp}\}&&\{\textit{outp}\}&&\{\textit{incomplete}\}\cdots\\
\pi_3 &=  &~& \{\}&&\{\textit{neg, inp, output, dummy}\}&&\{\textit{neg}\}&&\{\textit{neg, bit}\}\cdots
\end{align*}
We can pick an arbitrary trace for $\pi_4$.

Trace $t'_4$ is not labeled with $\textit{incomplete}$ anywhere, and therefore it is included in $Y$.
With~$t'_4$ there is a trace in $Y$ that for input $100$ encodes the correct corresponding output $010$.
\end{example}

\begin{lemma}
The construction of phase B takes polynomial time in $|\mathcal{C}_S|$.
\end{lemma}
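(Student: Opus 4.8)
The plan is to argue, exactly as in the proof of \Cref{proofs:fpexpcomplete:phasea:poly}, that the two substructures built in phase~B have only polynomially many states and transitions, all of which can be emitted in polynomial time, and that the fixpoint formula $\varphi_B$ is fixed and hence contributes nothing to the construction cost. The key conceptual point, which I would stress, is that although these substructures \emph{encode} exponentially many traces (all input/output bitstrings), the structure itself has only $O(l)$ states, so we never materialize the traces during construction.

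First I would recall that the layer count $l = \max(\lceil\log_2(n)\rceil, |e|, o)$ fixed in phase~A is polynomial in $|\mathcal{C}_S|$, since $n$, $|e|$, and $o$ are each bounded by the size of the input circuit, and both $l$ and $|e|$ are computable in polynomial time. The two substructures of phase~B consist of states $b_{i,j}$ with $j\in\{1,2,3,4\}$ and states $b'_{i,j}$ with $j\in\{1,2\}$, for $i\in\{1,\dots,l\}$, giving $4l$ and $2l$ states respectively, so the total is $6l = O(l)$ states, which is polynomial.

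Next I would bound the transitions. Every state is connected only to the states in the next layer of its own substructure, and additionally every state $b_{i,j}$ with $i<l$ carries two extra edges into the $b'$ substructure; the states of the last layer each carry a single self-loop. Hence each state has out-degree at most six, the total number of transitions is $O(l)$, and the entire adjacency can be produced in a single linear pass over the $l$ layers.

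Finally, the label of any state is decided by a constant number of comparisons on its layer index $i$ and position $j$: whether $i=1$ (label $b$), whether $i\le|e|$ together with the relevant $j$ (label $\textit{inp}$), whether $j\in\{1,3\}$ (label $\textit{outp}$), whether the state lies in the $b'$ substructure (label $\textit{incomplete}$), and whether $i=l$ (label $\textit{end}$). Each test runs in time polynomial in $|\mathcal{C}_S|$, and since $\varphi_B$, $\varphi_B'$, and $\varphi_B''$ do not depend on $\mathcal{C}_S$, the formula needs no computation at all. I do not anticipate a genuine obstacle here beyond the routine counting; combining these observations yields that the whole phase~B construction runs in polynomial time in $|\mathcal{C}_S|$.
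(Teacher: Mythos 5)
Your proof is correct and follows essentially the same route as the paper's, which simply observes that the phase~B substructures consist of $(2+4)\cdot l$ states with $l$ linear in $|\mathcal{C}_S|$; you merely spell out the transition and labeling details that the paper leaves implicit.
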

\begin{proof}
The structure built in phase B only depends on $l$.
Because $l$ is linear in the size of~$\mathcal{C}_S$ and the substructure consists of $(2 + 4)\cdot l$ states it can be built in polynomial time in the size of $\mathcal{C}_S$.
\end{proof}

\begin{lemma}
Let $\pi$ be a trace labeled with $b$ and encoding number $e$ with $\textit{inp}$.
Let $o$ be the number of length $m$ encoded on $\pi$ with $\textit{outp}$ before $\pi$ visits a state labeled with $\textit{incomplete}$.
Then it holds that $\pi\in Y'$ iff $o$ is a prefix of the output of $\mathcal{C}_S$ if $e$ was given as input.
\end{lemma}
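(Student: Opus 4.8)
The plan is to prove the equivalence by induction on the length $m$ of the output prefix $o$ encoded on $\pi$, which coincides with the fixpoint iteration in which $\pi$ can first enter $Y'$. Since $Y'$ is defined as the least fixpoint of $\varphi_B'\land\varphi_B''$ via the $\textit{sol}$ semantics, every trace in $Y'$ is added either by the base clause $\varphi_B'$ or by one application of the step clause $\varphi_B''$ on top of traces added in earlier iterations. I would therefore establish both inclusions simultaneously by showing, for each $m$, that the traces in $Y'$ whose $\textit{outp}$-prefix has length $m$ are exactly those whose $\textit{outp}$-prefix is a length-$m$ prefix of the output of $\mathcal{C}_S$ on input $e$.

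For the base case $m=0$ I would note that the empty word is a prefix of every output, so the claim asks that $Y'$ contain exactly the traces labeled $b$ that encode $e$ with $\textit{inp}$ and immediately move into a state labeled $\textit{incomplete}$. This is precisely what $\varphi_B' = \forall\pi\in\mathfrak G.\ \LTLfinally b\land\LTLnext\textit{incomplete}\to\pi\triangleright Y'$ forces, and no length-$0$ trace can fail the condition, so the base case matches exactly.

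For the inductive step, fix a trace $\pi$ with output prefix $o$ of length $m+1$. Since $m+1>0$, the clause $\varphi_B'$ cannot add $\pi$, so membership in $Y'$ is governed solely by $\varphi_B''$. In the forward direction I would assume $o$ is a length-$(m+1)$ prefix of $\mathcal{C}_S(e)$ and exhibit witnesses for the premise of $\varphi_B''$: take $\pi_2$ to be the trace agreeing with $\pi$ on $e$ and on the first $m$ output bits but entering $\textit{incomplete}$ one step earlier; by the induction hypothesis $\pi_2\in Y'$, since its length-$m$ prefix agrees with $\mathcal{C}_S(e)$. For the new bit I would invoke phase A correctness (\Cref{proofs:fpexpcomplete:phasea:correct}) to obtain a trace in $X$ representing the edge into the relevant output gate, marked with $\textit{bit}$ at the right position and labeled $\textit{pos}$ iff that output bit is $1$; this serves as $\pi_3$ with $\pi_4=\pi_3$, or, when the responsible output gate is a dummy, as a $\textit{dummy}$-labeled trace with $\pi_4$ arbitrary. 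The temporal conjuncts of $\varphi_B''$ then hold exactly at the position where $\pi_2$ enters $\textit{incomplete}$, so $\pi$ is added. Conversely, if $\pi\in Y'$ then some witnesses $\pi_2\in Y'$, $\pi_3$, $\pi_4$ satisfy the premise; the input- and output-agreement conjuncts force $\pi_2$ to carry input $e$ with an output prefix equal to the first $m$ bits of $o$, so the induction hypothesis yields that this prefix agrees with $\mathcal{C}_S(e)$, while the $\textit{bit}$-marking on $\pi_3$ together with \Cref{proofs:fpexpcomplete:phasea:correct} forces the new bit of $o$ to equal the corresponding output bit of $\mathcal{C}_S(e)$. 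Hence $o$ is a length-$(m+1)$ prefix of $\mathcal{C}_S(e)$.

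The main obstacle I anticipate is the bookkeeping of positional offsets: the step formula speaks of the position one before $\pi_2$ enters $\textit{incomplete}$, the $\textit{bit}$-mark on $\pi_3$ sits one position later, and the $X$-trace encodes the output bit at the $m+1$-st step, so I must line up ``the $m$-th output bit of $\mathcal{C}_S$'' with ``the $(m+1)$-st position of the trace'' consistently, and separately handle the boundary where $m+1=l$ (the trace ends in $\textit{end}$ rather than $\textit{incomplete}$) and the case where the responsible gate is a dummy fed by a $\textit{dummy}$-labeled trace instead of a trace in $X$. Verifying that the premise of $\varphi_B''$ admits only the intended witnesses, so that no spurious trace sneaks into the least fixpoint, is the delicate part, and this is exactly where the uniqueness of the $\textit{bit}$-marked, input-$e$ trace guaranteed by phase A is essential.
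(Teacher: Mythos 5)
Your proposal is correct and follows essentially the same route as the paper's proof: induction on the number $m$ of encoded output bits, with the base case discharged by $\varphi_B'$ (the empty prefix is trivially a prefix and exactly the immediately-\textit{incomplete} traces are forced into $Y'$) and the inductive step discharged by $\varphi_B''$ using the phase~A correctness lemma to supply the \textit{bit}-marked witness $\pi_3$ (or its \textit{dummy} counterpart) for the new output bit. The positional bookkeeping and the $\textit{end}$-versus-\textit{incomplete} boundary you flag are exactly the points the paper's argument also has to track, and your handling of them is consistent with it.
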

\begin{proof}
First, we observe that every trace that is in $Y'$ has to be from the substructures described in phase B because $\varphi_B'$ as well as $\varphi_B''$ are only satisfied if a trace is marked with $b$ before adding it to $Y'$.
We prove the remaining claim by induction over $m$.

\emph{Case $m = 0$:} If $o$ consists of zero bits, it is a prefix for every bit sequence.
Then, every trace encoding an input and zero bits of an output should be in $Y'$.
The set $Y'$ contains all these traces because otherwise, it would not satisfy~$\varphi'_B$.

\emph{Case $0 < m \le l$:} We know by induction hypothesis that $Y'$ contains a trace $\pi_2$ encoding $m-1$ output bits of $e$ with \textit{inp}.
Additionally, we know from \Cref{proofs:fpexpcomplete:phasea:correct} and the construction of the substructures for phase A that for input $e$ there exists a trace $\pi_3$ where the $m+1$-th step is labeled with $\textit{bit}$ and $e$ is encoded with \textit{inp}.
The trace $\pi_3$ is either in $X$ or labeled with $\textit{dummy}$.
It is labeled with $\textit{dummy}$ iff the $m$-th output gate is a dummy gate.
By \Cref{proofs:fpexpcomplete:phasea:correct}, trace $\pi_3$ correctly represents the value of the $m$-th output bit for input $e$ by visiting states labeled with \textit{pos} (or \textit{neg}, respectively).

If $\pi_3$ represents a positive value, then the $m$-th bit of the output is a 1.
In this case, $\pi$ should be in $Y'$ iff it continues the input and output encoded on $\pi_2$ and visits a state labeled with $\textit{outp}$ where $\pi_2$ first visits a state labeled with $\textit{incomplete}$.
Similarly, if $\pi_3$ is negative, then $\pi$ should be added if it continues $\pi_2$ by a state that is not labeled with $\textit{outp}$.

If the $m$-th output is a dummy output, then $K$ does not contain any traces encoding positive value for the $m$-th output.
The formula $\varphi_B''$ specifies exactly this behavior.
Trace $\pi$ is added to $Y'$ iff we find $\pi_2\in Y'$ and $\pi_3\in \mathfrak G$ such that $\pi$ is labeled with~$b$ somewhere, all three traces encode the same input, and $\pi$ and $\pi_2$ agree on the output until $\pi_2$ visits an $\textit{incomplete}$ state.

The formula $\neg\textit{incomplete}_{\pi_2}\land\LTLnext\textit{incomplete}_{\pi_2}$ is only satisfied at the $m$-th position because $\pi_2$ encodes $m-1$ bits of output.
In the $m+1$-th step several conditions must be true.

First, $\pi$ must not be labeled with $\textit{incomplete}$, and has to be labeled $\textit{incomplete}$ in the next step, or $\pi$ ends immediately.
This is true iff $\pi$ encodes exactly one output bit more than $\pi_2$, which means it encodes $m-1+1 = m$ bits of output.

Second, $\pi_3$ has to be labeled with $\textit{bit}$ at the $m+1$-th position which is per construction of the substructures in phase A only the case if it represents the output gate responsible for the $m$-th output bit.
Finally, $\pi$ has to be labeled with $\textit{outp}$ at position $m+1$ iff $\pi_3$ is labeled with $\textit{pos}$ somewhere (after the $m+1$-th position).
If $\pi$ visits $\textit{outp}$ in the $m+1$-th position, then the $m$-th bit of the output encoded with \textit{outp} is true, otherwise it is false.
Since all traces in $X$ which represent a positive value only visit states labeled with \textit{pos}, we have that $\LTLfinally\textit{pos}_{\pi_3}$ is satisfied at this position iff $\pi_3$ represents positive value.
Therefore, $\pi$ is part of $Y'$ iff the $m$-th encoded output bit on $\pi$ has the correct value.

This concludes phase B. 
\end{proof}

\begin{corollary}
The set $Y$ contains, for every input $e$ of $\mathcal{C}_S$ and its corresponding output $o$, exactly one trace encoding $e$ with $\textit{inp}$ and $o$ with $\textit{outp}$.
\end{corollary}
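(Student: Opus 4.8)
The plan is to obtain this corollary as an immediate consequence of the preceding lemma characterizing membership in $Y'$, together with the definition of $Y$ extracted from $\varphi_B$ as the set of traces of $Y'$ that never visit an $\textit{incomplete}$ state, i.e.\ $Y = \{\pi \in Y' \mid \pi \models \LTLglobally\neg\textit{incomplete}\}$. First I would record, from the explicit phase-B construction, that a trace of $Y'$ satisfies $\LTLglobally\neg\textit{incomplete}$ exactly when it stays in the four-state substructure for all $l$ layers and terminates in an $\textit{end}$ state; such a trace encodes with $\textit{outp}$ a number of the full length $l$. Since dummy output gates were added precisely so that $\mathcal{C}_S$ has exactly $l$ output gates, a full-length output is a \emph{complete} output of $\mathcal{C}_S$.

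For existence, fix an input $e$ and let $o$ be the corresponding output of $\mathcal{C}_S$, of length $l$. As $o$ is a prefix of itself, the preceding lemma (instantiated with the length-$l$ prefix, for which the trace never reaches an $\textit{incomplete}$ state so that $m=l$) guarantees that $Y'$ contains a trace $\pi$ encoding $e$ with $\textit{inp}$ and $o$ with $\textit{outp}$ and whose $\textit{outp}$ portion has full length $l$. By the observation above, this $\pi$ never visits an $\textit{incomplete}$ state, hence satisfies the guard $\LTLglobally\neg\textit{incomplete}$ and is therefore included in $Y$.

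For uniqueness I would appeal to the layered shape of the phase-B substructure: within the four-state part, the state visited in each layer is completely determined by the pair of truth values of $\textit{inp}$ and $\textit{outp}$ at that layer, and transitions connect every state of a layer to every state of the next. Hence a complete trace, one that never enters the $\textit{incomplete}$ part, is determined uniquely by the bitstring it encodes with $\textit{inp}$ and the bitstring it encodes with $\textit{outp}$. Consequently at most one trace of $Y$ can encode a given pair $(e,o)$, and together with the existence argument this yields exactly one such trace per input $e$.

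I expect the only mildly delicate point to be making precise that ``never visiting an $\textit{incomplete}$ state'' coincides with ``encoding an output of the full length $l$'' (so that the lemma applies with $m=l$ to the complete trace), and that the $\textit{inp}$/$\textit{outp}$ labeling pins down a single trace. Both follow routinely from the layered construction and the padding by dummy gates, so no genuine obstacle remains beyond this bookkeeping.
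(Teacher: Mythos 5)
Your proposal is correct and follows essentially the route the paper intends: the paper states this as an immediate corollary of the preceding lemma characterizing $Y'$, combined with the definition of $Y$ in $\varphi_B$ as the traces of $Y'$ satisfying $\LTLglobally\neg\textit{incomplete}$, which is exactly the decomposition you use (existence via the lemma applied with $m=l$ to the full, dummy-padded output, uniqueness via the fact that the layered four-state substructure admits exactly one complete trace per $\textit{inp}$/$\textit{outp}$ bitstring pair). The bookkeeping you flag — that never visiting an $\textit{incomplete}$ state coincides with encoding an output of full length $l$ — is indeed the only detail to make precise, and your treatment of it is sound.
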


\paragraph*{Phase C}

This phase is similar to phase A,
with the exception that $\mathcal{C}$ is encoded in $\mathcal{C}_S$ such that every edge $(u, v) \in E(\mathcal{C})$ is represented by two input-output pairs --
one pair that identifies $u$ as a predecessor of $v$, and one pair that identifies $v$ as a successor of $u$.

To explicitly annotate the traces from $Y$ with some more atomic propositions we build two substructures in the resulting structure $K$.
Both structures consist of $l$ layers with four states each.
We call the $j$-th state in the $i$-th layer of one structure $c_{i, j}$, and the $j$-th state of the $i$-th layer of the other structure $c'_{i, j}$.
There are transitions from the initial state of $K$ to states $c_{1, j}$ and $c'_{1, j}$.
Additionally, every layer is fully connected to the states of the next layer in their structure and the states in the last layer have self-loops.

States $c_{i, 1}, c_{i, 2}, c'_{i, 1}, c'_{i, 2}$ are labeled with $\textit{inp}$, and states $c_{i, 1}, c_{i, 3}, c'_{i, 1}, c'_{i, 3}$ are labeled with $\textit{outp}$.
Additionally, every state $c_{i, j}$ is labeled with $\textit{pos}$ and every state $c'_{i, j}$ is labeled with $\textit{neg}$.
As in the other phases, we mark $c_{1, j}$ and $c'_{1, j}$ with atomic proposition $c$.
Since we are interested in the semantics of the input-output pairs, we mark the bit where the encoding of the neighbor starts with atomic proposition $\textit{nstart}$ and the start of the encoded sort with $\textit{kstart}$.
See example in \Cref{proofs:fpexpcomplete:phasec:example}.
Because the encoding of the kind and the encoding of neighbor both start after a gate ID was encoded, $\textit{nstart}$ and $\textit{kstart}$ are always at the same position.
We use both atomic propositions to emphasize whether we are looking for the encoded kind or the encoded neighbor on a trace.

For each pair of numbers of length $l$ we have two traces.
One trace labeled with \textit{pos} and one trace labeled with \textit{neg}.
Both encode one number with \textit{inp} and the other number with \textit{outp}.
The placement of \textit{inp} and \textit{outp} allows us to encode all input-output pairs of $\mathcal C_S$ resp. edges of $\mathcal C$.
The value carried by the represented edge of $\mathcal C$ can then be encoded by choosing the trace labeled with \textit{pos} or the trace labeled with \textit{neg}.
The traces are marked with atomic propositions $\textit{nstart}$ and $\textit{kstart}$ where the encoding of the neighbor and the sort of gate starts.

Similar to phase A, we maintain a set $Z$ that collects the positive trace for an edge in $\mathcal{C}$ if the value on this edge is positive and the negative trace otherwise.
Traces representing input-output pairs of $\mathcal{C}_S$ which identify some gate $v$ as a predecessor of some gate $u$, are not collected in set $Z$ and all formulas are constructed such that they ignore these traces.

The formula $\varphi_C$ is then similar to the one from phase A, but for every gate, its sort is binary encoded after $\textit{kstart}$ and both directions of an edge are represented as a trace.
For a given trace we can find out which direction of an edge it represents by looking at the sort of the gate and the encoded neighbor number.
As in phase A, the formula $\varphi_C$ distincts over the sort of the gates:
\begin{align*}
\varphi_C = (Z, \curlyvee, \varphi_{\textit{cconst}}\land\varphi_{\textit{cand}}\land \varphi_{\textit{cor}}\land \varphi_{\textit{cnot}}).\varphi_D
\end{align*}
Adding the correct traces for TRUE and FALSE gates in $\mathcal{C}$ is relatively simple, and is done via the formula $\varphi_{\textit{cconst}}$ defined below.
Since TRUE and FALSE gates do not have any meaningful input edges we can simply add all traces representing positive edges involving a TRUE gate and all traces representing a negative edge and a FALSE gate.
The formula $\varphi_{\textit{cconst}}$ quantifies over $\pi$ from the substructure for phase C and $\pi'$ from $Y$.
It first checks whether $\pi$ and $\pi'$ have the same input-output pair encoded to make sure that there is actually an edge in $\mathcal{C}$ represented by $\pi$.
Next, $\pi$ is added to $Z$ if the encoded sort is TRUE and $\pi$ represents a positive edge or if the gate is a FALSE gate and $\pi$ represents a negative edge.
\begin{align*}
\varphi_{\textit{cconst}} = &\forall\pi\in\mathfrak G.\forall\pi'\in Y.\LTLfinally c_\pi\land\LTLglobally (\textit{inp}_\pi\leftrightarrow\textit{inp}_{\pi'}) \land\LTLglobally (\textit{outp}_\pi\leftrightarrow\textit{outp}_{\pi'})\\
&\quad\land(\LTLfinally(\textit{kstart}\land\neg\textit{outp}_\pi \land \LTLnext\neg\textit{outp}_\pi\land\LTLnext\LTLnext\textit{outp}_\pi\land\LTLfinally\textit{pos}_\pi)\\
&\quad\quad\lor \LTLfinally(\textit{kstart}\land\neg\textit{outp}_\pi \land \LTLnext\textit{outp}_\pi\land\LTLnext\LTLnext\neg\textit{outp}_\pi\land\LTLfinally\textit{neg}_\pi))\to \pi \triangleright Z
\end{align*}
The remaining formulas work analogously to phase A. In particular, the formula $\varphi_{\textit{cand}}$ also adds negative traces for OR gates and $\varphi_{\textit{cor}}$ also adds negative traces for AND gates.
A formula of the form $(\textit{outp}_{\pi_2}\leftrightarrow\textit{inp}_{\pi_1})\LTLuntil\textit{nstart}_{\pi_1}$ is satisfied iff $\pi_2$ represents an edge to the gate that $\pi_1$ starts from.
We can check for the bit string that represents the kind of some gate by using a formula of the form $\LTLfinally(\textit{kstart}_{\pi_1} \land\textit{outp}_{\pi_1}\land\LTLnext\textit{outp}_{\pi_1}\land\LTLnext\LTLnext\textit{outp}_{\pi_1}))$ with the corresponding negations.
The same holds for the encoded neighbour number where we use $\LTLfinally(\textit{nstart}_{\pi_1} \land\textit{inp}_{\pi_1}\land\LTLnext\textit{inp}_{\pi_1}\land\LTLnext\LTLnext\textit{inp}_{\pi_1}))$ with corresponding negations.
\begin{align*}
&\varphi_{\textit{cand}} = \forall\pi_1\in\mathfrak G.\forall\pi'_1\in Y.\forall\pi_2,\pi_3\in Z.\LTLfinally c_{\pi_1}\land\LTLglobally (\textit{inp}_{\pi_1}\leftrightarrow\textit{inp}_{\pi_1'}) \land\LTLglobally (\textit{outp}_{\pi_1}\leftrightarrow\textit{outp}_{\pi_1'})\\
&\land(\textit{outp}_{\pi_2}\leftrightarrow\textit{outp}_{\pi_3}\leftrightarrow\textit{inp}_{\pi_1})~\LTLuntil ~ \textit{nstart}_{\pi_1}\land\neg\LTLfinally(\textit{nstart}_{\pi_1}\land\textit{inp}_{\pi_1}\land\LTLnext\LTLglobally\neg\textit{inp}_{\pi_1})\\
&\land\neg\LTLfinally(\textit{nstart}_{\pi_1}\land\neg\textit{inp}_{\pi_1}\land\LTLnext\textit{inp}_{\pi_1}\land\LTLnext\LTLnext\LTLglobally\neg\textit{inp}_{\pi_1})\land(\LTLfinally(\textit{kstart}\land\neg\textit{outp}_{\pi_1}\land\LTLnext\textit{outp}_{\pi_1}\land\LTLnext\LTLnext\textit{outp}_{\pi_1})\\
&\land\LTLfinally\textit{pos}_{\pi_1}\land\LTLfinally\textit{pos}_{\pi_2}\land\LTLfinally\textit{pos}_{\pi_3}\lor \LTLfinally(\textit{kstart}\land\textit{outp}_{\pi_1}\land\LTLnext\neg\textit{outp}_{\pi_1}\land\LTLnext\LTLnext\neg\textit{outp}_{\pi_1})\\
&\quad\land\LTLfinally\textit{neg}_{\pi_1}\land\LTLfinally\textit{neg}_{\pi_2}\land\LTLfinally\textit{neg}_{\pi_3})\to \pi_1\triangleright Z
\end{align*}
\begin{align*}
&\varphi_{\textit{cor}} = \forall\pi_1\in\mathfrak G.\forall\pi'_1\in Y.\forall\pi_2\in Z.\LTLfinally c_{\pi_1}\land\LTLglobally (\textit{inp}_{\pi_1}\leftrightarrow\textit{inp}_{\pi_1'}) \land\LTLglobally (\textit{outp}_{\pi_1}\leftrightarrow\textit{outp}_{\pi_1'})\\
&\land(\textit{outp}_{\pi_2}\leftrightarrow\textit{inp}_{\pi_1})~\LTLuntil ~ \textit{nstart}_{\pi_1}\land\neg\LTLfinally(\textit{nstart}_{\pi_1}\land\textit{inp}_{\pi_1}\land\LTLnext\LTLglobally\neg\textit{inp}_{\pi_1})\\
&\land\neg\LTLfinally(\textit{nstart}_{\pi_1}\land\neg\textit{inp}_{\pi_1}\land\LTLnext\textit{inp}_{\pi_1}\land\LTLnext\LTLnext\LTLglobally\neg\textit{inp}_{\pi_1})\land(\LTLfinally(\textit{kstart}\land\neg\textit{outp}_{\pi_1}\land\LTLnext\textit{outp}_{\pi_1}\land\LTLnext\LTLnext\textit{outp}_{\pi_1})\\
&\land\LTLfinally\textit{neg}_{\pi_1}\land\LTLfinally\textit{neg}_{\pi_2}\lor \LTLfinally(\textit{kstart}\land\textit{outp}_{\pi_1}\land\LTLnext\neg\textit{outp}_{\pi_1}\land\LTLnext\LTLnext\neg\textit{outp}_{\pi_1})\land\LTLfinally\textit{pos}_{\pi_1}\land\LTLfinally\textit{pos}_{\pi_2})\to \pi_1\triangleright Z
\end{align*}
\begin{align*}
&\varphi_{\textit{cnot}} = \forall\pi_1\in\mathfrak G.\forall\pi'_1\in Y.\forall\pi_2\in Z.~\LTLfinally c_{\pi_1}\land\LTLglobally (\textit{inp}_{\pi_1}\leftrightarrow\textit{inp}_{\pi_1'}) \land\LTLglobally (\textit{outp}_{\pi_1}\leftrightarrow\textit{outp}_{\pi_1'})\\
&\land(\textit{outp}_{\pi_2}\leftrightarrow\textit{inp}_{\pi_1})~\LTLuntil ~ \textit{nstart}_{\pi_1}\land\neg\LTLfinally(\textit{nstart}_{\pi_1}\land\textit{inp}_{\pi_1}\land\LTLnext\LTLglobally\neg\textit{inp}_{\pi_1})\\
&\land\LTLfinally(\textit{kstart}\land\textit{outp}_{\pi_1}\land\LTLnext\neg\textit{outp}_{\pi_1}\land\LTLnext\LTLnext\textit{outp}_{\pi_1})\land(\LTLfinally\textit{neg}_{\pi_1}\land\LTLfinally\textit{pos}_{\pi_2}\lor \LTLfinally\textit{pos}_{\pi_1}\land\LTLfinally\textit{neg}_{\pi_2})\to \pi_1\triangleright Z
\end{align*}
\begin{lemma}
The substructure described for phase C can be built in polynomial time in the size of $\mathcal{C}_S$.
\end{lemma}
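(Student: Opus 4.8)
The plan is to follow the same strategy used for the analogous polynomial-time lemmas for phases A and B, by observing that the entire phase C substructure depends only on the parameter $l$ and on the bit-width needed to encode gate identifiers, both of which are at most linear in the size of $\mathcal{C}_S$. First I would recall that $l = \max(\lceil\log_2(n)\rceil, |e|, o)$ is linear in $|\mathcal{C}_S|$. The phase C construction consists of exactly two substructures, each having $l$ layers of four states, for a total of $8l$ states; hence the number of states is linear in $l$ and therefore polynomial in $|\mathcal{C}_S|$.

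Next I would bound the number of transitions. Within each of the two substructures, the four states of one layer are fully connected to the four states of the next layer, giving $16$ transitions per layer boundary and $16(l-1)$ transitions per substructure, together with the four edges from the initial state of $K$ into the first layer and the four self-loops on the last layer. The total number of transitions is thus $O(l)$, again polynomial in $|\mathcal{C}_S|$. The substructure can be emitted by iterating over the layers and, within each layer, over the four states, producing the fixed pattern of transitions; this runs in time polynomial in the number of states and transitions.

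Then I would argue that the labeling of each state is computable in polynomial time. The static labels \textit{inp}, \textit{outp}, \textit{pos}, \textit{neg}, and $c$ follow the stated index patterns (e.g.\ $c_{i,1}, c_{i,2}$ carry \textit{inp} while $c_{i,1}, c_{i,3}$ carry \textit{outp}) and are decided by constant comparisons on the indices $i$ and $j$. The only labels requiring a computed position are \textit{nstart} and \textit{kstart}: since both mark the bit at which the neighbor identifier, respectively the encoded sort, begins, their position is determined by the number of bits used to encode a gate ID, namely $\lceil\log_2(n)\rceil$, which is computed once in polynomial time from $\mathcal{C}_S$. Placing these marks on all affected states therefore adds only polynomial overhead.

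The construction involves no genuine obstacle; the one point that needs care is confirming that the positions of \textit{nstart} and \textit{kstart} are fixed and computable from $\mathcal{C}_S$ alone, rather than depending on the particular edge a trace encodes. This is guaranteed by the construction, which aligns both marks with the end of the gate-ID encoding. Collecting the bounds above, the entire phase C substructure is produced in time polynomial in $|\mathcal{C}_S|$, completing the argument.
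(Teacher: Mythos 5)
Your proposal is correct and follows the same approach as the paper's own (very brief) proof: the substructure's size is linear in $l$, and the positions of \textit{kstart} and \textit{nstart} are trivially computable from the input. You simply spell out the state, transition, and labeling bounds that the paper leaves implicit.
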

\begin{proof}
The size of the substructure depends linearly on $l$.
Which states should be labeled with $\textit{kstart}$ and $\textit{nstart}$ can be trivially calculated from the input.
\end{proof}

\begin{lemma}\label{proofs:fpexpcomplete:phasec:correct}
For every input $e$ and corresponding output $o$ of $\mathcal{C}_S$ it holds that $Z$ contains a trace $\pi$ encoding $e$ and $o$ representing edge $(u, v)$ of $\mathcal{C}$ if $v$ is a successor of $u$. The trace
$\pi$ is labeled with $\textit{pos}$ iff the represented edge of $\mathcal{C}$ carries a positive value, and is labeled with $\textit{neg}$ otherwise.
\end{lemma}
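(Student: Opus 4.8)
The plan is to mirror the correctness argument for phase~A (\Cref{proofs:fpexpcomplete:phasea:correct}), replacing the explicit $\textit{fromid}/\textit{toid}$ encoding of an edge by the implicit semantics of the input--output pairs of $\mathcal{C}_S$ now collected in $Y$. First I would record the structural facts that make each clause of $\varphi_C$ meaningful. Every trace added to $Z$ satisfies $\LTLfinally c$, so it comes from the phase~C substructure; and the conjunct $\LTLglobally(\textit{inp}_{\pi_1}\leftrightarrow\textit{inp}_{\pi_1'})\land\LTLglobally(\textit{outp}_{\pi_1}\leftrightarrow\textit{outp}_{\pi_1'})$ with $\pi_1'\in Y$ forces, by the phase~B corollary, its encoded pair $(e,o)$ to be a genuine input--output pair of $\mathcal{C}_S$, i.e.\ to describe a real edge of $\mathcal{C}$ together with the sort of its source gate. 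I would then fix the decoding: the source gate $u$ is the number read with $\textit{inp}$ before $\textit{nstart}$, the neighbor $v$ is the number read with $\textit{outp}$ up to $\textit{kstart}$, and $\textit{sort}(u)$ is the three bits read after $\textit{kstart}$. The two negated $\LTLfinally$ conjuncts that exclude neighbor indices $1$ and $2$ are exactly what discards the ``$v$ is a predecessor of $u$'' traces, so that $Z$ only ever receives successor-direction edges, as required.

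The core of the proof is an induction on $ID(u)$, the identifier of the source gate of the represented edge, establishing the slightly stronger statement that $Z$ contains \emph{exactly one} of the two annotated traces encoding $(e,o)$, namely the one carrying $\textit{pos}$ iff $\mathcal{T}(u)=\top$. The base case treats $\textit{sort}(u)\in\{\textit{TRUE},\textit{FALSE}\}$ via $\varphi_{\textit{cconst}}$: after matching the $\textit{kstart}$ pattern for TRUE (resp.\ FALSE) it adds the $\textit{pos}$ (resp.\ $\textit{neg}$) trace unconditionally, which is correct since such gates have no meaningful incoming edges. For the step I would use the identifier order from phase~A, which guarantees $ID(w)<ID(u)$ for every predecessor $w$ of $u$; hence the incoming edges $(w,u)$ are themselves successor-direction edges of smaller source identifier and are, by the induction hypothesis, already present in $Z$ with their correct labels.

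In the inductive cases the connection check $(\textit{outp}_{\pi_2}\leftrightarrow\textit{inp}_{\pi_1})\LTLuntil\textit{nstart}_{\pi_1}$ matches the target identifier of $\pi_2$ with the source identifier of $\pi_1$, i.e.\ it identifies $\pi_2$ as an edge entering the gate $u$ that $\pi_1$ leaves. For an AND gate, $\varphi_{\textit{cand}}$ then adds the $\textit{pos}$ trace of the outgoing edge precisely when two distinct incoming edges $\pi_2,\pi_3\in Z$ are both $\textit{pos}$, while $\varphi_{\textit{cor}}$ adds the $\textit{neg}$ trace as soon as one incoming edge is $\textit{neg}$; by the induction hypothesis these two rules together realise the AND of the decoded inputs. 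The OR case is handled dually, using $p\lor q\equiv\neg(\neg p\land\neg q)$, by the symmetric clauses of $\varphi_{\textit{cor}}$ and $\varphi_{\textit{cand}}$, and the NOT case by $\varphi_{\textit{cnot}}$, which flips the single incoming label. Since each rule both checks the $\textit{kstart}$ sort pattern of $u$ and insists on the matching polarity, at most the correctly-labeled trace is ever added, giving the ``exactly one'' and ``iff'' parts of the claim.

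I expect the main obstacle to be the faithful decoding step rather than the induction itself: one must verify that the temporal guards express exactly the intended arithmetic predicates under the LSB-first layout in which the source identifier, neighbor index, output identifier, and sort bits tile each trace without gaps. Concretely, the delicate points are that the $\LTLuntil\,\textit{nstart}$ connection test compares the right fields, that the two negated $\LTLfinally$ formulas rule out exactly the predecessor neighbor indices $1$ and $2$, and that the three-bit $\textit{kstart}$ patterns match the sort codes of the definition; each is mechanical but must be checked against the concrete substructure of phase~C so that no spurious trace of the wrong direction, wrong pairing, or wrong polarity enters $Z$.
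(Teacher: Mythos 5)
Your proposal is correct and follows essentially the same route as the paper's proof: establish that the $\LTLfinally c$ guard and the matching with $\pi_1'\in Y$ restrict $Z$ to genuine successor-direction edges of $\mathcal{C}$, then induct on $ID(u)$ with a case split on $\textit{sort}(u)$, where the constant gates handled by $\varphi_{\textit{cconst}}$ form the base case and the AND/OR/NOT cases mirror phase~A via the $\varphi_{\textit{cand}}/\varphi_{\textit{cor}}$ pairing and duality. The paper states the inductive cases more tersely by deferring to the phase-A argument, but the substance is the same.
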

\begin{proof}
We observe that every formula adding a trace $\pi_1$ to $Z$ only adds $\pi_1$ if it is labeled with $c$ at some point and there is a trace $\pi_1'\in Y$ such that $\pi_1$ and $\pi_1'$ encode the same input and output.
Therefore every trace in $Z$ is from the substructure for phase C and every trace in $Z$ encodes a valid input-output pair of $\mathcal{C}_S$.
Further, every trace $\pi$ represents an edge in $\mathcal{C}$ which goes from a gate $u \in V(\mathcal{C})$ to a gate $v \in V(\mathcal{C})$.
The ID of $u$ is encoded on $\pi$ with $\textit{inp}$ until $\pi$ visits a state labeled with $\textit{nstart}$ and similarly the ID of $v$ is encoded on $\pi$ with $\textit{outp}$ until $\pi$ visits a state labeled $\textit{kstart}$.
Therefore we have the following:
Let $\pi_1$ represent $(u, v)\in E(\mathcal{C})$ and let $\pi_2$ represent $(s, t)\in E(\mathcal{C})$. Then the formula 
$(\textit{outp}_{\pi_2}\leftrightarrow\textit{inp}_{\pi_1})~\LTLuntil~\textit{nstart}_{\pi_1}$ is satisfied iff $t = u$.

Additionally, $Z$ should only contain traces representing the edge $(u, v)$ where $v$ is the successor of $u$ and not the traces representing the fact that $u$ is the predecessor of $v$.
By the definition of the Succinct Circuit Value Problem, we know that the first two neighbors of AND and OR gates are its predecessors and the first neighbor of a NOT gate is its predecessor.
Therefore the formulas $\varphi_{\textit{cand}}$ and $\varphi_{\textit{cor}}$ require that a trace $\pi_1$ is only added to $Z$ if it does not binary encode a one or a two after $\textit{nstart}$.
Similarly $\varphi_{\textit{cnot}}$ requires that $\pi_1$ is only added if it does not encode a one after $\textit{nstart}$.
This suffices for $Z$ to only contain traces with information about successors.

The only difference to phase A is that the traces added to $Z$ represent edges from $\mathcal{C}$ and not $\mathcal{C}_S$ and that every trace encoding edge $(u, v)$ contains $\textit{sort}(u)$ and not $\textit{sort}(v)$.
The formulas check for the $\textit{sort}(u)$ by looking at the state where $\textit{kstart}$ holds and at the next two states.
Therefore dummy outputs that were added to $\mathcal{C}_S$ in phase A do not have any effect here.
We need to prove correctness for every $\textit{sort}(u)$. All cases are 
analogous to the cases in phase A, except the case for which $u$ is a TRUE or FALSE gate.
\begin{description}[nosep]
\item[$u$ is a constant gate:] If $u$ is a gate for constant true, then all edges leaving $u$ carry a positive value.
Therefore a trace representing an edge leaving $u$ should be included in $Z$ iff it is labeled with $\textit{pos}$.
Similarly, if $u$ is a FALSE gate a trace representing an edge leaving $u$ should be in $Z$ iff it is labeled with $\textit{neg}$. The set 
$Z$ contains exactly the described traces as otherwise it would not satisfy $\varphi_{\textit{const}}$.
\end{description}
\end{proof}
After building the set $Z$ we can determine the output of $\mathcal{C}$ by looking for its output gate.
By definition of the Succinct Circuit Value Problem, the output gate has its first output connected to a gate with ID 0.
Therefore the following $\varphi_D$ is satisfied if there exists a trace $\pi\in Z$ which represents an edge that carries a positive value and is connected to a gate with ID zero.
Additionally, $\pi$ needs to represent the first output of a gate.
Therefore, the satisfaction of $\varphi_D$ depends on the sort and neighbor number encoded on $\pi$.
\begin{align*}
&\varphi_D = \exists\pi\in Z. \LTLfinally\textit{pos}_\pi\land\neg\textit{outp}_\pi~\LTLuntil~\textit{kstart}_\pi\land(\LTLfinally(\textit{nstart}_\pi\land\textit{inp}_\pi\land\LTLnext\textit{inp}_\pi\land\LTLnext\LTLnext\LTLglobally\neg\textit{inp}_\pi)\\
&\land\LTLfinally(\textit{kstart}_\pi\land\textit{outp}_{\pi}\land\LTLnext\neg\textit{outp}_{\pi}\land\LTLnext\LTLnext\neg\textit{outp}_{\pi})\lor\LTLfinally(\textit{nstart}_\pi\land\textit{inp}_\pi\land\LTLnext\textit{inp}_\pi\land\LTLnext\LTLnext\LTLglobally\neg\textit{inp}_\pi)\\
&\land\LTLfinally(\textit{kstart}_\pi\land\neg\textit{outp}_{\pi}\land\LTLnext\textit{outp}_{\pi}\land\LTLnext\LTLnext\textit{outp}_{\pi})\lor\LTLfinally(\textit{nstart}_\pi\land\neg\textit{inp}_\pi\land\LTLnext\textit{inp}_\pi\land\LTLnext\LTLnext\LTLglobally\neg\textit{inp}_\pi)\\
&\land\LTLfinally(\textit{kstart}_\pi\land\textit{outp}_{\pi}\land\LTLnext\neg\textit{outp}_{\pi}\land\LTLnext\LTLnext\textit{outp}_{\pi})\lor\LTLfinally(\textit{nstart}_\pi\land\textit{inp}_\pi\land\LTLnext\LTLglobally\neg\textit{inp}_\pi)\\
&\land\LTLfinally(\textit{kstart}_\pi\land\neg\textit{outp}_{\pi}\land\LTLnext\neg\textit{outp}_{\pi}\land\LTLnext\LTLnext\textit{outp}_{\pi})\lor\LTLfinally(\textit{nstart}_\pi\land\textit{inp}_\pi\land\LTLnext\LTLglobally\neg\textit{inp}_\pi)\\
&\land\LTLfinally(\textit{kstart}_\pi\land\neg\textit{outp}_{\pi}\land\LTLnext\textit{outp}_{\pi}\land\LTLnext\LTLnext\neg\textit{outp}_{\pi})
)
\end{align*}

The formula $\varphi_D$ asserts that there is a trace $\pi \in Z$ encoding that the first edge leaving some gate goes to a gate with ID 0 and carries positive value.
Let $(u, v)$ be the edge represented by $\pi$.
The ID of $v$ is encoded with \textit{outp} before a state labeled with \textit{kstart} is visited.
Each trace $\pi$ satisfying $\neg\textit{outp}_\pi~\LTLuntil~\textit{kstart}_\pi$ therefore goes to a gate with ID 0.
The rest of the formula is a disjunction over the kind of $u$ which is encoded with \textit{outp} after a state labeled with \textit{kstart} was visited.
Depending on the kind of $u$ the first edge leaving $u$ has a different neighbor number.
The number is encoded with \textit{inp} after a state labeled with \textit{nstart} was visited.
It is three for AND and OR gates, it is two for NOT gates and one for constant gates.

\begin{lemma}
$K$ is acyclic
\end{lemma}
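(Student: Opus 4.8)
The plan is to exhibit a \emph{level function} on the states of $K$ that strictly increases along every transition which is not a self-loop, and then to check separately that the self-loops occur only on states with no other outgoing edge. Recall from \Cref{preliminaries} that $K$ is acyclic precisely when its underlying graph contains no cycles other than self-loops, and that whenever $(s,s)\in\delta$ the state $s$ has no successor distinct from itself. These two conditions are exactly what the level argument, together with a direct inspection of the last layers, will establish.

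First I would define $\mathrm{lvl}\colon S\to\{0,1,\dots,l\}$ by setting $\mathrm{lvl}(s_0)=0$ and, for every state lying in the $i$-th layer of any substructure built in phases A, B, or C, setting its level to $i$. This is well defined because, by construction, every state of $K$ other than the initial state is a layer state $a_{u,v,i,j,d}$, $b_{i,j}$, $b'_{i,j}$, $c_{i,j}$, or $c'_{i,j}$ sitting in a uniquely determined layer $i\in\{1,\dots,l\}$, where $l$ is the common trace length fixed in phase A.

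Next I would verify, by going through the transition relation case by case, that $\mathrm{lvl}$ strictly increases along every non-self-loop edge. The transitions out of $s_0$ all lead to first-layer states, raising the level from $0$ to $1$. Within every substructure of all three phases, the only edges connect a layer-$i$ state to a layer-$(i+1)$ state, raising the level by exactly one. The single nonstandard family of edges is the phase-B cross transition from the four-state substructure into the two-state one, namely from $b_{i,j}$ to $b'_{i+1,1}$ and $b'_{i+1,2}$ with $i\le l-1$; these too go from level $i$ to level $i+1$. Hence every edge of $K$ either strictly increases $\mathrm{lvl}$ or is a self-loop, and since any cycle would have to return to a state of the same level, no cycle other than a self-loop can exist.

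Finally I would close the argument by checking the self-loop side condition: the only self-loops in $K$ are attached to the last-layer states (those at level $l$) of each substructure, and the construction states explicitly that these states carry no outgoing transition besides the self-loop. Thus the requirement that $(s,s)\in\delta$ forces $s$ to have no other successor holds, and $K$ is acyclic. The only genuine obstacle here is bookkeeping — ensuring the case analysis over $\delta$ is exhaustive, and in particular not overlooking the phase-B cross edges — but once every edge is seen to be level-increasing or a self-loop on a terminal layer, acyclicity follows at once.
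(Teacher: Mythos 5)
Your proof is correct and follows essentially the same route as the paper's: the paper also argues that every substructure is organized into layers whose edges only go to the next layer, with self-loops confined to the final layers, which is exactly your level-function argument made explicit. Your version is merely more detailed (naming the phase-B cross edges and the level map), but the underlying idea is identical.
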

\begin{proof}
For every phase we built substructures consisting of layers and every layer is only connected to the next layer.
The only layers that are not connected to the next layer are the last layers which have self-loops, that satisfy our definition of acyclic structures.
\end{proof}
This concludes the construction and correctness of phase C, and we are ready to prove \Cref{proofs:fpexpcomplete:complete}. 

\paragraph*{\Cref{proofs:fpexpcomplete:complete} (restated)}
\mcfpfpdag is EXP-complete.

\begin{proof}
The answer to the Succinct Circuit Value Problem is true iff there exists a gate $u\in V(\mathcal{C})$ with no other gate $v\in V(\mathcal{C})$ such that $(u, v)\in E(\mathcal{C})$, and such that the output of gate $u$ is true.
We know by \Cref{proofs:fpexpcomplete:phasec:correct} that $Z$ contains a trace labeled with $\textit{pos}$ and encoding the $ID(u)$ with $\textit{inp}$ for some gate $u$ iff the output of gate $u$ is true.
By the definition of the Succinct Circuit Value Problem follows that if the $i+1$-th neighbor of a gate with $i$ inputs has ID zero then the gate has no successor and is the output of $\mathcal{C}$.
Note that the number of inputs a gate has is determined by its sort.
Therefore the answer to the Succinct Circuit Value Problem is true iff $Z$ contains a trace $\pi$ that is labeled with $\textit{pos}$ and represents an edge going to a gate with ID zero which is the $i+1$-th neighbor of a gate with $i$ inputs.
By the EXP-completeness of the Succinct Circuit Value Problem~\cite{CompuComplexity} follows the EXP-hardness of \mcfpfpdag.
From the EXP-hardness of \mcfpfpdag and \Cref{proofs:fpexpcomplete:inexp} follows that \mcfpfpdag is EXP-complete.
\end{proof}

\subsection{The Complexity of \mcfulltree}\label{proofs:hierarchycomplete}
We provide an example and a full proof for the lower bound of \Cref{QBF}. 

\begin{example}
Consider the QBF formula $y =\forall x_1.\exists x_2. x_1\lor x_2$. The formula 
$y$ has one quantifier alternation, such that $k = 1$. We follow the reduction described in the proof of \Cref{proofs:hierarchycomplete:complete}, in \Cref{QBF}. 
Since $y$ is valid, the model checking problem should be satisfied.
The Kripke structure built by the reduction can be seen in \Cref{proofs:hierarchycomplete:example}.

The formula
$\varphi$ for $k = 1$ starting with a universal quantifier, quantifies over the sets $X_1, X_2$ and $Z$.
The set $X_1$ is universally quantified and $X_2$ as well as $Z$ are existentially quantified.
By existentially quantifying over $Z$, $\varphi$ only contains one quantifier alternation.

The two possible instantiations for $X_1$ are: 
$$\{ \{\}\{q, v\}\{\}\{\}\dots, \{\}\{\textit{pos}\}\{\}\{\}\dots\}\quad \text{or} \quad\{ \{\}\{q, v\}\{\}\{\}\dots, \{\}\{\textit{neg}\}\{\}\{\}\dots\}$$

The possible instantiations for $X_2$ are $$\{ \{\}\{\}\{q, v\}\{\}\{\}\dots, \{\}\{\}\{\textit{pos}\}\{\}\{\}\dots\}\quad \text{or} \quad\{ \{\}\{\}\{q, v\}\{\}\{\}\dots, \{\}\{\}\{\textit{neg}\}\{\}\{\}\dots\}$$ 

There exists a set $Z$ satisfying the condition in $\varphi_{k+2}$ (i.e., $\varphi_3$) if the trace $\{\}\{\}\{\textit{pos}\}\{\}\dots$ is an element of either $X_1$ or $X_2$.
Because for both instantiations of $X_1$ there exists an instantiation of $X_2$ which contains $\{\}\{\}\{\textit{pos}\}\{\}\cdots$, the model checking problem is satisfied.
\end{example}

\begin{lemma}\label{proofs:hierarchycomplete:correct}
Let $y$ be a QBF formula, and let $K$ be the Kripke structure and $\varphi$ be the Hyper$^2$LTL formula constructed in the reduction described in the proof of \Cref{proofs:hierarchycomplete:complete} in \Cref{QBF}. Then, 
$K\models\varphi$ if and only if $y$ is valid.
\end{lemma}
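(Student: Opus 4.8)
The plan is to prove the biconditional $K \models \varphi \iff y$ valid by induction on the number $k$ of quantifier alternations, showing at each level that the second-order set quantifier $\mathbb{Q}X_i$ in $\varphi_i$ faithfully simulates the corresponding block of propositional quantifiers $\mathbb{Q}x_{i,1}\dots\mathbb{Q}x_{i,m_i}$ in $y$. The key correspondence to establish first, as a structural lemma, is that the \emph{valid} instantiations of $X_i$ (those satisfying $\varphi_i$) are exactly the sets containing, for each variable $x_{i,j}$ in the $i$-th quantifier block, precisely one of its two assignment traces (the $\textit{pos}$ trace or the $\textit{neg}$ trace), and that these valid instantiations are in bijection with truth assignments to the variables $x_{i,1},\dots,x_{i,m_i}$. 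I would verify this by unpacking the $\exists!\pi''\in X_i$ clause: the subformula $\LTLnext^i q_{\pi'}$ pins $\pi'$ to the unique $q$-branch for block $i$, the $\LTLfinally(v_{\pi'}\wedge(\textit{pos}_\pi\vee\textit{neg}_\pi))$ guard restricts $\pi$ to range over exactly the assignment branches of block-$i$ variables (those whose $v$-position on the $q$-branch is marked), and the implication forces $X_i$ to contain exactly one matching assignment trace per such variable.

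\textbf{The induction over alternations.} With this structural lemma in hand, I would set up the induction so that the inductive statement couples the semantics of $\varphi$ with the QBF semantics: for any choice of the outer sets $X_1,\dots,X_{i-1}$ encoding a partial assignment $\rho$ to the first $i-1$ blocks, the formula $\mathbb{Q}X_i.(\dots)\wedge\dots$ (respectively $\to\dots$) evaluates to true under the corresponding trace-set assignment iff the QBF subformula $\mathbb{Q}x_{i,1}\dots\,\text{(rest of }y\text{)}$ is valid under $\rho$. The connective choice ($\oplus_i$ being $\wedge$ for existential and $\to$ for universal blocks) is precisely what aligns the Hyper$^2$LTL satisfaction of the conjunction/implication chain with the alternating $\forall/\exists$ structure of $y$; I would confirm that pulling the existentially-quantified $\varphi_{k+2}$ inside does not introduce a spurious alternation, matching the claim that $\varphi$ has $k$ second-order alternations.

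\textbf{The base evaluation via $\varphi_{k+2}$.} The base of the induction is the innermost set $Z$, which must evaluate the quantifier-free matrix $E$. Here I would reuse the evaluation technique from Thm.~\ref{proofs:fpexpcomplete:complete}: lines (1)--(4) of $\varphi_{k+2}$ force $Z$ to be a superset of $\bigcup_i X_i$ that is closed under the gate-propagation constraints, so that a trace representing subexpression $E_1 \otimes E_2$ sits in $Z$ with the correct $\textit{pos}/\textit{neg}$ marking iff its subexpression-traces carry the values dictated by $\textit{epos},\textit{eneg},\textit{epos}',\textit{eneg}'$. I would argue by induction on subexpression structure (using the indexing $N$ with variable-indices below subexpression-indices) that the $\textit{pos}$-labeled $f$-trace lies in $Z$ iff $E$ evaluates to true under the assignment encoded by $X_1,\dots,X_{k+1}$. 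Line (5) then reads off exactly this truth value, with its connective again toggled to avoid an extra alternation.

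\textbf{Main obstacle.} The delicate point I expect to be hardest is verifying that the $\exists!$ uniqueness and the $v/q$-position gadget together force $X_i$ to be \emph{exactly} a single-assignment-per-variable set and nothing more --- in particular ruling out that $X_i$ could contain extraneous traces (e.g. subexpression branches or assignment traces from other blocks) that would satisfy the implication vacuously yet corrupt the downstream reading of $Z$. Establishing this requires a careful case analysis of how the guards $\LTLglobally(\neg\textit{epos}_\pi \wedge \neg\textit{eneg}_\pi \wedge \neg q_\pi)$ and the membership constraint $\bigvee_{i=1}^{k+1}\pi\triangleright X_i$ in line (5) jointly quarantine the assignment traces from the evaluation traces, and I would spend most of the proof effort there; the full case analysis is what the excerpt defers to the appendix.
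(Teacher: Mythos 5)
Your plan matches the paper's proof essentially step for step: the paper likewise first characterizes the $\varphi_i$-satisfying instantiations of $X_i$ as exactly the one-assignment-trace-per-block-$i$-variable sets (via the unique $q$-branch pinned by $\LTLnext^i q_{\pi'}$, the $v$-marking, and the $\exists!\pi''$ clause), then asserts that the second-order prefix with the $\wedge$/$\rightarrow$ connectives mirrors the QBF prefix, and finally proves by a well-founded/structural induction over subexpressions (cases $x_{a,b}$, $E_1\wedge E_2$, $E_1\vee E_2$, $\neg E_1$) that $Z$ contains the $\textit{pos}$-labeled trace for a subexpression iff it evaluates to true, reading off the answer from the $f$-marked trace in line (5). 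The only difference is presentational: where you propose an explicit induction on the alternation index and a dedicated quarantine argument for extraneous traces, the paper compresses the former into a single sentence and handles the latter inside the subexpression induction via the line-(5) membership constraint $\bigvee_i \pi\triangleright X_i$.
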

\begin{proof}
We say that a trace $\pi$ represents variable $x_{i, j}$ or subexpression $E'$ if $\pi$ visits a state labeled with \textit{pos} or \textit{neg} in the state number $N(x_{i, j}) + 1$ or $N(E')+1$, respectively.
We first argue that an instantiation of $X_i$ with $i \le k+1$ satisfying $\varphi_i$ represents an assignment to all variables $x_{i, j}$ where $j \le m_i$.
There exists only one trace $\pi'$ in $K$ satisfying $\LTLnext^i q_{\pi'}$.
This is the only instantiation for $\pi'$ which can satisfy $\varphi_i$.
It is marked with $v$ at every position which represents a variable $x_{i, j}$.
An instantiation now contains, for each variable, exactly one trace $\pi''$ that has $\textit{pos}$ or $\textit{neg}$ at the respective position.
All traces $\pi$ representing variables satisfy $\LTLglobally(\neg\textit{epos}_\pi\land \neg\textit{eneg}_\pi\land\neg q_\pi)$.
The traces representing variables $x_{i, j}$ with $j\le m_i$ additionally satisfy $\LTLfinally (v_{\pi'}\land(\textit{pos}_\pi\lor\textit{neg}_\pi))$.
The formula $\varphi_i$ is satisfied if and only if, for each trace $\pi$ representing variable $x_{i, j}$, there is exactly one trace $\pi''\in X$ that represents the same variable.

By the construction of $K$ and by the definition of $\varphi$, we quantify over these assignments exactly as $y$ does.
It remains to show that under fixed instantiations for $X_1,\dots, X_{k+1}$, the formula $\varphi_{k+2}$ is satisfied if and only if $E$ (the quantifier free formula of $y$) is satisfied under the assignment represented by $X_1, \dots, X_{k+1}$.

First, we note that some trace $\pi$ can only be part of $Z$ if there are traces $\pi', \pi''$ that satisfy the condition in lines 3 and 4 of $\varphi_{k+2}$.
By the construction of $K$ and $\varphi_{k+2}$, this is the case if and only if the subexpressions represented by $\pi'$ and $\pi''$ have the value represented by $\textit{epos}$ and $\textit{eneg}$.

Now we prove that this instantiation satisfies the QBF formula $y$.
We start by showing that every possible instantiation of $Z$ contains at least one trace per quantifier-free subexpression in $y$.
We do this by well-founded induction over the relation that relates every subexpression of $E$ with its one or two largest subexpressions.
In the case that the subexpression is a variable, it is not related to any another subexpression.
The base case here is $E$.
By construction of $K$, only traces representing $E$ are marked with~$f$.
Additionally, the formula $\varphi_{k+2}$ enforces that there is at least one trace in $Z$ that is marked with $f$.
Therefore, the set~$Z$ contains a trace for $E$.
The induction step follows from the fact that a trace $\pi$ can only be included in $Z$ if there are two traces $\pi',\pi''\in Z$ satisfying the condition in lines 3 and 4.
This is by construction of $K$ the case only if $\pi'$ and $\pi''$ represent the one or two largest subexpressions of the expression represented by $\pi$.
If $\pi$ represents a variable, $\pi'$ and $\pi''$ can be instantiated with a trace labeled with $q$ since traces labeled with $q$ are not labeled with $\textit{pos}$ or $\textit{neg}$.

Next, we prove by structural induction over $E$ that every instantiation of $Z$ satisfying $\varphi_{k+2}$ contains a trace $\pi$ labeled with $\textit{pos}$ at position $j$ if and only if $e$ with $N(e) = j$ evaluates to true under the assignment represented by $X_1,\ldots, X_{k+1}$.
Instantiations for $Z$ contain the $\textit{neg}$ trace otherwise.
\begin{description}[nosep]
\item[$x_{a, b}$] A variable is true iff it is assigned to true.
	Since $Z$ is a superset of $X_1\cup \dots\cup X_{k+1}$, $Z$ contains the trace representing the value the variable is assigned to.
    Each trace that represents a variable $x_{a, b}$ and a value that it is not assigned to, cannot be in $Z$, as it would not satisfy the condition in line 5 of $\varphi_{k+2}$.

\item[$E_1\land E_2$]
	The positive trace $\pi$ can be included in $Z$, if there exist traces $\pi', \pi''$ that are marked with $\textit{pos}$ where $\pi$ is marked with $\textit{epos}$ or $\textit{epos'}$.
	By the construction of $K$, we know that $\pi'$ and $\pi''$ represent $E_1$ and $E_2$.
	By the induction hypothesis we have that $\pi'$ and $\pi''$ are only in $Z$ if and only if $E_1$ and $E_2$ are true.
	In this case, $E_1\land E_2$ is also true and $\pi$ should be part of $Z$.

	If $\pi$ is one of the negative traces it can only be included in $Z$ if there exist $\pi', \pi''\in Z$ such that at least one of them is marked with $\textit{neg}$ where $\pi$ is marked with $\textit{eneg}$.
	We know that this trace represents either $E_1$ or $E_2$ by the construction of $K$.
	By induction hypothesis, we know that this trace only exist in $Z$ if either $E_1$ or $E_2$ is negative.
	In this case, $E_1\land E_2$ is false and $\pi$ should be included in $Z$.

	If only one of $E_1$ or $E_2$ is negative, we can always instantiate $\pi''$ with a trace that is neither marked with $\textit{pos}$ nor with $\textit{neg}$.

\item[$E_1 \lor E_2$]
	Analogous to $E_1 \land E_2$.

\item[$\neg E_1$]
	If $\pi$ is the negative trace, it is marked with $\textit{epos}$ at the position that represents $E_1$.
	The trace $\pi$ can only be part of $Z$ if there exist traces $\pi', \pi''\in Z$ such that $\pi'$ is marked with $\textit{pos}$ exactly where $\pi$ is marked with $\textit{epos}$.
	Additionally, $\pi''$ has to be marked with $\textit{pos}$ or $\textit{neg}$ exactly where $\pi$ is marked with $\textit{epos'}$ or $\textit{eneg'}$. The trace
	$\pi''$ can be instantiated with some trace labeled neither with $\textit{pos}$ nor $\textit{neg}$.
	By the induction hypothesis, trace $\pi'$ with the described properties only exists if $E_1$ is true.
	Therefore, the negative trace $\pi$ can only be included in $Z$ if $\neg E_1$ is false.

	If $\pi$ is the positive trace, the situation is analogous.
\end{description}

Only the traces representing $E$ are labeled with $f$.
Therefore, a trace $\pi$ labeled with $f$ as well as $\textit{pos}$ is contained in $Z$ if and only if $E$ is true under the assignment represented by $X_1,\dots, X_{k+1}$.
\end{proof}

\begin{lemma}\label{proofs:hierarchycomplete:poly}
The reduction of \Cref{proofs:hierarchycomplete:correct} takes polynomial time in the size of $y$.
\end{lemma}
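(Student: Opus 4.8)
The plan is to bound separately the time needed to construct the Kripke structure $K$ and the time needed to construct the formula $\varphi$, and to show that each is polynomial in $|y|$. The first step is a single pass over $y$ that assigns the indices $N(x_{i,j})$ and $N(e)$ to every variable and every subexpression, and computes the branch length $l$, which is simply the number of variables plus the number of Boolean operators in $y$. Parsing $y$, counting its variables and operators, and assigning consecutive indices (with every variable index below every subexpression index) can all be done in time linear in $|y|$. In the same pass I would also record, for each quantifier block $i$, the set of positions $j$ for which $N(x_{i,a}) = j$ for some $a \le m_i$, since this information drives the labeling of the quantifier branches.

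Next I would bound the size of $K$. The construction emits only a bounded number of branches per syntactic object: two branches per variable, three per binary subexpression $E_1 \otimes E_2$, two per negated subexpression, and $k+1$ additional quantifier branches. As the numbers of variables, operators, and quantifier blocks are each at most $|y|$ (and $k \le l \le |y|$), the total number of branches is $O(|y|)$. Each branch is a non-branching path of length $l = O(|y|)$, so $K$ has $O(|y|^2)$ states and transitions. The label set $\{\textit{q}, \textit{v}, \textit{pos}, \textit{neg}, \textit{eneg}, \textit{eneg'}, \textit{epos}, \textit{epos'}, \textit{f}\}$ has constant size, so each state carries $O(1)$ labels; moreover, which labels appear at a given position of a given branch is fixed directly by the precomputed indices $N$. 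For a variable or subexpression branch this is a table lookup against the index assignment, and for the $i$-th quantifier branch the label $\textit{v}$ at position $j$ is decided by the recorded block-to-position membership. Hence each state together with its labels is produced in polynomial time, and $K$ is built in time polynomial in $|y|$.

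Finally, I would observe that the formula $\varphi = \varphi_1 \oplus_1 \cdots \oplus_{k+1} \varphi_{k+2}$ depends only on $k$ and on the first quantifier of $y$. Each subformula $\varphi_i$ has a fixed shape with a single second-order quantifier, and $\varphi_{k+2}$ consists of a fixed set of conjuncts plus a disjunction ranging over $i \in \{1, \dots, k+1\}$; thus $|\varphi|$ is linear in $k$, and since $k \le |y|$ it is polynomial in $|y|$. Writing $\varphi$ out explicitly, and transforming it into proper Hyper$^2$LTL syntax by pulling all quantifiers that do not lie under a temporal operator to the front, incurs only polynomial overhead.

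I do not expect a genuine obstacle: the claim is a routine resource-accounting argument, and the bounds above compose to a polynomial-time reduction. The only point that deserves care is the labeling of the $k+1$ quantifier branches, whose $\textit{v}$-labels must encode, per block, exactly the positions of that block's variables. Provided the index assignment $N$ and the block-to-position map are fixed during the initial linear pass, these labels reduce to lookups, so the whole reduction runs in polynomial time in $|y|$.
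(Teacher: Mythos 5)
Your proposal is correct and follows essentially the same route as the paper's proof: bound the number of branches by a constant per variable/operator, note each branch has length $l$ which is polynomial in $|y|$, and observe that $\varphi$ depends only on $k$ (which is at most linear in $|y|$). Your version simply spells out the bookkeeping (the index assignment $N$, the $O(|y|^2)$ state count, and the labeling lookups) in more detail than the paper does.
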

\begin{proof}
The structure $K$ consists of branches of length $l$, which is polynomial in the size of $y$.
Additionally, $K$ contains only a constant amount of branches per operator and variable in $y$.
Therefore, $K$ can be built in polynomial time.

The formula $\varphi$ only depends on the number of quantifier alternations in $y$.
These are linear in the size of $y$.
\end{proof}

\paragraph*{\Cref{proofs:hierarchycomplete:complete} (restated)}
\mcfulltree is $\Sigma^p_{k+1}$-complete if the given formula has $k$ quantifier alternations and the outermost second-order quantifier is existential.
\mcfulltree is $\Pi^p_{k+1}$-complete if the given formula has $k$ quantifier alternations and the outermost second-order quantifier is universal.

\begin{proof}
The Quantified Boolean Formula Problem with $k$ quantifier alternations is $\Sigma^p_{k+1}$-complete if the first quantifier is existential and $\Pi^p_{k+1}$-complete otherwise~\cite{computersintractability}.
By the reduction (\Cref{proofs:hierarchycomplete:correct}, \Cref{proofs:hierarchycomplete:poly}) holds for every $k$, that \mcfulltree with $k$ quantifier alternations is $\Sigma^p_{k+1}$-hard if the formula starts with an existential quantifier and $\Pi^p_{k+1}$-hard otherwise.

By \Cref{proofs:hierarchycomplete:lowerbound} the completeness in the respective complexity classes follows.
\end{proof}

\subsection{The Complexity of \mcfulldag}\label{app:exphierarchy}
In this section we prove \Cref{proofs:exphierarchy:complete}, that is, we show that \mcfulldag with $k$ second-order quantifier alternations is $\Sigma^{EXP}_{k+1}$-hard if the outermost second-order quantification is existential and $\Pi^{EXP}_{k+1}$-hard otherwise.
We reduce from the universal problem in these classes.
That is, given an alternating Turing machine $M$ which does not alternate more than $k$ times, and given a binary number $m$ decide whether $M$ accepts after less than $m$ steps.
To do so, we transform every alternating Turing machine $M$ and natural number $m$ into a Kripke structure $K$ and a formula $\varphi$ such that $K\models \varphi$ if and only if $M$ accepts after less than $m$ steps.
The formula $\varphi$ may only depend on $k$ and whether the initial state of $M$ is universal or existential.

We require that $M$ has exactly one accepting and one rejecting state which are both sink states.
Additionally, $M$ does not have more than $k$ alternations between existential and universal states.
We unroll these alternations such that every computation of $M$ alternates exactly $k$ times between universal and existential states.
We call the states of $M$ which correspond to the same unrolled alternation of $M$ an \emph{alternation block}.
Every valid run of $M$ passes the same $k + 1$ alternation blocks in the same order.

For each existential state $s$ that has universal as well as existential states as predecessors, we add one state $s'$ which is also existentially quantified.
Between $s'$ and $s$ there is one transition that does not alter the head position or the tape.
All transitions that enter $s$ from a universal state are moved to be transitions entering $s'$.
We apply the dual construction for universal states that have existential states as well as universal states as predecessor.
This makes sure that all alternation blocks are separated by a set of states that are universal and only have existential predecessors or vice versa.
Consequently, each state has only existential or universal predecessors.
Finally, to encode states and tape symbols on traces, we number states and tape symbols.

To encode $M$ into $K$ , we need to represent the transition function of $M$, the tape, the current state of $M$, and a counter.
All traces in $K$ have length $l+1$ which we define as the number of bits needed to encode $2\cdot m$.
The structure $K$ encodes all valid transitions of $M$, some initial configuration and otherwise only provides enough traces such that every possible state and tape content can be represented as a set of traces.
Thus, $K$ consists of the following substructures:
\begin{itemize}[nosep]
\item A substructure representing the traces for a counter, encoding all possible numbers with $a$ and $b$.
All its states are labeled with \textit{counter}.
\item One substructure per transition in $M$, encoding states with $\textit{from, to}$; tape letters with $\textit{sig, sig'}$; an arbitrary time stamp with $t$. Some states are additionally labeled with $l, r, u, e, \textit{alt}$ and $\textit{start}$.
All states are also labeled with~\textit{step}.
\item A substructure containing traces for the tape entries, encoding the letter with \textit{sig}, an arbitrary time stamp with $t$ and the cell number with $c$.
All states are additionally labeled with \textit{tape}.
\item A substructure containing traces repesenting that all tape cells are filled with blanks at time stamp 0.
All these states are also labeled with \textit{tape} and \textit{init}.
\item A substructure containing exactly one trace encoding $m$ with atomic proposition \textit{m} and all its states labeled with \textit{const}.
\end{itemize}
$\varphi$ quantifies over two auxiliary sets $S, G$, and sets $X_1,\ldots, X_{k+1}$ containing an execution of $M$.
The instantiations of a set $X_i$ are restricted such that each instantiation of $X_i$ is a valid execution branch through alternation block $i$ of $M$.

\emph{Counter}
We build $K$ such that it has a substructure that contains traces encoding all binary numbers with $a$ and $b$.
For this we add states $c_{i, j}$ with $i\in \{1..l\}, j\in\{1..4\}$.
Every state $c_{i, j}$ has transitions to states $c_{i+1, j'}$ for all $j'$ except for the states $c_{l, j}$ which have self loops.
The initial state of $K$ is connected to states $c_{1, j}$.
States $c_{i, 1}$ and $c_{i, 2}$ are labeled with the proposition $a$ and states $c_{i, 1}$ and $c_{i, 3}$ are labeled with $b$.
All states of this structure are labeled with \textit{counter}.

\emph{Transition Relation}
The transition relation of a Turing machine maps a given state and symbol on the tape to the successor state, new symbol and head movement.
For every transition in the transition relation we add one substructure to $K$ which encodes two states, two symbols and the head movement.
All traces in these substructures visit a state labeled with \textit{step}.
The current state is encoded with atomic proposition $\textit{from}$ and the current symbol is encoded with atomic proposition $\textit{sig}$.
The next state is encoded with $\textit{to}$ and the new tape symbol is encoded with $\textit{sig'}$.

For the head movement we use atomic propositions $l$ and $r$.
If the head should move to the right (in the direction of higher numbered tape cells), then the traces are marked with $r$.
If the head should move to the left, then it is marked with $l$.
If the head should not move, then neither $l$ nor $r$ are present on the trace.

The substructure also encodes relevant information about the state the encoded transition is leaving.
All substructures representing transitions leaving a universal state are marked with atomic proposition $u$ and if the respective transition leaves an existential state, they are marked with $e$.
If, additionally, the state is the initial state then it is marked with atomic proposition \textit{start}.
Similarly, if the transition enters the accepting state, then it is marked with \textit{acc} and if the entered state is the first state of an alternation block it is marked with~\textit{alt}.

Every substructure is built such that it can encode arbitrary numbers with atomic propositions $t$ and $h$.
The proposition $t$ represents timestamps and the proposition $h$ represents the corresponding head position.
We call a transition of $M$ together with a concrete timestamp and a head position a \emph{step} of $M$.

\emph{Tape}
The tape is represented as a sequence of write accesses to particular tape cells.
Every such access is represented by a trace encoding the number of the tape cell with atomic proposition $c$, the time stamp with atomic proposition $t$ and the written letter with \textit{sig}.
The substructure providing all traces for the tape therefore contains traces encoding arbitrary numbers with $c, t$ and \textit{sig}.
To identify traces representing tape accesses, they are also marked with \textit{tape}.

\emph{Constants}
There is one important constant:
$m$, the number of maximal steps.
$m$ is also the starting position of the head on the tape.
To encode these, we add one trace to $K$ labeled with \textit{const} and encoding $m$ with atomic proposition $m$.

\emph{Initial Configuration}
Initially the tape is filled with the special blank letter.
Because these letters will not be added by taking a transition, we will construct a substructure containing them.
All traces in this structure are labeled with \textit{tape} as well as \textit{init}.
They encode arbitrary numbers with $c$ but only encode the blank symbol with \textit{sig} and are not labeled with $t$ (to represent time stamp 0).

\emph{Formula}
We construct a corresponding formula $\varphi$ that is satisfied by $K$ if and only if the Turing machine $M$ accepts within $m$ steps.
The formula may only depend on $k$ and whether the initial state of $M$ is existential or universal.

We start by setting up the counter.
The counter implements the clock of $M$ and is needed to decide which content of a tape cell is the newest.
Therefore, we quantify over two sets of traces -- $S$ (in $\varphi_S$) and $G$ (in $\varphi_G$).
The set $S$ contains all traces that encode two binary numbers with $a$ and $b$ such that the number encoded with $b$ is the successor of the number encoded of $a$. The set 
$G$ is the transitive closure of $S$ and therefore represents the greater than relation on numbers of length $l$.

We use the LTL formula $(a\land \neg b)\LTLuntil(\neg a\land b\land \LTLnext\LTLsquare(a\leftrightarrow b))$ to identify all traces where the number encoded with $b$ is the successor of the number encoded with~$a$.
\begin{align*}
\varphi_S =&\mathbb Q S.\forall \pi\in \mathfrak G.\pi\triangleright S\leftrightarrow(\LTLfinally \textit{counter}_\pi\land\LTLfinally b_\pi\land
((a_\pi\land \neg b_\pi)\LTLuntil(\neg a_\pi\land b_\pi\land \LTLnext\LTLsquare(a_\pi\leftrightarrow b_\pi)))) \\
\varphi_G =& \mathbb Q G.\forall\pi\in \mathfrak G.\pi\triangleright G\\
&\leftrightarrow
(\pi\triangleright S\lor
\exists\pi'\pi''\in G.\LTLsquare((b_{\pi'}\leftrightarrow a_{\pi''})\land(a_{\pi}\leftrightarrow a_{\pi'})\land(b_{\pi}\leftrightarrow b_{\pi''})))\land\LTLfinally\textit{counter}_\pi
\end{align*}
The quantifier $\mathbb Q$ in $\varphi_S$ and $\varphi_G$ as well as the connective between the two formulas depend on the quantification of the initial state of $M$.
Therefore, they do not introduce another quantifier alternation.

The overall structure of $\varphi$ is as follows:
\begin{align*}
\varphi =
\mathbb Q_1 S. \varphi_S \oplus_1 \mathbb Q_1 G.\varphi_G\oplus_1
\mathbb Q_1 X_1.\varphi_1\oplus_1\mathbb Q_2 X_2.\varphi_2\oplus_2\dots\mathbb Q_{k+1} X_{k+1}.\varphi_{k+1}\oplus_{k+1} \varphi_Y
\end{align*}
If the initial state of $M$ is existential, then $\mathbb Q_i$ is an existential quantifier if and only if~$i$ is odd.
Otherwise, $\mathbb Q_i$ is an existential quantifier if and only if $i$ is even.
If $\mathbb Q_i \neq \exists$ then $\mathbb Q_i = \forall$.
Further, $\oplus_i$ is $\land$ if $\mathbb Q_i$ is existential and $\rightarrow$ otherwise.

The overall formula $\varphi$ collects traces representing steps of $M$ in alternation block $i$ into set $X_i$ and quantifies over it universally if the $i$-th alternation block contains only universal states, and existentially otherwise.

We start by defining a subformula $\textit{validS}(\pi, \pi')$ which is true if and only if $\pi'$ represents a step that is a valid successor of the step represented with $\pi$.
\begin{align*}
&\textit{validS}(\pi, \pi') := \LTLfinally \textit{step}_\pi \land \LTLfinally\textit{step}_{\pi'} \land\LTLsquare(\textit{to}_\pi\leftrightarrow\textit{from}_{\pi'}) & (1)\\
&\land\big [(\neg\LTLfinally r_\pi\land\neg\LTLfinally l_\pi\rightarrow\LTLsquare(h_\pi\leftrightarrow h_{\pi'})) \land(\LTLfinally r_\pi\rightarrow\exists\rho\in S.\LTLsquare((a_\rho\leftrightarrow h_\pi)\land(b_\rho\leftrightarrow h_{\pi'}))) &(2)\\
&\land(\LTLfinally l_\pi\rightarrow\exists\rho\in S.\LTLsquare((b_\rho\leftrightarrow h_\pi)\land(a_\rho\leftrightarrow h_{\pi'})))\big ]\quad&(3)\\
&\land\big [\exists\rho\in S.\LTLsquare((t_\pi\leftrightarrow a_\rho)\land(t_{\pi'}\leftrightarrow b_\rho))\big ] \quad&(4)\\
&\land\big [\bigvee_{j \le i}\exists\tau\in X_j.\LTLfinally \textit{tape}_\tau \land
\LTLsquare((\textit{sig}_{\pi'}\leftrightarrow \textit{sig}_\tau)\land(h_{\pi'}\leftrightarrow c_\tau)) \quad&(5)\\
& \land\exists\gamma\in G.\LTLsquare((t_\tau\leftrightarrow a_\gamma)\land(t_{\pi'}\leftrightarrow b_\gamma))\land\neg\big[ \bigvee_{j'\le i}\exists\tau'\in X_{j'}.\LTLfinally\textit{tape}_{\tau'}\land
\LTLsquare(c_\tau\leftrightarrow c_{\tau'}) \quad&(6)\\
&\land\exists\gamma,\gamma'\in G.\LTLsquare((t_\tau\leftrightarrow a_\gamma)\land(t_{\tau'}\leftrightarrow b_\gamma)\land(t_{\tau'}\leftrightarrow a_{\gamma'})\land(t_{\pi'}\leftrightarrow b_{\gamma'}))\big ]\big ] &(7)\\
&\land\big [\exists\tau\in X_i.\LTLfinally\textit{tape}_\tau\land\LTLsquare((\textit{sig'}_{\pi}\leftrightarrow \textit{sig}_\tau)\land(h_{\pi}\leftrightarrow c_\tau)\land(t_{\pi'}\leftrightarrow t_\tau))\big ]&(8)
\end{align*}
A step $s'$ is a valid successor of step $s$ iff the following conditions are met:
\begin{itemize}[nosep]
	\item $s$ ends in the state that $s'$ starts from (rightmost conjunction of line 1).
	\item The head has moved accordingly to $s$ (lines 2-3).
	\item The timestamp on $s'$ is the successor of the timestamp of $s$ (line 4).
	\item The tape has the required letter at the current step and head position (lines~5-7).
	\item The letter written by $s$ is on the tape at the old head position but at the current step (line 8).
\end{itemize}

To check in \textit{validS} whether or not there is a given letter $\sigma$ at some given head position $h$ and timestamp $t$ does not only require us to check whether there is a trace that indicates that~$\sigma$ is written at $h$ (line 7), but we also have to make sure that there is a suitable timestamp $t'$ encoded.
Such a $t'$ must be before $t$ (line 8) and there must be no trace encoding that there was a(nother) letter written to the tape position between $t'$ and $t$ (line 9-10).

We use the formula $\textit{validS}$ above to ensure that all traces in $X_i$ are a valid pass of alternation block $i$, using the following formula $\varphi_i^{\text{con}}$ for some $i>1$:
\begin{align*}
&\varphi_i^{\text{con}} = (\exists\pi_1,\pi_2\in X_i. (\exists!\pi_1'\in X_{i-1}. \textit{validS}(\pi_1', \pi_1)\land\textit{alt}_{\pi_1'} \land\neg(\exists\pi_1''\in X_{i-1}. \textit{validS}(\pi_1', \pi_1''))\land\textit{alt}_{\pi_2})\\
&\land\forall\pi\in X_i. \pi\neq\pi_1 \land \pi\neq\pi_2\rightarrow (\exists!\pi'\in X_i.\textit{validS}(\pi,\pi')) \land \exists!\pi'\in X_i.\textit{validS}(\pi',\pi))
\end{align*}
The formula $\varphi_i^{\text{con}}$ is satisfied if the transitions in $K$ are connectable to a path, which means that for every step represented as trace $\pi$ there is exactly one predecessor step and one successor step (bottom line).
The only situation in which there is no valid predecessor (successor) to a step is when the step is the first (last) step of the computation fragment.
If the step is the first step of a computation ($\pi_1$) then it should be connectable to the computation fragment that is included in $X_{i-1}$.
Further, if the step is the last step of the computation ($\pi_2$) in $X_i$ then it should end in the next alternation block of $M$ ($\textit{alt}_{\pi_2}$).
In the case where $i = 1$, we replace this with the following condition:
\setcounter{equation}{0}
\begin{align*}
&\varphi^{\text{con}}_1 = (\exists\pi_1,\pi_2\in X_1. (\textit{start}_{\pi_1}\land \LTLnext t_{\pi_1}\land\LTLnext\LTLnext\LTLglobally \neg t_{\pi_1}\\
&\land(\exists \pi'\in \mathfrak G.\LTLfinally\textit{const}_{\pi'}\land \LTLglobally(h_{\pi_1}\leftrightarrow m_{\pi'}))\land\textit{alt}_{\pi_2})\land\forall\pi\in X_1. \pi\neq\pi_1 \land \pi\neq\pi_2\\
&\rightarrow (\exists!\pi'\in X_1.\textit{validS}(\pi,\pi')) \land \exists!\pi'\in X_1.\textit{validS}(\pi',\pi))\land\forall \pi\in\mathfrak G.\LTLfinally\textit{init}\rightarrow \pi\triangleright X_1
\end{align*}
$\varphi^{\text{con}}_1$ specifies that there exists exactly one valid start step at time 1 (first line), which is a transition leaving the initial state of $M$ and has the head at position $m$ (second line).
There are two more conditions to enforce for $X_1, \dots, X_{k+1}$:
\begin{itemize}[nosep]
\item Each universally quantified set only contains transitions leaving universal states and equally for existentially quantified sets.
\item Every trace representing a write to the tape has a corresponding step.
\end{itemize}
The first condition is rather easy.
We identify all relevant traces and make sure that they are marked correctly with $u$ or $e$.
We define $\varphi^q_i$ as $\forall\pi\in X_i. \LTLfinally \textit{step}_\pi\rightarrow e_\pi$ if $i$ is odd and~$M$ starts in an existential state or if $i$ is even and $M$ starts in a universal state.
We define $\varphi^q_i = \forall\pi\in X_i. \LTLfinally \textit{step}_\pi\rightarrow u_\pi$ in all other cases.

Finally, $\varphi^t_i$ is required such that the sets do not contain more traces than they should.
While for transitions this is covered by the requirement that they are connectable to a path it is not covered for the traces representing the tape.
Therefore, we do this here:
\begin{align*}
\varphi^t_i =
\forall\pi\in X_i.\LTLfinally\textit{tape}_\pi\rightarrow \exists \pi'\in X_i.\LTLfinally\textit{step}_{\pi'}\land
\LTLglobally((\textit{sig}'_{\pi'}\leftrightarrow \textit{sig}_\pi)
\land(h_{\pi'}\leftrightarrow c_\pi)
\land(t_\pi\leftrightarrow t_{\pi'}))
\end{align*}
The complete condition $\varphi_i$ that a set $X_i$ has to satisfy is the conjunction of the previously defined formulas: $\varphi_i = \varphi^{\text{con}}_i\land\varphi^q_i\land\varphi^t_i$.

The last condition that has to be true on the computation branch is that it is accepting in less than $m$ steps.
The formula $\varphi_Y$ is therefore satisfied if set $X_{k+1}$ contains a step entering the accepting state of $M$ with a time stamp less than $m$.
Such a step is marked with $acc$ by construction.
\begin{align*}
\varphi_Y = \exists\pi\in X_{k+1}.\LTLfinally\textit{step}_\pi\land \LTLfinally\textit{acc}_\pi\land \exists\gamma\in G.\exists\pi'\in\mathfrak G.\LTLfinally\textit{const}_{\pi'}\land\LTLglobally((t_\pi\leftrightarrow a_\gamma)\land(m_{\pi'}\leftrightarrow b_\gamma))
\end{align*}

We now turn to prove the correctness of the construction of $K$ and definition of $\varphi$, in a sequence of lemmas, that show the correctness of each of the subformulas we have defined. 

\begin{lemma}
The set $S$ contains a trace $\pi$ iff $\pi$ satisfies the following conditions: (1) $\pi$ encodes two $l$-bit numbers $A, B$ with $a$ and $b$; and (2) $A+1 = B$.
\end{lemma}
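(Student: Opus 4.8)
The plan is to reduce the statement to a purely temporal characterization of the body of $\varphi_S$ and then verify it by the standard ripple-carry analysis of binary increment. First I would pin down the set $S$ uniquely. Writing $\chi(\pi) := \LTLfinally\textit{counter}_\pi \land \LTLfinally b_\pi \land \bigl((a_\pi\land\neg b_\pi)\LTLuntil(\neg a_\pi\land b_\pi\land\LTLnext\LTLsquare(a_\pi\leftrightarrow b_\pi))\bigr)$ for the matrix of the biconditional, I observe that both $\pi\triangleright S$ and $\chi(\pi)$ depend only on the $AP$-projection of $\pi$; hence there is exactly one set, up to $=_{\textit{AP}}$, satisfying $\forall\pi\in\mathfrak G.\,(\pi\triangleright S\leftrightarrow\chi(\pi))$, namely $S^\ast=\{\pi\in\textit{Traces}(K)\mid\chi(\pi)\}$. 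Because the quantifier $\mathbb Q$ over $S$ and its connective to the remainder of $\varphi$ are chosen compatibly ($\exists$ with $\land$, or $\forall$ with $\rightarrow$), the evaluation of $\varphi$ proceeds in either case with $S$ bound to $S^\ast$: for $\exists/\land$ the unique witness is $S^\ast$, and for $\forall/\rightarrow$ only $S=S^\ast$ makes the antecedent true. It therefore suffices to show that $\chi(\pi)$ holds iff $\pi$ is a counter trace encoding $l$-bit numbers $A,B$ with $A+1=B$.

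Next I would dispatch the encoding part. The conjunct $\LTLfinally\textit{counter}_\pi$ is satisfied exactly by the traces of the counter substructure, and by its construction (the four states $c_{i,1},\dots,c_{i,4}$ per layer label $a$ and $b$ independently) each such trace encodes a pair $(A,B)$ with $a$ and $b$, and every pair occurs. Thus condition~(1) of the lemma is equivalent to $\LTLfinally\textit{counter}_\pi$, and it remains to prove that on a counter trace the remaining conjuncts hold iff $A+1=B$.

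The core step is the ripple-carry correspondence. Since encodings are least-significant-bit first, a (non-overflowing) increment $B=A+1$ holds iff $A$ has the shape $1^{p}0w$ and $B$ the shape $0^{p}1w$ for a common high-order suffix $w$: below the first $0$-bit of $A$ we have $a\land\neg b$, at that bit $\neg a\land b$, and from there on $A$ and $B$ agree, i.e. $a\leftrightarrow b$. Matching clause by clause, this is precisely $(a\land\neg b)\LTLuntil(\neg a\land b\land\LTLnext\LTLsquare(a\leftrightarrow b))$, where the witness position of the $\LTLuntil$ is the first $0$-bit of $A$ (and is vacuously anchored when $A$ is even). The conjunct $\LTLfinally b$ rules out the overflow case $A=2^l-1,\,B=0$, for which no $\neg a\land b$ witness exists. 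I would then run both implications through this correspondence.

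The step I expect to require the most care is the treatment of the two boundary positions: the leading non-counter initial state at step~$0$, which carries neither $a$ nor $b$, and the self-looping final counter state, over which the suffix $\LTLsquare(a\leftrightarrow b)$ is evaluated. I must check that the $\LTLuntil$ is anchored correctly relative to the initial state, and, crucially, that $\LTLsquare(a\leftrightarrow b)$ over the repeating final state faithfully reflects agreement of $A$ and $B$ on the top bits rather than failing spuriously when the carry reaches the most significant position. Here I would invoke the fact that the trace length $l+1$ is chosen to accommodate numbers up to $2\cdot m$, so that any relevant increment leaves a high-order buffer region on which $A$ and $B$ coincide (both zero); the carry then never propagates into the self-looping state, and $\LTLsquare(a\leftrightarrow b)$ holds over the suffix exactly as required. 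Establishing this suffix/boundary consistency is the crux of the argument, after which the two implications of the lemma are routine.
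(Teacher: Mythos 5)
Your argument follows the same route as the paper's own proof: condition (1) is dispatched as immediate from the counter substructure, and condition (2) is the standard ripple-carry reading of $(a_\pi\land\neg b_\pi)\LTLuntil(\neg a_\pi\land b_\pi\land\LTLnext\LTLsquare(a_\pi\leftrightarrow b_\pi))$ with the conjunct $\LTLfinally b_\pi$ excluding the overflow case $B=0$ --- which is exactly what the paper writes, only more tersely and without your explicit fixing of $S$ as the unique set satisfying the biconditional or your checks of the initial-state and self-loop boundaries. The extra care you invest in those boundary positions (in particular the head-room of $l$ above $m$, so that the carry never reaches the self-looping layer for the pairs the reduction actually uses) goes beyond the paper's short proof but does not change the approach or its correctness.
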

\begin{proof}
The first condition is trivial.
The second condition holds by the definition of binary increment.
Starting from the LSB, flip every one to a zero ($a\land\neg b$) until there is a zero which then gets changed to a one.
This means that before $\neg a$ is true it holds that $a\land\neg b$, and at the moment where $\neg a$ holds, also $b\land\LTLnext\LTLglobally(a\leftrightarrow b)$ has to be true.
Since the traces only encode the first $l$ bits of a natural number we have to prevent an overflow which is done by requiring that $B\neq 0$.
\end{proof}

\begin{observation}
\label{proofs2:exphierarchy:SObservation}
If $t_\pi$ and $t_{\pi'}$ encode the two numbers $T, T' \leq 2^l$ then the formula  $\exists\pi''\in S.\LTLglobally((a_{\pi''}\leftrightarrow t_\pi)\land(b_{\pi''}\leftrightarrow t_{\pi'}))$ is satisfied if and only if $T+1 = T'$.
\end{observation}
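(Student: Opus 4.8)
The plan is to derive this observation almost directly from the preceding lemma characterizing $S$, together with the bit-level reading of the global biconditional fixed in the preliminaries. First I would recall the two ingredients. The lemma immediately above tells us that a trace $\pi''$ lies in $S$ exactly when it encodes two $l$-bit numbers $A$ (with $a$) and $B$ (with $b$) satisfying $A+1=B$. The preliminaries fix that, for two traces of equal length, a formula of the form $\LTLglobally(p\leftrightarrow p')$ holds iff the two position-indexed bit patterns coincide; since all traces in $K$ have the same length $l+1$, the formula $\LTLglobally(a_{\pi''}\leftrightarrow t_\pi)$ holds iff the number encoded by $\pi''$ with $a$ equals the number $T$ encoded by $\pi$ with $t$, and likewise $\LTLglobally(b_{\pi''}\leftrightarrow t_{\pi'})$ holds iff the $b$-number of $\pi''$ equals $T'$. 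The only conceptual point to check is that the biconditional compares bits by position and is insensitive to the fact that the two propositions ($a$ versus $t$) differ.

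For the forward direction, I would assume the existential formula holds and let $\pi''\in S$ be a witness satisfying $\LTLglobally(a_{\pi''}\leftrightarrow t_\pi)$ and $\LTLglobally(b_{\pi''}\leftrightarrow t_{\pi'})$. By the bit-level reading, the $a$-number $A$ of $\pi''$ equals $T$ and its $b$-number $B$ equals $T'$. Membership $\pi''\in S$ then gives $A+1=B$ by the lemma, hence $T+1=T'$, as required.

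For the backward direction, I would assume $T+1=T'$ and exhibit a witness. The counter substructure is built from layers $c_{i,1},\dots,c_{i,4}$ that are fully connected to the next layer, so every combination of $a$- and $b$-bits occurs on some trace; in particular there is a counter trace $\pi''$ encoding $T$ with $a$ and $T'$ with $b$. Because $T+1=T'$ and $T'$ is a genuine $l$-bit value (so no overflow occurs, matching the non-wraparound condition $B\neq 0$ used in the characterization of $S$), this $\pi''$ satisfies the defining condition of $S$ and therefore $\pi''\in S$. The same $\pi''$ then satisfies the two global biconditionals by construction, witnessing the existential quantifier.

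I do not expect a serious obstacle here; the statement is essentially a restatement of the $S$-lemma through the encoding semantics. The only place demanding a line of care is the backward direction's appeal to completeness of the counter substructure: I must confirm that the product of all $a$/$b$ bit choices is realizable as traces, which follows from full connectivity between consecutive layers, and that the absence of overflow lets the pair $(T,T')$ with $T+1=T'$ fall inside the range handled by the lemma. Both are immediate from the construction, so the proof is short.
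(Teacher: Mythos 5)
Your proof is correct and follows exactly the reasoning the paper leaves implicit: the paper states this as an Observation without proof, treating it as an immediate consequence of the preceding lemma characterizing $S$ together with the bit-level semantics of $\LTLglobally(p_\pi\leftrightarrow p_{\pi'})$ and the full connectivity of the counter substructure. Your elaboration of both directions, including the check that $T+1=T'$ forces $T'\neq 0$ so the witness trace satisfies the $\LTLfinally b$ condition in $\varphi_S$, is exactly the intended argument.
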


\begin{lemma}
Any set $G$ satisfying the condition in $\varphi_G$ contains for all $0\le A < B \le 2^l$ a trace $\pi$ encoding $A$ with $a$ and $B$ with $b$.
\end{lemma}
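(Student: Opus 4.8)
The plan is to prove the statement by induction on the difference $B-A$, relying only on the left-to-right reading of the biconditional that defines $G$ in $\varphi_G$: whenever a counter trace satisfies the right-hand side disjunction, it must belong to $G$. Because this direction is respected by \emph{every} solution of the fixpoint equation, the argument will apply to any $G$ satisfying the condition, and I will never need uniqueness or minimality of the fixpoint.

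First I would record the two facts supplied by the construction. From the counter substructure of $K$, for every pair of $l$-bit numbers $(A,B)$ there is a trace $\pi$ with $\LTLfinally\textit{counter}_\pi$ that encodes $A$ with $a$ and $B$ with $b$: at each of the $l$ positions the $a$-bit and the $b$-bit can be chosen independently through the four states $c_{i,1},\dots,c_{i,4}$, so all bit combinations are realizable. Second, by the preceding lemma, $S$ consists exactly of the counter traces encoding $(A,B)$ with $A+1=B$.

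For the base case $B-A=1$, the trace $\pi$ encoding $(A,B)$ satisfies $\pi\triangleright S$ by the preceding lemma, so the first disjunct of the right-hand side of $\varphi_G$ holds; together with $\LTLfinally\textit{counter}_\pi$ the biconditional forces $\pi\in G$. For the inductive step with $B-A>1$, I would pick $C=A+1$, so that the trace $\pi'$ encoding $(A,C)$ is in $G$ by the base case and the trace $\pi''$ encoding $(C,B)$ is in $G$ by the induction hypothesis (as $B-C<B-A$). These witnesses satisfy the composition condition $\LTLsquare((b_{\pi'}\leftrightarrow a_{\pi''})\land(a_{\pi}\leftrightarrow a_{\pi'})\land(b_{\pi}\leftrightarrow b_{\pi''}))$ precisely because the $b$-number of $\pi'$ and the $a$-number of $\pi''$ both equal $C$, the $a$-numbers of $\pi$ and $\pi'$ both equal $A$, and the $b$-numbers of $\pi$ and $\pi''$ both equal $B$. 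Hence the second disjunct holds for $\pi$, and the biconditional again forces $\pi\in G$.

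The only delicate point I would emphasize is that the proof uses the defining biconditional solely in the direction ``right-hand side implies membership'', which holds in every $G$ satisfying $\varphi_G$; thus the entire strict-order relation is included in each such $G$, which is what the lemma asserts. The genuinely routine parts are the bit-level checks that the $\LTLsquare$-equivalences translate into the claimed equalities between encoded numbers; I would state these correspondences once and not expand the calculations.
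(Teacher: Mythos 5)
Your proof is correct and follows essentially the same route as the paper's: both arguments reduce the claim to the fact that the strict order is generated from the successor relation $S$ by splitting $A<B$ at an intermediate value $C$, which is exactly what the second disjunct of $\varphi_G$ witnesses. Your version is somewhat more careful than the paper's (which appeals directly to ``$<$ is the transitive closure of increment''), since you make the induction on $B-A$ explicit and correctly observe that only the right-to-left direction of the biconditional is needed, so the conclusion holds for \emph{every} $G$ satisfying $\varphi_G$ without any minimality assumption.
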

\begin{proof}
Because $<$ is the transitive closure of increment it suffices to show that $G$ represents the transitive closure of the relation represented by $S$.
Every trace $\pi$ from the counter substructure should be contained in $G$ if and only if it is in $S$ or it encodes $A, B$ such that there is a $C$ with $A < C < B$.
The second condition is equal to requiring that there exists two other traces $\pi',\pi''\in G$ such that the following hold:
$C$ is encoded with $b$ on $\pi'$ and with $a$ on $\pi''$;
 $A$ is encoded on $\pi'$ with $a$; and $B$ is encoded on $\pi''$ with $b$.
Therefore it is equivalent to $\exists\pi'\pi''\in G.\LTLsquare((b_{\pi'}\leftrightarrow a_{\pi''})\land(a_{\pi}\leftrightarrow a_{\pi'})\land(b_{\pi}\leftrightarrow b_{\pi''}))$.
\end{proof}

After proving that $S$ as well as $G$ contain the traces to compare numbers, we prove that the formulas $\varphi_i$ specify valid computation fragments of $M$.
We do this by induction over the number encoded with $t$ (the timestamp).

We say that a set $X_i$ of traces \emph{describes a computation of $M$} if and only if all traces in $X_i$ that are marked with \textit{step} are connectable to a series of steps of $M$, all traces in $X_i$ that are marked with \textit{tape} are the corresponding letters written to the tape, and $X_i$ does not contain any other traces.

\begin{lemma}\label{proofs2:exphierarchy:indbase}
Every set $X_1$ satisfying $\varphi_1$ contains the initial configuration of $M$.
\end{lemma}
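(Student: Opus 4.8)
The plan is to read the claim off directly from the conjuncts of $\varphi_1=\varphi^{\text{con}}_1\land\varphi^q_1\land\varphi^t_1$, observing that only $\varphi^{\text{con}}_1$ is needed: the lemma asserts a \emph{containment} (that $X_1$ \emph{contains} the initial configuration), i.e. a lower bound on $X_1$, whereas $\varphi^q_1$ and $\varphi^t_1$ only restrict $X_1$ from above. First I would isolate the last conjunct of $\varphi^{\text{con}}_1$, namely $\forall\pi\in\mathfrak G.\LTLfinally\textit{init}\rightarrow\pi\triangleright X_1$. By the construction of $K$, the traces satisfying $\LTLfinally\textit{init}$ are exactly those of the initial-configuration substructure: for every cell number there is a trace encoding that cell with $c$, the blank symbol with \textit{sig}, and no timestamp (representing time stamp $0$). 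These traces collectively describe the all-blank tape at time $0$. The conjunct forces every such trace into $X_1$, so $X_1$ contains the complete initial tape.

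Second, I would use the existential conjunct $\exists\pi_1,\pi_2\in X_1.(\dots)$ to obtain the starting step. The trace $\pi_1$ is required to satisfy $\textit{start}_{\pi_1}$, and by construction only transitions leaving the initial state of $M$ carry \textit{start}; hence $\pi_1$ represents such a transition and lies in $X_1$. Its timestamp is pinned down by $\LTLnext t_{\pi_1}\land\LTLnext\LTLnext\LTLglobally\neg t_{\pi_1}$, which, under the least-significant-bit-first binary encoding, forces $t_{\pi_1}$ to be the first computation step; its head position is fixed by $\exists\pi'\in\mathfrak G.\LTLfinally\textit{const}_{\pi'}\land\LTLglobally(h_{\pi_1}\leftrightarrow m_{\pi'})$, where $\pi'$ is the unique \textit{const} trace encoding $m$, so $\pi_1$ has its head at position $m$, the initial head position of $M$. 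Combining the two parts, $X_1$ contains both the blank tape at time $0$ and the starting step out of $M$'s initial state with head at $m$, which together constitute the initial configuration.

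The main obstacle is the bookkeeping in the second step: one must verify against the construction of $K$ that the temporal sub-formulas select exactly the intended objects — that $\LTLnext t_{\pi_1}\land\LTLnext\LTLnext\LTLglobally\neg t_{\pi_1}$ decodes to the correct initial timestamp, that the \textit{const}/$m$ pairing forces the head precisely to $m$, and that \textit{start} labels transitions leaving the initial state and no others. These checks are routine once the encoding conventions and the \textit{const} and initial-configuration substructures are unwound, but they are exactly where an error in the reduction would hide, so I would carry them out explicitly rather than by appeal to intuition. The conjuncts $\varphi^q_1$ and $\varphi^t_1$ play no role here and are deferred to the companion lemma establishing that $X_1$ contains \emph{only} valid computation traces.
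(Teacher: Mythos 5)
Your proposal is correct and matches the paper's own argument: the paper likewise derives the claim solely from the conjunct $\forall\pi\in\mathfrak G.\LTLfinally\textit{init}\rightarrow\pi\triangleright X_1$ of $\varphi^{\text{con}}_1$, which forces every \textit{init}-labeled trace (the all-blank tape at timestamp $0$) into $X_1$. Your additional unwinding of the $\pi_1$ conjunct (starting step, timestamp $1$, head at $m$) is sound but goes beyond what the paper proves here; the paper reads ``initial configuration'' as just the blank tape and defers the step-level reasoning to the subsequent induction lemma.
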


\begin{proof}
The base case here is the computation fragment of length zero, i.e.,  $t = 0$.
We have to show that $\varphi_1$ is satisfied if $X_1$ contains all traces representing the empty tape.
This is the case because $X_1$ can only be satisfied if it contains all traces labeled with \textit{init} which represent the blank symbol on every tape cell at timestamp 0 ($\varphi^\textit{con}_1$ line 6).
\end{proof}

\begin{lemma}\label{proofs2:exphierarchy:indstep}
For all $1 \le i\le k+1$ we have that if all $X_{j}$ with $j < i$ describe a computation branch of $M$ in alternation block $j$, then $X_i$ satisfies $\varphi_i$ iff it describes a continuation in alternation block $i$ of the computation described $X_{i-1}$.
\end{lemma}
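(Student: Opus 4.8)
The plan is to prove both directions of the biconditional by leaning on the standing hypothesis (that $X_1,\dots,X_{i-1}$ faithfully describe the computation through blocks $1,\dots,i-1$) together with the already-established facts that $S$ encodes the successor relation and $G$ the strict order on $l$-bit numbers. Inside block $i$ the argument proceeds by induction on the timestamp $t$ carried by a step, with the anchor supplied by Lemma~\ref{proofs2:exphierarchy:indbase}: assuming the configuration of $M$ is correctly represented at time $t$ by the tape traces available in $X_1\cup\dots\cup X_i$, I show that a trace $\pi'$ enters $X_i$ as the time-$t{+}1$ step exactly when it is a legal successor, and that the write it performs is recorded as the new tape trace. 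This nesting breaks the apparent circularity between ``steps are valid'' and ``the tape is correct,'' since the cell content read for the step into time $t{+}1$ depends only on writes at times $\le t$.

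The first key step is to establish the correctness of the auxiliary predicate $\textit{validS}(\pi,\pi')$: under the standing hypotheses, it holds iff $\pi'$ represents a step of $M$ that is a legal successor of the step represented by $\pi$. This is a conjunct-by-conjunct check against the displayed formula. State compatibility (the $\textit{to}$-state of $\pi$ equals the $\textit{from}$-state of $\pi'$, line~1), the head movement (lines~2--3, phrased through membership in $S$), and the timestamp increment (line~4, via $S$ and Observation~\ref{proofs2:exphierarchy:SObservation}) are direct. The delicate conjunct is the tape read (lines~5--7): I would argue it holds iff the \emph{most recent} write to the current head cell strictly before time $t{+}1$ deposited the letter matching $\textit{sig}_{\pi'}$. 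The positive part locates a tape trace $\tau$ in some $X_j$ with $j\le i$ carrying the right cell and letter with $t_\tau$ below the current time (a comparison in $G$), while the negated existential forbids any intervening write $\tau'$ with $t_\tau < t_{\tau'} < t_{\pi'}$ (two comparisons in $G$); by the inner induction hypothesis the available tape traces at times $\le t$ are exactly the writes performed so far, so this selects the correct content. Line~8 then forces the letter written by $\pi$ to appear as a tape trace in $X_i$ at the old head position and the new timestamp, coupling each step to its effect.

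With $\textit{validS}$ in hand I would analyze the three conjuncts of $\varphi_i=\varphi_i^{\text{con}}\land\varphi_i^{q}\land\varphi_i^{t}$. For $\varphi_i^{\text{con}}$, the uniqueness requirements (an $\exists!$ predecessor and successor for every interior step) together with the strict monotonicity of timestamps along $\textit{validS}$ (line~4) force the step traces of $X_i$ to form a single finite $\textit{validS}$-chain: monotone timestamps preclude cycles and bound the length, while the uniqueness clauses preclude branching and disconnected pieces. The boundary clauses attach the first step $\pi_1$ to the $\textit{alt}$-marked exit step of $X_{i-1}$ (for $i>1$; for $i=1$ to the initial configuration via Lemma~\ref{proofs2:exphierarchy:indbase} and the head position $m$) and mark the last step $\pi_2$ with $\textit{alt}$, so the chain is exactly a pass through alternation block $i$ that hands off to block $i{+}1$. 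The formula $\varphi_i^{q}$ then pins the polarity of every step trace to $u$ or $e$ according to the block's quantifier, and $\varphi_i^{t}$ forbids tape traces lacking a corresponding write by some step of $X_i$; combined with line~8 of $\textit{validS}$, the tape traces of $X_i$ are forced to be exactly the writes made along the chain. Reading these equivalences forwards gives $\Rightarrow$ (every model of $\varphi_i$ describes a valid continuation containing nothing else), and backwards gives $\Leftarrow$.

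The main obstacle I anticipate is the tape-read conjunct (lines~5--7) together with the completeness clause that $X_i$ ``contains nothing else.'' The former is where all quantification over $S$ and $G$ is exercised and where the induction on $t$ is indispensable, since correctness of the read hinges on having \emph{exactly} the earlier writes present. The latter requires showing that any extraneous trace breaks one of the conjuncts: interior step-labeled junk violates the $\exists!$ chain conditions of $\varphi_i^{\text{con}}$, spurious tape traces violate $\varphi_i^{t}$, and I must also verify that counter, constant, and initialization traces cannot enter $X_i$ without falsifying the chain or write-matching conditions, since the relevant guards in these formulas are semantic (via $\LTLfinally\textit{step}$ and $\LTLfinally\textit{tape}$) rather than structural.
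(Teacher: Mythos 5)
Your proposal is correct and follows essentially the same route as the paper's proof: establish the correctness of $\textit{validS}$ conjunct by conjunct (with the tape-read clause resolved by the induction on timestamps anchored at Lemma~\ref{proofs2:exphierarchy:indbase}), then use the $\exists!$ predecessor/successor conditions of $\varphi_i^{\text{con}}$ together with the boundary clauses for $\pi_1,\pi_2$ and the polarity constraint $\varphi_i^{q}$ to force the step traces into a single chain through alternation block $i$. Your additional care about excluding counter, constant, and initialization traces from $X_i$ is a point the paper's proof passes over implicitly, but it does not change the argument.
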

\begin{proof}
We start by proving that $\textit{validS}(\pi, \pi')$ is satisfied if and only if $\pi'$ is a successor step of $\pi$.
As already mentioned, this requires that the encoded steps $s$ and $s'$ satisfy the following conditions:
\begin{itemize}[nosep]
\item $s$ ends in the state $s'$ starts from.
\item The head has moved accordingly to $s$.
\item $s'$ is one step later than $s$.
\item The required letter is on the tape at the current step and head position.
\item The letter written by $s$ is on the tape at the old head position but at the current step.
\end{itemize}

The first condition is trivially covered by $\LTLglobally(\textit{to}_\pi\leftrightarrow\textit{from}_{\pi'})$.
The head movement according to the transition taken in $s$ is encoded in $\pi$.
The case that the head should not move is easily satisfied by adding $\neg\LTLfinally r_\pi\land\neg\LTLfinally l_\pi\rightarrow\LTLglobally(h_\pi\leftrightarrow h_{\pi'})$ as requirement to satisfy $\textit{validS}$.
The correctness of the other two cases follows analogously with consideration of Observation~\ref{proofs2:exphierarchy:SObservation}.
Similarly, it follows that \textit{validS} can only be satisfied if the timestamp encoded with $t$ on $\pi'$ is the successor of the timestamp encoded on $\pi$.

To decide whether the correct letter $\sigma$ is on the tape cell $c$, we need to show that there is a trace $\tau$ that represents that the newest write to $c$ is $\sigma$.
Let $t'$ be the timestamp encoded on trace $\tau$.
The letter must be written before $s'$ is executed. Therefore, we know by induction hypothesis that such a $\tau$ exists if and only if $\sigma$ was written at $t'$ to $c$.
Lines 7 and 8 in the definition of the formula $\textit{validS}$ are satisfied if and only if $\tau$ has this property.

Additionally, $\tau$ has to represent the newest write to $c$.
For this we require that there exists no trace $\tau'$ which also represents a write to $c$ (line 9 in the definition of $\textit{validS}$) and has a time stamp between $t$ and $t'$ (line 10).

Finally, \textit{validS} may only be satisfied if $X_i$ contains a trace $\tau$ that represents that the letter encoded with \textit{sig'} on $\pi$ was written to the tape cell on the current head position and timestamp (line 11).
This concludes the correctness of $\textit{validS}$.

For the satisfaction of $\varphi_i$ all represented steps need to be connectable to a path.
This is done by making sure that every step has exactly one valid predecessor and exactly one successor.
If a particular step $\pi\in X_i$ is neither the step with the least nor the highest timestamp in its set $X_i$, then this is ensured by requiring $(\exists!\pi'\in X_i.\textit{validS}(\pi,\pi')) \land \exists!\pi'\in X_i.\textit{validS}(\pi',\pi)$ in $\varphi_i^{con}$ (and $\varphi_1^{con}$).

There are exactly two exceptions to this rule: $\pi_1$ and $\pi_2$.
Because the computation branch represented in $X_i$ should continue the computation branch in $X_{i-1}$ we have to ensure that the last trace from $X_{i-1}$ is a valid predecessor to~$\pi_1$.

At the same time this ensures that the computation in $X_i$ passes alternation block $i$.
By induction we know that the computation in $X_{i-1}$ passes alternation block $i-1$ and ends in a state of alternation block $i$.
For $X_1$ this is done by requiring that $\pi_1$ encodes 1 as timestamp and leaves the initial state of $M$.
As all states in $X_i$ must be labeled with~$e$ or all states in $X_i$ must be labeled with~$u$, the computation in $X_i$ can not leave alternation block $i$.
It ends in alternation block $i+1$ by requiring that $\pi_2$ is labeled with $\textit{alt}$ and therefore with the first state of alternation block $i+1$ or the rejecting / accepting state.
\end{proof}

\begin{lemma}\label{proofs:exphierarchy:correct}
$M$ accepts in less than $m$ steps if and only if $K\models \varphi$.
\end{lemma}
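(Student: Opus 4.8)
The plan is to prove both directions at once by exhibiting an exact correspondence between the alternating acceptance game of $M$ and the game-theoretic reading of the quantifier prefix $\mathbb Q_1 X_1.\dots\mathbb Q_{k+1}X_{k+1}$ of $\varphi$. First I would dispose of the auxiliary sets $S$ and $G$. Although they are bound by $\mathbb Q_1$ (and glued with $\oplus_1$) so as not to introduce a spurious alternation, the bodies $\varphi_S$ and $\varphi_G$ characterize membership by a biconditional; hence, by the two lemmas characterizing $S$ and $G$ and by Observation~\ref{proofs2:exphierarchy:SObservation}, there is a \emph{unique} $S$ and a \emph{unique} $G$ satisfying these bodies, namely the increment relation and the strict order on $l$-bit numbers. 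Consequently $\mathbb Q_1 S.\varphi_S\oplus_1\psi$ and $\mathbb Q_1 G.\varphi_G\oplus_1\psi$ collapse to $\psi$ evaluated with $S$ and $G$ fixed to these relations, independently of whether $\mathbb Q_1$ is $\exists$ or $\forall$. From here on I treat $S$ and $G$ as these fixed relations.

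The heart of the argument is an induction on the alternation index $i$ that turns the nested quantification into the acceptance game of $M$. Using \Cref{proofs2:exphierarchy:indbase} and \Cref{proofs2:exphierarchy:indstep}, a set $X_i$ satisfies $\varphi_i$ exactly when it describes one valid computation fragment of $M$ through alternation block $i$ that continues the fragment recorded in $X_{i-1}$; moreover $\varphi^q_i$ forces every such fragment to consist solely of transitions of the matching polarity (all existential, or all universal), so that $X_i$ cannot cross a block boundary except at its final step $\pi_2$, marked \textit{alt}. The choice of $\oplus_i$ then realizes the game: for an existential block $\mathbb Q_i=\exists$ and $\oplus_i=\land$, so $\exists X_i.\,\varphi_i\land(\dots)$ asserts that the existential player can \emph{choose} a valid fragment; for a universal block $\mathbb Q_i=\forall$ and $\oplus_i=\rightarrow$, so $\forall X_i.\,\varphi_i\rightarrow(\dots)$ asserts that \emph{every} valid fragment, i.e.\ every move of the universal player, extends to acceptance. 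I would formalize this by defining, for each partial play reaching the end of block $i$, a correspondence between ``the existential player wins from this configuration within the remaining budget'' and ``the suffix formula $\mathbb Q_{i+1}X_{i+1}.\dots$ holds under the trace-set assignment recording that play'', and prove it by downward induction from $i=k+1$ to $i=1$.

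The innermost formula $\varphi_Y$ supplies the base of this downward induction: it holds iff $X_{k+1}$ contains a step entering the accepting state with a timestamp strictly below $m$, which---using the fixed relation $G$ and the unique \textit{const} trace encoding $m$---is precisely the condition that the computation branch recorded in $X_1,\dots,X_{k+1}$ is accepting within fewer than $m$ steps. Combining this base case with the inductive correspondence above and unrolling the prefix from $i=1$ collapses $K\models\varphi$ to the statement that the existential player has a winning strategy in the $m$-step acceptance game of $M$, which is by definition acceptance of $M$ within fewer than $m$ steps. For the forward direction ($M$ accepts $\Rightarrow K\models\varphi$) I would read off the witnessing sets $X_i$ for existential blocks directly from an accepting run tree of $M$; for the backward direction I would extract a strategy from the witnessing sets, using \Cref{proofs2:exphierarchy:indstep} to guarantee that each $X_i$ is an honest, connected, polarity-consistent fragment, so that the extracted strategy is a legal run of $M$.

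The step I expect to be the main obstacle is the precise bookkeeping of the block boundaries across the induction, i.e.\ verifying that the \emph{unique}-predecessor and \emph{unique}-successor conditions on $\pi_1,\pi_2$ in $\varphi^{\text{con}}_i$ (together with the \textit{alt} markings and the base conditions in $\varphi^{\text{con}}_1$) force each fragment $X_i$ to begin exactly one step after $X_{i-1}$ ended and to terminate exactly upon entering block $i+1$. Making this airtight is what keeps the correspondence between the quantifier alternations of $\varphi$ and the universal/existential branching of $M$ faithful; in particular it must rule out degenerate sets $X_i$ that would let the universal player ``skip'' a move or the existential player stitch together an illegal run.
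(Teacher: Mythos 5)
Your proposal is correct and follows essentially the same route as the paper: both directions hinge on the characterizations of $S$ and $G$ together with \Cref{proofs2:exphierarchy:indbase} and \Cref{proofs2:exphierarchy:indstep} to identify satisfying instantiations of $X_i$ with valid computation fragments through alternation block $i$, and then match the quantifier prefix of $\varphi$ to the universal/existential branching of $M$, with $\varphi_Y$ certifying acceptance within $m$ steps. Your explicit game-theoretic downward induction on the alternation index is a more careful rendering of what the paper does informally by constructing (resp.\ partitioning) the accepting computation tree, but it is the same decomposition relying on the same key lemmas.
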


\begin{proof}
Assume that $K\models\varphi$.
We can construct an accepting computation tree for $M$.
We know that each instantiation represents a valid computation branch through one alternation block.
Additionally, every part of the computation tree corresponding to the pass through alternation block $i$, contains either all continuations of the preceding branch (if alternation block $i$ contains universal states and $X_i$ is universally quantified) or contains one continuation of the preceding branch (if $X_i$ is existentially quantified).
As we show in \Cref{proofs2:exphierarchy:indstep}, we can use the instantiations of $X_i$ to construct the corresponding part of the computation tree.
Each branch of the computation tree ends in the accepting state because the accepting state is a sink state, $\varphi$ is only satisfied if there is a step entering the accepting state in $X_{k+1}$ and each transition to a leaf is present in $X_{k+1}$.
A branch can not be longer than $m$ because this would violate $\varphi_Y$.

Assume that $M$ accepts after $k$ alternations.
We divide the computation at the alternations into $k+1$ partitions.
We use these partitions to instantiate $X_1, \dots, X_{k+1}$.
By \Cref{proofs2:exphierarchy:indstep} this satisfies $\varphi_i$ for each $i$, and by the order of the quantifiers in $\varphi$, all universally quantified sets correspond to universal computation parts, and similarly for existential computation parts.
As $M$ accepts in less than $m$ steps, it takes a step entering the accepting state.
This step must be marked with \textit{acc} and has a timestamp less than $m$.
It will therefore satisfy $\varphi_Y$.
\end{proof}

\paragraph*{\Cref{proofs:exphierarchy:complete} (restated)}
\textsc{MC[$\Sigma_k$-Hyper$^2$LTL, acyclic]}
is $\Sigma^{EXP}_{k+1}$-complete, and 
\textsc{MC[$\Pi_k$-Hyper$^2$LTL, acyclic]}
 is $\Pi^{EXP}_{k+1}$-complete.

\begin{proof}
Containment in the respective classes follows from \Cref{proofs:exphierarchy:lowerbound}.
The hardness in the classes follows from \Cref{proofs:exphierarchy:correct} and the fact that the given reduction is polynomial time computable in the size of the Turing machine.
\end{proof}

\end{document}